\documentclass[12pt]{article}
\usepackage[margin=0.8in]{geometry}
\usepackage{amssymb,amsmath,amsthm}
\usepackage{amsfonts,bbm}
\usepackage{blindtext}
\usepackage{booktabs}
\usepackage{bm}
\usepackage{caption}
\usepackage{comment}
\usepackage[shortlabels]{enumitem}
\usepackage{float}
\usepackage{graphicx}
\usepackage{bbm}
\graphicspath{ {../figures/} }
\usepackage{indentfirst}
\usepackage{mathtools}
\usepackage{mathrsfs}
\usepackage{multirow}
\usepackage{upgreek}
\usepackage[utf8]{inputenc}
\usepackage[backend=biber,style=apa,natbib=true,doi=false,url=true,uniquelist=false,uniquename=false,sorting=nyt]{biblatex}

\renewbibmacro{in:}{}
\DeclareNameAlias{default}{last-first/first-last}

\usepackage{array}
\newcolumntype{L}[1]{>{\raggedright\arraybackslash}m{#1}}
\usepackage{rotating}
\usepackage{ragged2e}
\usepackage{setspace}
\usepackage{subcaption}
\usepackage{tikz}
\usepackage{ctable}
\usepackage[colorlinks,citecolor=blue,urlcolor=blue,linkcolor=blue]{hyperref}
\usepackage{cleveref}
\crefname{enumi}{}{} 
\usepackage{dcolumn}
\usepackage{float}
\usepackage{pdflscape}
\usepackage{longtable}
\usepackage{threeparttable}

\newcommand{\tablenote}[1]{%
\begin{justify}
{\footnotesize \textit{Notes:} #1}
\end{justify}
}
\newcommand{\figurenote}[1]{%
\begin{justify}
{\footnotesize \textit{Notes:} #1}
\end{justify}
}
\makeatletter

\newcommand{\Rmnum}[1]{\expandafter\@slowromancap\romannumeral #1@}
\newcommand{\indicator}[1]{\mathbbm{1}\{#1\}}
\newcommand\independent{\protect\mathpalette{\protect\independenT}{\perp}}
\def\independenT#1#2{\mathrel{\rlap{$#1#2$}\mkern2mu{#1#2}}}

\newcommand{\E}{\mathbb{E}}
\newcommand{\V}{\mathbb{V}}
\newcommand{\Prob}{\mathbb{P}}
\newcommand{\G}{\mathcal{G}}
\newcommand{\Perm}[2]{\mathcal{P}(#1,#2)}
\newcommand{\bigO}{\mathcal O}
\newcommand{\bigOp}{\mathcal O_p}

\makeatother
\newtheorem{theorem}{Theorem}

\newtheorem{example}{Example}[section]
\newtheorem{lemma}{Lemma}[section]

\newtheorem{definition}{Definition}[section]
\newtheorem{assumption}{Assumption}

\newtheorem{remark}{Remark}[section]
\newtheorem{proposition}{Proposition}[section]

\newtheorem{inneruassumption}{Assumption}
\newenvironment{namedassumption}[2][]
  {\renewcommand\theinneruassumption{#2}\begin{inneruassumption}[#1]}
  {\end{inneruassumption}}

\newtheorem{innernamedtheorem}{Theorem}
\newenvironment{namedtheorem}[2][]
  {\renewcommand\theinnernamedtheorem{#2}\begin{innernamedtheorem}[#1]}
  {\end{innernamedtheorem}}

\crefname{figure}{Figure}{Figures}
\crefname{theorem}{Theorem}{Theorems}
\crefname{innernamedtheorem}{Theorem}{Theorems}
\crefname{assumption}{Assumption}{Assumptions}
\crefname{inneruassumption}{Assumption}{Assumptions}
\crefname{proposition}{Proposition}{Propositions}
\crefname{condition}{Condition}{Conditions}
\crefname{lemma}{Lemma}{Lemmata}
\crefname{example}{Example}{Examples}
\crefname{remark}{Remark}{Remarks}
\numberwithin{lemma}{section}
\numberwithin{result}{section}
\numberwithin{definition}{section}
\numberwithin{remark}{section}
\numberwithin{table}{section}
\numberwithin{proposition}{section}
\numberwithin{condition}{section}
\numberwithin{equation}{section}

\newfloat{sfigure}{tbp}{sfig}
\DeclarePairedDelimiter{\ceil}{\lceil}{\rceil}
\DeclarePairedDelimiter{\floor}{\lfloor}{\rfloor}
\allowdisplaybreaks[4]

\begin{document}
\hypersetup{pageanchor=false}
\pagenumbering{gobble}
	\title{Limit Theory for U-Statistics under Clustered and Weakly Dependent Data}
	\date{\today}
	\author{Emmanuel Selorm Tsyawo\footnote{email: estsyawo@gmail.com, Department of Economics, Finance and Legal Studies, Culverhouse College of Business, University of Alabama}}
	\maketitle
	\thispagestyle{empty}

\begin{abstract}
\noindent This paper develops asymptotic theory and feasible inference for unbounded-kernel order-$k$ $U$-statistics under clustered sampling and weakly dependent time-series sampling. The analysis first builds the complete order-$2$ pipeline, moving from clustered data to exact $m$-dependence and then to near-epoch dependence. The same logic is subsequently extended to general order $k\geq2$. Under clustered sampling, the theory allows arbitrary within-cluster dependence and growing, unbalanced cluster sizes. Under weak dependence, an i.i.d.-based approximating sequence carries the exact-$m$ theory to near-epoch-dependent processes. The common combinatorial device partitions the sample into columns, separating sampling-generic tuples, where the first-order Hoeffding projection is analysed, from collision terms and higher-order Hoeffding projection remainders, which are controlled explicitly. Cluster-robust and HAC estimators of the covariance of the first-order projection, needed for feasible inference, are shown to be consistent. Empirical applications and data-calibrated simulations for inequality, L-moment, and rank-dependence statistics illustrate the finite-sample performance of the proposed procedures.

\vspace{.5cm}

		\noindent \textit{Keywords:} $U$-statistics, clustered sampling, weak dependence, near-epoch dependence, central limit theorem, cluster-robust inference, HAC covariance estimation

\vspace{.5cm}

		\noindent \textit{JEL classification: C12, C14, C21, C22} 
		
\end{abstract}

\begin{refsection}

\newpage
\hypersetup{pageanchor=true}
\pagenumbering{arabic}
\section{Introduction}\label{Sect:Introduction}
$U$-statistics arise naturally in economics and finance whenever the object of interest is built from pairwise or higher-order comparisons rather than from one-observation moments. Direct economic examples include inequality, distribution-shape, and rank-dependence measures such as the Gini mean difference, L-moments, Kendall's tau, and Spearman's rho, e.g., \citet{bhattacharya2007inference,darku2020gini,hosking1990lmoments,dehling2017testing}. $U$-statistics also appear more broadly in score and criterion functions associated with pairwise differencing estimators, e.g., \citet{honore1994pairwise,jochmans2013pairwise}, integrated conditional moment estimators and \(\chi^2\) specification tests, e.g., \citet{dominguez2004consistent,escanciano2018simple,jiang-tsyawo-2026consistent}, and gravity models, e.g. \citet{jochmans-2017-two_way_gravity,yang-zhang-2023-three_way_gravity}. Under dependence, these objects are often easy to motivate substantively but harder to analyse, because their summands overlap and the usual independent-tuple geometry breaks down. The difficulty is therefore not only point estimation, but also feasible covariance estimation for the first-order projection, whether the $U$-statistic is itself the target of inference or enters via an estimation or specification testing framework.

The results developed here treat these objects under two dependence structures usually handled separately: clustered sampling and weakly dependent time series. The unifying combinatorial device partitions the sample into columns. In the clustered design, columns are formed so that no column contains two observations from the same cluster; in the fixed-$m$ time-series design, the analogue is the set of residue-class columns modulo $(m+1)$, whose observations are more than $m$ periods apart. The leading term of an order-$k$ $U$-statistic then comes from sampling-generic tuples: cluster-generic tuples in the clustered design and lag-generic tuples in the fixed-$m$ design, where ordinary Hoeffding projection logic applies. The remaining tuples are collision terms, either same-cluster or short-lag. This makes the argument modular; column partitioning standardises the treatment of collision terms and higher-order Hoeffding projection terms, while the remaining first-order projection is handled by the limit theory appropriate to the sampling scheme.

\citet{hansen2019asymptotic} gives the clustered-sum central limit theorem that underlies the linear projection step, allowing arbitrary within-cluster dependence and heterogeneous, unbounded cluster growth. The clustered results below extend that input to $U$-statistics by isolating same-cluster collision terms, extracting the Hoeffding linear term, and controlling the higher-order Hoeffding projection remainders. This complements bounded-cluster two-sample $U$-statistic results such as \citet{Lee-Dehling-2005-generalized}, where the residual terms are lower order under uniformly bounded cluster sizes. It also sits alongside recent clustered rank-statistic work, such as the two-sample Mood statistic of \citet{suzuki2025mood}, which allows cluster-size heterogeneity in a two-sample setting.

For weakly dependent time series, the main nearby results cover related but distinct settings. \citet{sen1963properties} studies $U$-statistics for stationary $m$-dependent processes by separating non-serial tuples from tuples containing short-lag interactions and reducing the limit theory to the first-order projection. \citet{malevich_abdalimov_1982} and \citet{janson2023asymptotic} also formulate their exact-$m$ $U$-statistic limit theory under stationarity. The exact-$m$ theory in this paper allows heterogeneous, non-stationary sequences by keeping the lag-generic first projection indexed and using a blocking construction together with the heterogeneous-cluster CLT of \citet{hansen2019asymptotic}. \citet{fischer2016multivariate} proves general-order $U$-statistic central limit theorems for strongly mixing sequences with bounded kernels satisfying an extended variation condition, and leaves the near-epoch-dependent multivariate extension as a conjectural direction. \citet{fischer2017robust} develops a general NED theory for multivariate kernels and Generalised Linear (GL) statistics, including an unbounded-kernel CLT and a HAC variance estimator in the bounded-kernel case. \citet{dehling2017testing}, working with P-near-epoch dependence on an absolutely regular process, treats the order-2 Kendall setting and uses empirical Hoeffding projections for HAC inference. The weak-dependence results in this paper work within the present $L_2$-NED framework, with expected-H\"older or data-dependent-modulus perturbation conditions used to transfer finite-window approximation to the $U$-statistic. Relative to NED-on-mixing frameworks of, e.g., \citet{dehling2017testing}, the current NED formulation is more specialised: the NED condition is imposed on an i.i.d. innovation base, so finite-window approximants are exactly $m$-dependent. This restriction is used deliberately, because it lets the exact-$m$ order-$k$ decomposition and remainder bounds carry the NED CLT and the empirical-projection HAC consistency argument.

The paper makes three contributions. First, it identifies a common combinatorial structure behind clustered sampling and finite-range temporal dependence, where, after partitioning the sample into columns, sampling-generic tuples are built from mutually independent units across the relevant vertical partitions, while same-cluster and lag collisions are counted and controlled. Second, it uses that structure to separate the common remainder argument from the sampling-specific linear-projection step, yielding order-$k$ $U$-statistic laws of large numbers and central limit theorems under arbitrary within-cluster dependence with growing, heterogeneous cluster sizes, and under heterogeneous, non-stationary exact $m$-dependence, before carrying the weak-dependence theory to near-epoch dependence through an i.i.d.-based approximating sequence. Third, it pairs that limit theory with feasible inference by providing a cluster-robust plug-in estimator in the clustered case and a HAC estimator based on the empirical first-order projection in the time-series case.

The remainder of the paper is organised as follows. \Cref{Sect:Order2Pipeline} develops the order-$2$ results, while \Cref{Sect:OrderKTheory} extends them to general order $k$. \Cref{Sect:FeasibleInference} develops the corresponding covariance estimators. \Cref{Sect:Empirical} presents the empirical applications to which simulations in \Cref{Sect:Simulations} are calibrated. Finally, \Cref{Sect:Conclusion} concludes. The proofs of all theoretical results are collected in the appendix.

\paragraph{Notation:} For \(p\geq1\), \(\|X\|_p:=\{\E[\|X\|^p]\}^{1/p}\), where the inner norm is Euclidean; \([N]:=\{1,\ldots,N\}\); and \(\V[\cdot]\) denotes variance. The set \(\G_g\) collects units in cluster \(g\in[G]\), \(n_g:=\#\G_g\), and \(g(i)\) denotes the cluster of unit \(i\). All random variables are defined on an underlying probability space \((\Omega,\mathcal A,\Prob)\); the state space is standard Borel, kernels are measurable, and conditional expectations are taken as fixed measurable versions. Constants denoted by \(C\) may change across occurrences.

\section{Order-\texorpdfstring{$2$}{2} Theory}\label{Sect:Order2Pipeline}
This section develops the full argument for order-$2$ $U$-statistics. It begins with clustered sampling, where partitioning the sample into columns makes same-cluster collision terms visible, and then gives the exact-$m$ time-series analogue, where residue-class columns play the same role. The near-epoch-dependent extension is added next, by approximating the observed process with exact finite-range versions.

\subsection{Clustered sampling}\label{Sect:Cluster}

For a sampled clustered array \(\{W_i\}_{i\in\G_g}\), \(g\in[G]\), the following sampling assumption is imposed.
\begin{assumption}[Sampling]\label{ass:sampling}
	The arrays \(\{W_i\}_{i\in\G_g}\), \(g\in[G]\), are independently distributed across clusters.
\end{assumption}
\noindent \Cref{ass:sampling} leaves within-cluster dependence unrestricted and imposes independence only across clusters. Cluster sizes may be equal or unequal. The framework therefore covers short balanced panels, unbalanced panels, clustered data, cross-sectional data, and repeated cross-sections with arbitrary within-cluster dependence.

The order-$2$ clustered case is the simplest place to see the full mechanism. For intuition, sort the clusters in decreasing order of size so that the number of columns equals the largest cluster size. Suppose the ordered cluster sizes are $4,3,2,1$, and write $W_{g,a}$ for the $a$-th observation in cluster $g$. In the following display, box styles identify clusters and braces identify the vertical partitions:
\begingroup
\tikzset{
    clusterbox/.style={draw, rounded corners=2pt, minimum width=1.35cm,
        minimum height=0.48cm, inner sep=1pt, align=center},
    clusterone/.style={clusterbox, fill=gray!8, draw=black, text=black, line width=0.45pt},
    clusterthree/.style={clusterbox, fill=gray!22, draw=black, text=blue!45!black, line width=0.45pt},
    clustertwo/.style={clusterbox, fill=white, draw=black, text=red!45!black, dashed, line width=0.55pt},
    clusterfour/.style={clusterbox, fill=gray!35, draw=black, text=green!35!black, dotted, line width=0.65pt}
}
\newcommand{\clusterentry}[2]{\tikz[baseline=(x.base)] \node[#1] (x) {$#2$};}
\[
\begin{array}{cccc}
\overbrace{
\begin{array}{c}
\clusterentry{clusterone}{W_{1,1}}\\[2pt]
\clusterentry{clustertwo}{W_{2,1}}\\[2pt]
\clusterentry{clusterthree}{W_{3,2}}
\end{array}}^{\mathcal M_n(1)}
&
\overbrace{
\begin{array}{c}
\clusterentry{clusterone}{W_{1,2}}\\[2pt]
\clusterentry{clustertwo}{W_{2,2}}\\[2pt]
\clusterentry{clusterfour}{W_{4,1}}
\end{array}}^{\mathcal M_n(2)}
&
\overbrace{
\begin{array}{c}
\clusterentry{clusterone}{W_{1,3}}\\[2pt]
\clusterentry{clustertwo}{W_{2,3}}\\[2pt]
\varnothing
\end{array}}^{\mathcal M_n(3)}
&
\overbrace{
\begin{array}{c}
\clusterentry{clusterone}{W_{1,4}}\\[2pt]
\clusterentry{clusterthree}{W_{3,1}}\\[2pt]
\varnothing
\end{array}}^{\mathcal M_n(4)}
\end{array}.
\]
\endgroup
The useful property is columnwise: each brace covers at most one observation from any cluster, even with unequal cluster sizes. Thus observations read down a column are cross-cluster and independent under \Cref{ass:sampling}. Same-cluster pairs must occur across columns and are counted as same-cluster collision terms. The appendix formalises this construction in the unified linked-tuple notation used for the proofs.

The next assumption, which is \citet[Assumption 2]{hansen2019asymptotic}, regulates the heterogeneity in cluster sizes.
\begin{assumption}[Cluster size heterogeneity]\label{ass:clus_het} For some \(2\leq r<\infty\),
	\[
	\frac{\Big( \sum_{g=1}^G n_g^r \Big)^{2/r}}{n} \leq C < \infty, \quad \text{ and } \quad \max_{g\in [G]}\frac{n_g^2}{n} \rightarrow 0 \quad \text{as} \quad n\rightarrow \infty.
	\]
\end{assumption}
\noindent Let \(\displaystyle m_n:=\max_{g\in[G]}n_g\) denote the maximum cluster size; then \Cref{ass:clus_het} implies \(m_n^2/n\to0\), so \(m_n/\sqrt n\to0\) and, \emph{a fortiori}, \(m_n/n\to0\).

Denote the U-statistic by
\[
    U_{n,2}:= \frac{1}{n(n-1)}\sum_{i=1}^n \sum_{j\neq i}^n \varphi(W_i,W_j),
\]
where, without loss of generality, the scalar kernel $\varphi$ is symmetric and centred on cluster-generic pairs, i.e., \(\E[\varphi(W_i,W_j)]=0\) whenever \(j\notin\G_{g(i)}\). No centring is imposed on within-cluster pairs. Define the first-order projection \(\varphi_j^{(1)}(W_i):=\E[\varphi(W_i,W_j)\mid W_i]\), its cross-cluster average
\( \displaystyle 
    \bar\varphi_n(W_i):= \frac{1}{n-n_{g(i)}} \sum_{j=1}^{n} \indicator{j\not\in \G_{g(i)} } \varphi_j^{(1)}(W_i),
\)
and the scalar variance term
\[
    \sigma_{n,2}^2:= \frac{4}{n} \sum_{g=1}^G \E\Big[ \Big( \sum_{i\in\G_g} \bar\varphi_n(W_i) \Big)^2 \Big].
\]
\noindent The following result provides a limit theory for \(U_{n,2}\) under clustered sampling.

\begin{theorem}\label{thm:CLT_Ustats_Clus}
    Suppose
    \begin{equation}\label{eq:clus_ui_2}
        \lim_{M\rightarrow \infty} \sup_{i,j \leq n } \E\big[ | \varphi(W_i,W_j) |^r \indicator{ | \varphi(W_i,W_j) | > M } \big] = 0
    \end{equation}
    and
    \Cref{ass:sampling,ass:clus_het} hold. Then
    (a) \( \displaystyle U_{n,2} = \frac{2}{n} \sum_{i=1}^n \bar\varphi_n(W_i) + \bigOp\Big(\frac{m_n}{n}\Big) \);
    (b) \( \displaystyle U_{n,2}=\bigOp\Big(\sqrt{\frac{m_n}{n}}\Big)=o_p(1) \).
    If, in addition, \( \sigma_{n,2} \geq \lambda > 0 \), then
    (c) \( \displaystyle \sigma_{n,2}^{-1}\sqrt{n} U_{n,2} \xrightarrow{d} \mathcal{N}(0,1) \).
\end{theorem}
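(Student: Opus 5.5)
The argument rests on the Hoeffding decomposition restricted to cluster-generic pairs, with same-cluster pairs treated as collision terms. Write
\[
n(n-1)U_{n,2}=\sum_{i}\sum_{j\notin\G_{g(i)}}\varphi(W_i,W_j)+\sum_{g}\sum_{i\neq j\in\G_g}\varphi(W_i,W_j)=:S_{\mathrm{gen}}+S_{\mathrm{col}}.
\]
\emph{Collision term.} The number of same-cluster ordered pairs is \(\sum_g n_g(n_g-1)\leq m_n n\). The uniform integrability condition \eqref{eq:clus_ui_2} with \(r\geq2\) gives \(\sup_{i,j}\E|\varphi(W_i,W_j)|\leq C\). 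Hence \(\E|S_{\mathrm{col}}|/(n(n-1))\leq Cm_n/n\), and Markov's inequality gives \(S_{\mathrm{col}}/(n(n-1))=\bigOp(m_n/n)\).

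\emph{Generic term.} For a cross-cluster pair \((i,j)\), \(W_i\) and \(W_j\) are independent by \Cref{ass:sampling}, and \(\varphi\) is centred there. So
\[
\varphi(W_i,W_j)=\varphi^{(1)}_j(W_i)+\varphi^{(1)}_i(W_j)+h_{ij}(W_i,W_j),
\]
where \(h_{ij}\) is degenerate: \(\E[h_{ij}\mid W_i]=\E[h_{ij}\mid W_j]=0\). Summing the linear parts gives \(2\sum_i (n-n_{g(i)})\bar\varphi_n(W_i)\). Since \(n-n_{g(i)}=n(1-O(m_n/n))\), this equals \(\frac{2}{n}\sum_i\bar\varphi_n(W_i)\) up to a factor \(1+O(m_n/n)\) on each summand. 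The discrepancy is bounded in \(L_1\) by \(Cm_n/n\), because \(\E|\bar\varphi_n|\leq C\). This is enough for (a).

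\emph{Degenerate remainder (main step).} Let \(R=\sum_{i}\sum_{j\notin\G_{g(i)}}h_{ij}\). I compute \(\E[R^2]=\sum\E[h_{ij}h_{kl}]\) and show that the expectation vanishes unless each of \(i,j\) shares a cluster with one of \(k,l\). Indeed, if \(i\)'s cluster contains neither \(k\) nor \(l\) (nor \(j\), by genericity), then \(W_i\) is independent of \((W_j,W_k,W_l)\). Conditioning on those three and using \(\E[h_{ij}\mid W_j]=0\) kills the term. The number of surviving quadruples is at most of order \((\sum_g n_g^2)^2\leq (m_n n)^2\). Each term is bounded by \(\|h\|_2^2\leq C\), since \(L_2\) bounds on \(\varphi\) pass to the projections. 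Hence \(\E R^2\leq Cm_n^2n^2\), and so \(R/(n(n-1))=\bigOp(m_n/n)\). This counting is where I expect the care to lie. It is also where the column partition from the display helps: within a column, indices are pairwise cross-cluster, so degenerate sums over pairs of columns are genuinely orthogonal. The step also needs the \(r\)-th moment condition only at \(r=2\).

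\emph{Parts (b) and (c).} For (b), the linear term \(L_n=\frac2n\sum_g\sum_{i\in\G_g}\bar\varphi_n(W_i)\) is a sum of independent mean-zero cluster sums. Its variance is \(\frac4{n^2}\sum_g\E[(\sum_{i\in\G_g}\bar\varphi_n(W_i))^2]\leq\frac{C}{n^2}\sum_g n_g^2\leq Cm_n/n\), using Cauchy--Schwarz within clusters. Thus \(L_n=\bigOp(\sqrt{m_n/n})\), which dominates the \(\bigOp(m_n/n)\) remainder, and this is \(o_p(1)\) by \Cref{ass:clus_het}.

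For (c), multiply (a) by \(\sqrt n/\sigma_{n,2}\). The remainder becomes \(\bigOp(m_n/\sqrt n)=o_p(1)\) by \(\sigma_{n,2}\geq\lambda\) and \(m_n^2/n\to0\). For the linear part \(\sigma_{n,2}^{-1}\frac{2}{\sqrt n}\sum_i\bar\varphi_n(W_i)\), I apply the clustered CLT of \citet[Theorem 2]{hansen2019asymptotic}. Its inputs are independence across clusters (\Cref{ass:sampling}), cluster heterogeneity (\Cref{ass:clus_het}), and the variance lower bound. It also needs uniform integrability of \(|\bar\varphi_n(W_i)|^r\), which follows from \eqref{eq:clus_ui_2} by conditional Jensen, since \(\bar\varphi_n(W_i)\) is an average of conditional expectations of \(\varphi\).
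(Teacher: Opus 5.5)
Your proof is correct, and its skeleton matches the paper's, which obtains this result as the $k=2$ case of \Cref{thm:CLT_Ustats_Clus_k}. The correspondence is as follows:
\begin{itemize}
\item $S_{\mathrm{col}}$ is the paper's linked-collision remainder.
\item Replacing $(n-n_{g(i)})/(n-1)$ by $1$ is its projection-count correction.
\item Parts (b) and (c) are argued exactly as there, via orthogonality of cluster sums and \citet[Theorem 2]{hansen2019asymptotic}.
\end{itemize}

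The real difference is in the degenerate remainder $R$.
\begin{itemize}
\item The paper does not bound $\E[R^2]$ globally. It splits $R$ over the $m_n^2$ column pairs and shows in \Cref{lem:abstract_linked_degenerate_variance}, using the same degeneracy-and-linking argument you use, that each block has second moment $O(s_n^2)$. It then adds the blocks by Minkowski, giving $\|R\|_2=O(m_n^2s_n)=O(m_nn)$.
\item You bypass the columns and count the surviving quadruples directly, those with $\{g(i),g(j)\}=\{g(k),g(l)\}$. This gives $\E[R^2]\leq 2C\sum_{g\neq h}n_g^2n_h^2$.
\end{itemize}

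Your route is more elementary, and it is also sharper. The collision and correction terms are likewise $O(\sum_g n_g^2)$ in $L_1$. So your argument actually delivers a remainder of order $n^{-2}\sum_g n_g^2$ and $U_{n,2}=\bigOp\big(n^{-1}(\sum_g n_g^2)^{1/2}\big)$. These are never worse than the stated $m_n/n$ and $\sqrt{m_n/n}$, and much better when a few large clusters coexist with many small ones. What the column device buys the paper is a scheme-neutral lemma, reused verbatim for exact $m$-dependence, for every Hoeffding order $c\leq k$, and in the anchored HAC remainder bounds.

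Two small corrections.
\begin{itemize}
\item Your aside that the columns make the degenerate sums ``genuinely orthogonal'' is inaccurate. Same-cluster counterparts in other columns still covary; for example, within a column pair, $(i',j')$ with $g(i')=g(j)$ and $g(j')=g(i)$. This is why the paper counts linked configurations and then uses Minkowski across column pairs. Nothing in your argument relies on the aside, though.
\item The uniform-integrability transfer in (c) needs more than conditional Jensen. You also need that conditional expectations and averages of a uniformly integrable family remain uniformly integrable, which is what \Cref{lem:UI_transfer_projection} supplies.
\end{itemize}
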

\noindent The theorem makes the order-$2$ mechanism explicit. Part (a) gives the asymptotically linear, Hájek-type representation, with the non-linear remainder reduced to order $m_n/n$ by counting same-cluster collisions and the degenerate term. Part (b) is then the corresponding weak law, while part (c) applies the clustered central limit theorem to the leading projection under a non-degeneracy condition. When $m_n=1$, i.e., when $n_g=1$ for every cluster, $\G_{g(i)}=\{i\}$ for every $i$, $\bar\varphi_n(W_i)$ reduces to the usual Hájek projection $\frac{1}{n-1}\sum_{j\neq i}\varphi_j^{(1)}(W_i)$ of an iid $U$-statistic, and all three parts recover the classical iid $U$-statistic asymptotics as a special case, e.g., \citet[Theorem 1, Sect. 3.2.1]{lee1990ustatistics}.

\subsection[Exact m-dependence]{Exact $m$-dependence}\label{Sect:FixedM}
The clustered order-$2$ argument has a direct temporal analogue. Same-cluster collision terms are replaced by lag-collision terms, and the clustered columns are replaced by residue-class columns modulo $m+1$. Exact finite-range dependence is a clean setting for seeing this analogy because those columns comprise independent samples. This lag-generic/lag-collision distinction is close in spirit to the distinction made by \citet{sen1963properties} between non-serial tuples and tuples containing serial interactions.

The next definition formalises $m$-dependence, following \citet[Definition 4.1]{henze2024asymptotic}.

\begin{definition}[$m$-Dependence]\label{def:m_dep}
    Let $m\in\mathbb{N}_0:=\{0,1,2,\ldots\}$. A sequence $\{W_i\}_{i\geq1}$ is called \emph{$m$-dependent} if the $\sigma$-fields $\sigma(W_1,\ldots,W_s)$ and $\sigma(W_{s+m+j}:j\geq1)$ are independent for every $s\geq1$.
\end{definition}
\noindent Equivalently, any two blocks of the sequence separated by more than $m$ time periods are independent, while dependence within an $m$-neighbourhood is unrestricted. The case $m=0$ recovers an independent sequence $\{W_i\}$. Index sets that are pairwise separated by more than $m$ are mutually independent, e.g. \citet[Sect.~2.2]{janson2023asymptotic}.

Let $W_i$, $i\in[n]$, generically denote data satisfying \Cref{def:m_dep}. Denote the U-statistic by
\[
    U_{n,2}:= \frac{1}{n(n-1)}\sum_{i=1}^n \sum_{j\neq i} \varphi(W_i,W_j),
\]
where, without loss of generality, the scalar kernel $\varphi$ is symmetric and centred on lag-generic pairs, i.e., $\E[\varphi(W_i,W_j)]=0$ whenever $|i-j|>m$. No centring is imposed on lag-collision pairs. Write \(\varphi_j^{(1)}(W_i):=\E[\varphi(W_i,W_j)\mid W_i]\), let \(Q_i:=\#\{j\in[n]\setminus\{i\}:|i-j|>m\}\), and define the lag-generic average \( \displaystyle \bar\varphi_{n,m,2,i}^{(1)}(W_i):=Q_i^{-1}\sum_{j\neq i,\,|i-j|>m}\varphi_j^{(1)}(W_i)\).

Partition $[n]$ into \(\mathcal M_{n,m}(\ell):=\{i\in[n]:i\bmod(m+1)=\ell\}\), \(\ell=0,\ldots,m\). Consecutive elements of each partition are spaced $m+1$ apart and are therefore mutually independent. Call a pair $(i,j)$ a \emph{lag-collision} if \(|i-j|\leq m\) and \emph{lag-generic} otherwise; no lag-collision pair lies within a single partition. Let \(\sigma_{n,m,2}^2:=4n^{-1}\V[\sum_{i=1}^n\bar\varphi_{n,m,2,i}^{(1)}(W_i)]\). The following extends \Cref{thm:CLT_Ustats_Clus} to the \(m\)-dependence setting.

\begin{theorem}[Fixed-\texorpdfstring{$m$}{m} CLT]\label{thm:CLT_Ustats_mdep}
    Suppose
    \begin{equation}\label{eq:kernel_ui_2}
        \lim_{M\rightarrow\infty}\sup_{i,j\leq n}\E\big[|\varphi(W_i,W_j)|^r\indicator{|\varphi(W_i,W_j)|>M}\big]=0
    \end{equation}
    for some $r\geq2$, and $\{W_i\}_{i\geq1}$ satisfies \Cref{def:m_dep} for some fixed $m\in\mathbb N_0$. Then
    (a) \( \displaystyle U_{n,2}=\frac{2}{n}\sum_{i=1}^n \bar\varphi_{n,m,2,i}^{(1)}(W_i)+\bigOp\Big(\frac{m+1}{n}\Big) \);
    (b) \( \displaystyle U_{n,2}=\bigOp\Big(\sqrt{\frac{m+1}{n}}\Big)=o_p(1) \).
    If, in addition, \eqref{eq:kernel_ui_2} holds for some \(r>2\) and \(\sigma_{n,m,2}^2\geq\lambda>0\), 
    then (c) \( \displaystyle \sigma_{n,m,2}^{-1}\sqrt{n}\,U_{n,2}\xrightarrow{d}\mathcal N(0,1) \).
\end{theorem}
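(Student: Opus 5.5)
We would mirror the proof of \Cref{thm:CLT_Ustats_Clus}, treating $m$-dependence as a clustered design in which the residue-class columns $\mathcal M_{n,m}(\ell)$ play the role of the cluster-free columns. The first step splits the double sum defining $U_{n,2}$ into lag-collision pairs ($0<|i-j|\le m$) and lag-generic pairs ($|i-j|>m$). There are at most $2mn$ lag-collision pairs. By \eqref{eq:kernel_ui_2} with $r\ge2$, the kernel has uniformly bounded second moments. Minkowski's inequality therefore bounds the $L_1$ (indeed $L_2$) norm of their contribution by $C\,mn/\{n(n-1)\}=\bigO((m+1)/n)$, with no centring needed.

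On lag-generic pairs we perform the Hoeffding decomposition. Since $W_i\independent W_j$ when $|i-j|>m$ and the kernel is centred there, we write
\[
\varphi(W_i,W_j)=\varphi_j^{(1)}(W_i)+\varphi_i^{(1)}(W_j)+h_{ij}(W_i,W_j),
\]
where $h_{ij}$ is degenerate: $\E[h_{ij}\mid W_i]=\E[h_{ij}\mid W_j]=0$. Summing the linear parts and using symmetry gives $\frac{1}{n(n-1)}\sum_i 2Q_i\bar\varphi^{(1)}_{n,m,2,i}(W_i)$. Since $Q_i=n-1-\bigO(m)$, replacing $Q_i/(n-1)$ by $1$ costs $\bigO(m/n)$ times $n^{-1}\sum_i|\bar\varphi_{n,m,2,i}^{(1)}|=\bigO_p(1)$.

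The main obstacle is the degenerate part $D:=\{n(n-1)\}^{-1}\sum_{|i-j|>m}h_{ij}$. We will show $\E[D^2]=\bigO((m+1)/n^2)$ by examining $\E[h_{ij}h_{kl}]$. If one index, say $i$, is separated by more than $m$ from all of $j,k,l$, then $W_i$ is independent of $(W_j,W_k,W_l)$, and conditioning on $W_j,W_k,W_l$ with degeneracy kills the term. The difficulty is that $(W_k,W_l)$ need not be independent of $W_i$ even when $W_i$ is independent of $W_j$. So we need a careful version: we use that $\sigma(W_i)$ is independent of the joint $\sigma(W_j,W_k,W_l)$ whenever $i$ is more than $m$ from each of them. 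Concretely, $m$-dependence applied to index sets separated by more than $m$ lets us split $\{i\}$ from the rest, possibly after reordering into blocks. The reordering requires checking that the union of the remaining indices forms blocks separated by more than $m$ from $i$ on each side, which holds by the definition applied on each side.

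Hence nonzero terms require each of $i,j$ to lie within distance $m$ of some element of $\{k,l\}$, and vice versa. The number of such quadruples is $\bigO(n^2(m+1)^2)$. Cauchy--Schwarz with uniformly bounded second moments of $h_{ij}$ gives $\E[D^2]\le C(m+1)^2/n^2$, so $D=\bigO_p((m+1)/n)$. This proves (a).

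For (b), it suffices that $\V[n^{-1}\sum_i\bar\varphi^{(1)}_{n,m,2,i}]\le C(m+1)/n$. The summands are $m$-dependent with bounded second moments, so the covariances vanish beyond lag $m$, and (b) follows by Chebyshev's inequality together with (a).

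For (c), the remainder in (a) is $o_p(n^{-1/2})$ because $m$ is fixed. Since $\sigma_{n,m,2}\ge\lambda$, it suffices to prove a CLT for $S_n:=\sum_i\bar\varphi^{(1)}_{n,m,2,i}(W_i)$. We use Bernstein blocking: big blocks of length $b_n\to\infty$ with $b_n/n\to0$, alternating with small blocks of length $m$. The big-block sums are then independent, and the small blocks contribute variance $\bigO(nm/b_n)=o(n)$. The big blocks behave as independent clusters of size $b_n$. The required uniform integrability of $|\bar\varphi^{(1)}|^r$ for $r>2$ is inherited from \eqref{eq:kernel_ui_2} through conditional Jensen. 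With $b_n^2/n\to0$, \Cref{ass:clus_het} holds with cluster sizes $b_n$ (and $r>2$), so Hansen and Lee's clustered CLT applies to the big-block sum after we verify that its variance is asymptotically $n\sigma_{n,m,2}^2/4$. Slutsky's lemma then completes (c).
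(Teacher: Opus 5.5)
Your route is essentially the paper's: the lag-collision/lag-generic split, the Hoeffding decomposition on lag-generic pairs with the $Q_i$-count correction, the degeneracy-and-linking count for the degenerate part, the $m$-dependent covariance count for (b), and retained-block sums fed into Hansen and Lee's clustered CLT for (c). For the degenerate part you count linked quadruples directly rather than column tuple by column tuple over the residue classes. For fixed $m$ and $k=2$ this gives the same $\bigO((m+1)^2/n^2)$ second moment. Your justification that $W_i$ is independent of $(W_j,W_k,W_l)$ when $i$ is more than $m$ from each of them is also sound. Parts (a) and (b) go through as you describe.

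The one step that fails as written is the verification of \Cref{ass:clus_het} for the big blocks in (c). The condition $b_n^2/n\to0$ delivers only the second half of that assumption. The first half requires $(\sum_{j\le J_n} b_n^r)^{2/r}/n\le C$ with $J_n\asymp n/b_n$ blocks, i.e.\ $n^{(2-r)/r}\,b_n^{2(r-1)/r}=\bigO(1)$. This forces $b_n=\bigO\big(n^{(r-2)/\{2(r-1)\}}\big)$, which is much slower than $o(\sqrt n)$ when $r$ is close to $2$. For example, with $r=3$ and $b_n=n^{0.4}$ one has $b_n^2/n\to0$, yet the quantity above grows like $n^{0.2}$, so Hansen and Lee's theorem cannot be invoked. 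This is also the place where the extra moment $r>2$ genuinely enters.

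Because $m$ is fixed, the gap-negligibility step only needs $b_n\to\infty$, so the repair costs nothing. Take $b_n=\lfloor n^\beta\rfloor$ with $0<\beta<(r-2)/\{2(r-1)\}$, which is exactly the paper's choice. You should also account for the terminal leftover segment of length less than $b_n+m$. By the same covariance count, its contribution to the variance is $\bigO((m+1)(b_n+m))=o(n)$, so it is negligible alongside the length-$m$ gaps when you show that the retained-block variance is asymptotically $n\sigma_{n,m,2}^2/4$.
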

\subsection{Near-epoch dependence}\label{Sect:NED}
The exact-$m$ result completes the order-$2$ argument when dependence has a fixed range. Near-epoch dependence relaxes that restriction by replacing the observed process with finite-range approximations. In the clustered design, the column partition is data-driven through the largest cluster size \(m_n\); under exact \(m\)-dependence, the corresponding partition width is structural and equals \(m+1\); under NED, the sequence \(m_n\) is instead an \emph{approximating-window device} used only to transfer the exact-\(m\) argument. Under exact \(m\)-dependence, the span \(m\) marks a region of unrestricted local dependence and exact independence beyond it. Under NED, the same span appears only in the approximating proxy \(W_i^{(m)}\), which depends on a block of \(m+1\) nearby i.i.d. innovations. Thus the local dependence inside the proxy is structured rather than arbitrary, and the original process is recovered by letting the approximation window widen. For $W_i$ generated by an i.i.d.-driven process, write
\[
    \mathcal F_a^b:=\sigma\big(\mathcal E_t:a\leq t\leq b \big),\qquad a\leq b\in\mathbb Z\cup\{\pm\infty\},
\]
where $\{\mathcal E_t\}_{t\in\mathbb Z}$ is an i.i.d.\ base sequence. The construction below allows measurable maps \(\Psi_i\) such that \(W_i=\Psi_i((\mathcal E_{i+t})_{t\in\mathbb Z})\). For $m\in\mathbb N_0$, define
\(
    W_i^{(m)}:=\E\big[W_i\mid\mathcal F_{i-\floor{m/2}}^{i+\ceil{m/2}}\big],
\)
where $\mathcal F_{i-\floor{m/2}}^{i+\ceil{m/2}}$ is the sigma-field generated by a retained innovation block of width $m+1$ around date $i$. This convention indexes the approximation by its dependence span, thereby rendering the approximating sequence $\{W_i^{(m)}\}$ exactly $m$-dependent.

The exact-\(m\) theorem can then be used for the approximating statistic:
\[
    U_{n,2}^{[m]}:= \frac{1}{n(n-1)}\sum_{i=1}^n \sum_{j\neq i} \varphi(W_i^{(m)},W_j^{(m)}).
\]
The main remaining step is to make the difference between the original and approximating $U$-statistics negligible. Process approximation alone is not enough for this: one also needs a kernel-regularity bridge that transfers the input approximation rate \(\nu_m=\|W_i-W_i^{(m)}\|_2\) into a kernel approximation rate.

This finite-window approximation motivates the following $L_2$-NED condition, adapted to the sequence formulation in \citet[Def.~17.1]{davidson1994stochastic}.
\begin{assumption}[Near-epoch dependence of $\{W_i\}$]\label{ass:ned}
    There is an i.i.d.\ base sequence $\{\mathcal E_t\}_{t\in\mathbb Z}$ and measurable maps \(\Psi_i\) such that \(W_i=\Psi_i((\mathcal E_{i+t})_{t\in\mathbb Z})\), where \(\mathcal F_a^b:=\sigma(\mathcal E_t:a\leq t\leq b)\). Moreover, $\{W_i\}$ is uniformly $L_2$-near-epoch dependent on this base sequence, i.e., there is a sequence of constants $\nu_m\to0$ such that
    \[
        \|W_i-W_i^{(m)}\|_2=\big\|W_i-\E[W_i\mid\mathcal F_{i-\floor{m/2}}^{i+\ceil{m/2}}]\big\|_2\leq\nu_m\quad\text{for every }i\in\mathbb N\text{ and }m\in\mathbb N_0.
    \]
\end{assumption}
\noindent Davidson's scaling constants $d_i$ are absorbed into the uniform sequence \(\nu_m\). This is only a normalisation of the NED bound; it does not impose stationarity on \(\{W_i\}\).

Following \citet[p.~262]{davidson1994stochastic}, $\{W_i\}$ is $L_2$-NED \emph{of size $-\zeta_0$} if $\nu_m=\bigO(m^{-\zeta})$ for some $\zeta>\zeta_0$. \Cref{ass:ned} controls the input approximation error \(W_i-W_i^{(m)}\), but the $U$-statistic also requires a kernel-level bridge from input approximation to kernel approximation. Standard kernels exhibit two useful perturbation behaviours. The Gini mean difference kernel \(\varphi(x,y)=|x-y|\) is globally Lipschitz, while rank kernels such as Kendall's tau and Spearman-type sign-product kernels are not pointwise Lipschitz but can be controlled after taking expectations. These kernels are naturally handled by an expected-H\"older condition. By contrast, the squared-difference kernel satisfies \( |(w-y)^2-(\widetilde w-y)^2|=|w-\widetilde w|\,|w+\widetilde w-2y| \), so its local slope is available but random and unbounded; this is better captured by a data-dependent modulus with tail and moment control. Because both behaviours arise for common $U$-statistic kernels, the next assumption supplies either of two non-nested one-coordinate perturbation bridges. In both branches, the held-fixed argument is understood to range over the relevant original, truncated, mixed, and generic independent-copy choices.

\begin{assumption}[Kernel perturbation for the order-$2$ kernel]\label{ass:kernel_perturb_2}
    One of the following two branches holds.
	    \begin{enumerate}[label=(\alph*)]
	        \item \emph{Expected-H\"older branch.} There exist $C<\infty$ and \(\alpha\in(0,1]\) such that, uniformly over \(i\neq j\), \(m\in\mathbb N_0\), and every relevant held-fixed argument \(Z\),
	        \[
	            \|\varphi(W_i,Z)-\varphi(W_i^{(m)},Z)\|_2
	            \leq
	            C\|W_i-W_i^{(m)}\|_2^\alpha
		        \]

        \item \emph{Data-dependent modulus branch.} There is a measurable function \(L(w,\widetilde w,z)\), taking values in \([0,\infty)\), such that, uniformly over \(i\neq j\), \(m\in\mathbb N_0\), and every relevant held-fixed argument \(Z\),
        \[
            |\varphi(W_i,Z)-\varphi(W_i^{(m)},Z)|
            \leq
            L(W_i,W_i^{(m)},Z)\|W_i-W_i^{(m)}\|
        \]
        There exists a \(\delta>0\) such that, as \(M\to\infty\),
        \[
            \E[L(W_i,W_i^{(m)},Z)\indicator{L(W_i,W_i^{(m)},Z)>M}]
            =\bigO(M^{-\delta}),
        \]
        uniformly over the same indices and held-fixed argument choices. For some \(r>2\),
        \[
        \sup_{i\neq j,\,m\in\mathbb N_0}
        \big\|\max\{|\varphi(W_i,W_j)|,|\varphi(W_i^{(m)},W_j)|,
        |\varphi(W_i^{(m)},W_j^{(m)})|\}\big\|_r<\infty.
        \]
        The same tail and moment bounds are understood for the corresponding generic independent copies used in projection arguments.
    \end{enumerate}
\end{assumption}

In either branch, \(\rho_m\) is the effective kernel-approximation rate induced by the NED approximation. It is read branchwise: in the expected-H\"older branch, \(\rho_m:=\nu_m^\alpha\); in the data-dependent modulus branch, with \(\delta\) and \(r\) as in the assumption, let \(\tau:=(1+\delta)(r-2)/(2r)\), \(\beta:=\tau/(1+\tau)\), and \(\rho_m:=\nu_m^\beta\). Both branches keep the same one-coordinate perturbation format: only one argument is replaced, while the co-index is read with the relevant independent or mixed original/truncated law. The case \(\alpha=1\) in the expected-H\"older branch is the expected Lipschitz condition along the NED couplings, and a deterministic Lipschitz bound verifies it immediately. Some kernels with unbounded local moduli may still fall into this branch when the relevant NED couplings have enough integrability to absorb the local slope into the finite constant \(C\).\footnote{For example, under common-mean, common-scale Gaussian NED couplings, the non-globally-Lipschitz kernel \(\varphi(x,y)=(x-y)^2\) can satisfy the expected branch along the relevant couplings: with \(D:=W-\widetilde W\), \(A:=W+\widetilde W-2W^\star\), and \(W^\star\independent(W,\widetilde W)\), joint normality gives \(D\independent A\), so \(\|\varphi(W,W^\star)-\varphi(\widetilde W,W^\star)\|_2=\|D\|_2\|A\|_2\), while \(\|A\|_2\leq2\sqrt2\,\sigma\). Thus \(C=2\sqrt2\,\sigma\) is admissible for those couplings.}

The two branches are complementary. The Gini mean difference sits in the overlap: it satisfies the expected-H\"older branch with \(\alpha=1\), and it also satisfies the data-dependent modulus branch with a constant modulus. For globally Lipschitz kernels, the expected-H\"older route is simpler and sharper. For kernels with random unbounded slopes, including sample variances, squared deviations, and quadratic loss functions, the data-dependent modulus branch can be the more natural route. For sign kernels, the expected branch may be verified by a conditional crossing-probability argument rather than by pointwise continuity; \Cref{lem:sign_crossing_bound} gives a compact sufficient condition yielding \(\alpha=1/2\).

The following provides a polynomial rate illustration.
\begin{example}[Kernel--dependence rate balance]\label{ex:rate_balance}
    If \(\rho_m=\bigO(m^{-\eta})\) and \(m_n\sim n^\gamma\), then \(m_n=o(\sqrt n)\) requires \(\gamma<1/2\), while \(\sqrt n\,\rho_{m_n}=o(1)\) requires \(\gamma\eta>1/2\). Hence the admissible region is \(\gamma\in(1/(2\eta),1/2)\), which is non-empty exactly when \(\eta>1\). In the expected-H\"older branch, \(\eta=\alpha\zeta\) when \(\nu_m=\bigO(m^{-\zeta})\); in the data-dependent modulus branch, \(\eta=\beta\zeta\). Thus feasibility is governed jointly by kernel regularity and the finite-window approximation rate of the data. Smoother kernels permit slower NED approximation, whereas rougher kernels require faster NED approximation to compensate. For instance, in the expected-H\"older branch with \(\alpha=1/2\) and \(\zeta=3\), any \(\gamma\in(1/3,1/2)\) is admissible.
\end{example}

All ingredients are now in place for the order-$2$ limiting result under NED data.
In the NED subsections, centring is understood with respect to the date-specific independent-copy projection used in the displayed linear term; cf. \citet{fischer2016multivariate}. 

\begin{theorem}[Order-\texorpdfstring{$2$}{2} NED CLT]\label{thm:CLT_NED_2}
    Let $\varphi(w_1,w_2)$ be symmetric and centred in the independent-copy sense, let \(\varphi_i^{(1)}(W_i)\) denote the corresponding centred first-order projection, and set \(\sigma_n^2:=\V[n^{-1/2}\sum_{i=1}^n\varphi_i^{(1)}(W_i)]\). Suppose \Cref{ass:ned,ass:kernel_perturb_2} hold, the uniform integrability condition \eqref{eq:kernel_ui_2} holds for $\{\varphi(W_i,W_j)\}$ and uniformly for $\{\varphi(W_i^{(m)},W_j^{(m)})\}$, \(\rho_m=\bigO(m^{-\eta})\) for some \(\eta>1\).
    If $m_n\to\infty$, $m_n=o(\sqrt n)$, and $\sqrt n\,\rho_{m_n}=o(1)$, then
    (a) $\displaystyle U_{n,2}=\frac2n\sum_{i=1}^n\varphi_i^{(1)}(W_i)+o_p(n^{-1/2})$;
    (b) $\displaystyle U_{n,2}=\bigOp(n^{-1/2})=o_p(1)$.
    If, in addition, the uniform integrability condition \eqref{eq:kernel_ui_2} holds for some \(r>2\) and \(\sigma_n\geq\lambda>0\), then
    (c) $\displaystyle (2\sigma_n)^{-1}\sqrt n\,U_{n,2}\xrightarrow{d}\mathcal N(0,1)$.
\end{theorem}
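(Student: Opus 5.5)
The plan is to approximate the original statistic by the exact-window statistic, apply \Cref{thm:CLT_Ustats_mdep} to the approximation, and move back. With the window \(m=m_n\), decompose
\[
U_{n,2}=U_{n,2}^{[m_n]}+\big(U_{n,2}-U_{n,2}^{[m_n]}\big).
\]
The approximating sequence \(\{W_i^{(m_n)}\}\) is exactly \(m_n\)-dependent, so \Cref{thm:CLT_Ustats_mdep} applies, but with a growing span. Its proof counts lag-collision pairs (\(\bigO(nm)\) of them) and bounds degenerate second-order terms columnwise. I would first check that those bounds are uniform in \(m\), using the assumed uniform-in-\(m\) integrability of \(\varphi(W_i^{(m)},W_j^{(m)})\). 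This gives
\[
U_{n,2}^{[m_n]}=\frac2n\sum_i\bar\varphi^{(1),[m_n]}_{n,i}(W_i^{(m_n)})+\bigOp\big((m_n+1)/n\big),
\]
and the remainder is \(o_p(n^{-1/2})\) because \(m_n=o(\sqrt n)\).

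Three differences then have to be shown to be \(o_p(n^{-1/2})\).
\begin{enumerate}[(i)]
\item \(U_{n,2}-U_{n,2}^{[m_n]}\). Replace one coordinate at a time:
\[
\varphi(W_i,W_j)-\varphi(W_i^{(m)},W_j^{(m)})=\big[\varphi(W_i,W_j)-\varphi(W_i^{(m)},W_j)\big]+\big[\varphi(W_i^{(m)},W_j)-\varphi(W_i^{(m)},W_j^{(m)})\big].
\]
Apply \Cref{ass:kernel_perturb_2} to each bracket. In branch (a) each bracket is bounded in \(L_2\) by \(C\nu_m^\alpha\). In branch (b), truncate \(L\) at level \(M\). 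This gives \(M\nu_m\) plus a tail term, which Hölder with the \(L_r\) kernel bound controls by \(M^{-\delta\cdot(\text{power})}\). Optimising \(M\) should reproduce \(\rho_m=\nu_m^\beta\).

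A crude triangle-inequality bound of \(\bigO(\rho_{m_n})\) on the averaged difference suffices, since \(\sqrt n\rho_{m_n}=o(1)\). So no variance computation is needed for this term.

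\item The projections. Replace \(\bar\varphi^{(1),[m_n]}_{n,i}(W_i^{(m_n)})\) by \(\varphi_i^{(1)}(W_i)\). The differences have two sources. Truncating the argument contributes \(\bigO(\rho_{m_n})\) by conditional Jensen and the same perturbation bound. The independent-copy law versus the truncated co-index law contributes another \(\rho_{m_n}\) term. The lag-collision exclusion in the average costs \(\bigO(m_n/n)\).

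Again the triangle inequality in \(L_1\), times \(\sqrt n\), vanishes. This yields (a). Part (b) then follows from Chebyshev once \(\sigma_n^2\) is shown bounded. That boundedness comes from NED covariance inequalities: \(|\mathrm{Cov}(\varphi_i^{(1)},\varphi_j^{(1)})|\lesssim\rho_{|i-j|}\), which is summable because \(\eta>1\).

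\item Part (c). Apply a CLT to \(n^{-1/2}\sum_i\varphi_i^{(1)}(W_i)\). I would do this by coupling to the \(m_n\)-dependent version \(\varphi_i^{(1)}(W_i^{(m_n)})\), whose normalised sum differs by \(\bigO(\sqrt n\rho_{m_n})=o(1)\) in \(L_1\). For the \(m_n\)-dependent version, use Bernstein big/small blocks, treating big blocks as clusters of size \(\asymp m_n\). Then invoke the heterogeneous clustered CLT of \citet{hansen2019asymptotic} exactly as in \Cref{thm:CLT_Ustats_mdep}(c). The \(r>2\) moments give Lyapunov, and \(\sigma_n\ge\lambda\) gives non-degeneracy once the variances are matched.
\end{enumerate}

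The main obstacle is uniformity in \(m\) of the exact-\(m\) remainder when \(m=m_n\to\infty\). The \(\bigOp((m+1)/n)\) rate in \Cref{thm:CLT_Ustats_mdep} must have a constant independent of \(m\). This in turn requires that the degenerate second-order Hoeffding term summed over lag-generic pairs has variance \(\bigO(n^2 m)\) rather than \(\bigO(n^2m^2)\). I would first try to re-run the column argument: pairs \((i,j),(i',j')\) contribute to the covariance only if some indices are within \(m\), and there are \(\bigO(n^2m^2)\) such quadruples. That gives variance \(\bigO(m^2/n^2)\) for the \(n^{-2}\)-normalised sum, so the remainder is \(\bigO_p(m_n/n)=o_p(n^{-1/2})\), which still suffices.
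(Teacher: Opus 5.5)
\textbf{Parts (a) and (b).} These follow the paper's route. You truncate at window $m_n$ and bound the exact-$m$ remainder uniformly in $m$ at rate $\bigOp(m_n/n)$; your $\bigO(n^2m^2)$ quadruple count is what the paper's column lemma gives. The projection transfer costs $\bigO(\rho_{m_n})$, and NED covariance summability bounds $\sigma_n^2$. One small slip: the truncated kernel is not centred on lag-generic pairs. You must subtract $\mu_{ij,m}=\E[\varphi(W_i^{(m)},W_j^{(m)})]=\bigO(\rho_m)$ before invoking \Cref{thm:CLT_Ustats_mdep}, which is the paper's term $\mathcal H_{n,k}^{[m]}$.

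\textbf{The gap in (c).} You couple to the same window $m_n$ used in (a), so your Bernstein blocks need $b_n/m_n\to\infty$. With big blocks of size $\asymp m_n$ separated by the length-$m_n$ gaps needed for independence, a non-vanishing fraction of the sum lies in the gaps. However, the Hansen--Lee condition $(\sum_g n_g^r)^{2/r}/n\le C$, applied to $J_n\asymp n/b_n$ blocks of size $b_n$, forces $b_n=\bigO(n^{(r-2)/\{2(r-1)\}})$. This is why the exact-$m$ proof takes $b_n=\floor{n^\beta}$ with $\beta<(r-2)/\{2(r-1)\}$, and Lyapunov with Minkowski bounds on block sums gives the same restriction. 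The hypotheses only give $m_n=o(\sqrt n)$, while $\sqrt n\rho_{m_n}\to0$ forces $m_n\gg n^{1/(2\eta)}$ when $\rho_m\asymp m^{-\eta}$. For $r=5/2$ and $\eta=2$ you need $m_n\gg n^{1/4}$ but $b_n\lesssim n^{1/6}$, so the blocking is infeasible. Moreover, the gap bound $\|\sum_{i\in A}X_{n,i}\|_2^2\le C|A|(m+1)$ inherited ``exactly as in'' \Cref{thm:CLT_Ustats_mdep}(c) would additionally require $m_n^3=o(n)$.

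\textbf{How the paper avoids it, and how to repair your route.} The paper does not block here. It applies Davidson's CLT for NED arrays (Davidson, 1994, Cor.~24.7) directly to $X_{nt}=\varphi_t^{(1)}(W_t)/(\sigma_n\sqrt n)$. It uses \Cref{lem:effective_perturbation}(b) for $L_2$-NED of size $-1$ on the i.i.d.\ base, the $L_r$ bound for domination, and $\sigma_n\ge\lambda$.

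To keep your coupling-plus-Hansen--Lee route, decouple the CLT window from $m_n$:
\begin{enumerate}[(i)]
\item Take $q_n\to\infty$ slowly. Couple to the exactly centred, $q_n$-dependent array $A_i:=\E[\varphi_i^{(1)}(W_i)\mid\mathcal F_{i-\floor{q_n/2}}^{i+\ceil{q_n/2}}]$, not to $\varphi_i^{(1)}(W_i^{(q_n)})$, whose mean would contribute $\sqrt n\rho_{q_n}$.
\item Bound the coupling error by a variance, not the triangle inequality, which would again need $\sqrt n\rho_{q_n}\to0$. Since $D_i:=\varphi_i^{(1)}(W_i)-A_i$ satisfies $|\mathrm{Cov}(D_i,D_{i+h})|\le C\rho_{q_n}\rho_{\floor{(h-1)/2}}$, one gets $\V(n^{-1/2}\sum_iD_i)\le C\rho_{q_n}\to0$.
\item Block with $q_n\ll b_n\ll n^{(r-2)/\{2(r-1)\}}$. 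Control the discarded gaps through the uniform NED covariance summability of $\{A_i\}$, not through the $m$-dependence count.
\end{enumerate}
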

\noindent The same projection logic is still at work. The statistic is reduced to the linear term $2n^{-1}\sum_i\varphi_i^{(1)}(W_i)$, but now near-epoch dependence supplies the approximation step that replaces the fixed-$m$ argument. The CLT is normalised by the finite-sample variance of the linear term.

\begin{remark}
	The i.i.d.\ base formulation is used to make the truncation step exact. Once $W_i$ is replaced by $W_i^{(m)}=\E[W_i\mid\mathcal F_{i-\floor{m/2}}^{i+\ceil{m/2}}]$, retained innovation blocks attached to observations more than $m$ dates apart are non-overlapping and therefore independent, so the approximating sequence is exactly $m$-dependent; the formal verification is given in \Cref{app:ned}. The observed process $\{W_i\}$ may still exhibit persistent and non-linear dependence through its underlying i.i.d.-innovation representation.\footnote{For treatments of shift transformations and their role in representing stochastic sequences, see, for example, \citet[Sects.~13.1--13.2]{davidson2021stochastic}.} This formulation covers many standard econometric data-generating processes, including linear processes such as MA($\infty$) and stationary ARMA models, volatility recursions such as ARCH/GARCH-type specifications, and smooth non-linear autoregressions, whenever the usual stability, contraction, and moment conditions deliver finite-window approximation errors, e.g., \citet[Examples 17.3--17.4]{davidson2021stochastic}. A broader NED-on-mixing formulation would require an additional coupling layer, which is not pursued here.
\end{remark}

\section{Order-\texorpdfstring{$k$}{k} Theory}\label{Sect:OrderKTheory}
The order-$2$ pipeline in the preceding section isolates the main ideas: arrange the data so that independent pieces are visible, control the relevant same-cluster or lag-collision terms, reduce the statistic to its first-order projection, and then apply the relevant central limit theorem to that projection. The order-$k$ theory keeps the same structure. Throughout the order-$k$ results, \(k\geq2\) is fixed as \(n\to\infty\); constants may depend on \(k\), but this dependence is not optimised.

\subsection[Clustered order-k U-statistic]{Clustered order-$k$ U-statistic}\label{Sect:Orderk}

Let $\varphi(w_1,\ldots,w_k)$, $k\geq2$, be symmetric and centred on cluster-generic $k$-tuples, so
\[
    \E[\varphi(W_{i_1},\ldots,W_{i_k})]=0
\]
whenever the indices are distinct and belong to distinct clusters. The order-$k$ $U$-statistic is
\[
    U_{n,k}:= \frac{1}{\Perm{n}{k}}\sum_{(i_1,\ldots,i_k)_{\neq}}\varphi(W_{i_1},\ldots,W_{i_k}),
\]
where $\Perm{n}{k}:=n(n-1)\cdots(n-k+1)$ denotes the number of permutations of $k$ distinct indices from $[n]$. Define the first-order projection $\varphi^{(1)}_{(i_2,\ldots,i_k)}(W_i):=\E[\varphi(W_i,W_{i_2},\ldots,W_{i_k})\mid W_i]$. Let
\(
    Q_{n,i}:=\#\{(i_2,\ldots,i_k)_{\neq}: g(i,i_2,\ldots,i_k)_{\neq}\},
\)
and define the cluster-generic co-index average
\[
    \bar\varphi_{n,k}^{(1)}(W_i):=\frac{1}{Q_{n,i}}\sum_{\substack{(i_2,\ldots,i_k)_{\neq}\\ g(i,i_2,\ldots,i_k)_{\neq}}}\varphi^{(1)}_{(i_2,\ldots,i_k)}(W_i),
\]
where the sum ranges over ordered co-index tuples for which \((i,i_2,\ldots,i_k)\) is cluster-generic. Hence \(\E[\bar\varphi_{n,k}^{(1)}(W_i)]=0\). \(Q_{n,i}=n-n_{g(i)}\) and \(\bar\varphi_{n,2}^{(1)}(W_i)=\bar\varphi_n(W_i)\) are recovered at $k=2$. Let $\displaystyle \sigma_{n,k}^2:= \frac{k^2}{n} \sum_{g=1}^G \E\Big[ \Big( \sum_{i\in\G_g} \bar\varphi_{n,k}^{(1)}(W_i) \Big)^2 \Big]$, which reduces to $\sigma_{n,2}^2$ at $k=2$. 

\begin{namedtheorem}[Order-\texorpdfstring{$k$}{k} CLT]{$1'$}\label{thm:CLT_Ustats_Clus_k}
    Suppose
    \begin{equation}\label{eq:clus_ui_k}
        \lim_{M\rightarrow \infty} \sup_{\mathbf i}
        \E\big[ | \varphi(W_{i_1},\ldots,W_{i_k}) |^r
        \indicator{ | \varphi(W_{i_1},\ldots,W_{i_k}) | > M } \big] = 0
    \end{equation}
    and \Cref{ass:sampling,ass:clus_het} hold. Then (a)
    \( \displaystyle
        U_{n,k} = \frac{k}{n} \sum_{i=1}^n \bar\varphi_{n,k}^{(1)}(W_i)
        + \bigOp\Big(\frac{m_n}{n}\Big);
    \)
    (b) \( \displaystyle U_{n,k}=\bigOp\Big(\sqrt{\frac{m_n}{n}}\Big)=o_p(1) \).
    If, in addition, \( \sigma_{n,k} \geq \lambda > 0 \), then
    (c) \( \displaystyle \sigma_{n,k}^{-1}\sqrt{n} \, U_{n,k} \xrightarrow{d} \mathcal{N}(0,1) \).
\end{namedtheorem}
\noindent The structure is unchanged from the order-$2$ case: the statistic is linearised by its first-order projection, while the column partition keeps higher-order collision and degenerate terms asymptotically smaller.

\subsection[Order-k fixed-m CLT]{Order-$k$ fixed-$m$ CLT}\label{Sect:FixedMk}

The fixed-$m$ order-$k$ theorem is the temporal counterpart of \Cref{thm:CLT_Ustats_Clus_k}. Let \(\varphi(w_1,\ldots,w_k)\), \(k\geq2\), be symmetric and centred on lag-generic \(k\)-tuples. Thus \(\E[\varphi(W_{i_1},\ldots,W_{i_k})]=0\) whenever every pair of coordinates is more than \(m\) apart. Denote the statistic by
\[
    U_{n,k}:=\frac1{\Perm nk}\sum_{(i_1,\ldots,i_k)_{\neq}}\varphi(W_{i_1},\ldots,W_{i_k}).
\]
Call a $k$-tuple of distinct indices \emph{lag-generic} if every pair of its coordinates is more than $m$ apart, matching the lag-generic/lag-collision distinction in \Cref{Sect:FixedM}. For such tuples, $W_{i_1},\ldots,W_{i_k}$ are mutually independent by \Cref{def:m_dep} (sorting the coordinates, every consecutive gap exceeds $m$). Define \(\varphi_{\mathbf j}^{(1)}(W_i):=\E[\varphi(W_i,W_{j_2},\ldots,W_{j_k})\mid W_i]\). Let \(Q_i\) be the number of ordered co-index tuples \(\mathbf j=(j_2,\ldots,j_k)\) for which \((i,\mathbf j)\) is lag-generic, and define
\[
    \bar\varphi_{n,m,k,i}^{(1)}(W_i)
    :=
    \frac1{Q_i}\sum_{\substack{(j_2,\ldots,j_k)_{\neq}\\ (i,j_2,\ldots,j_k)\ \mathrm{lag\mbox{-}generic}}}
    \varphi_{\mathbf j}^{(1)}(W_i).
\]

Let \(\displaystyle\sigma_{n,m,k}^2:=\frac{k^2}{n}\V\big[\sum_{i=1}^n\bar\varphi_{n,m,k,i}^{(1)}(W_i)\big]\), which reduces to \(\sigma_{n,m,2}^2\) at \(k=2\).

\begin{namedtheorem}[Order-\texorpdfstring{$k$}{k} Fixed-\texorpdfstring{$m$}{m} CLT]{$2'$}\label{thm:CLT_Ustats_mdep_k}
    Suppose
    \begin{equation}\label{eq:kernel_ui_k}
        \lim_{M\rightarrow\infty}\sup_{\mathbf i}
        \E\big[|\varphi(W_{i_1},\ldots,W_{i_k})|^r
        \indicator{|\varphi(W_{i_1},\ldots,W_{i_k})|>M}\big]=0
    \end{equation}
    for some $r\geq2$, and $\{W_i\}_{i\geq1}$ satisfies \Cref{def:m_dep} for some fixed $m\in\mathbb N_0$. Then (a)
    \( \displaystyle
        U_{n,k}=\frac kn\sum_{i=1}^n\bar\varphi_{n,m,k,i}^{(1)}(W_i)+\bigOp\Big(\frac{m+1}{n}\Big);
    \)
    (b) \( \displaystyle U_{n,k}=\bigOp\Big(\sqrt{\frac{m+1}{n}}\Big)=o_p(1) \).
    If, in addition, \eqref{eq:kernel_ui_k} holds for some \(r>2\) and $\sigma_{n,m,k}\geq\lambda>0$, then
    (c) \( \displaystyle
        \sigma_{n,m,k}^{-1}\sqrt n\,U_{n,k}
        \xrightarrow{d}\mathcal N(0,1)
    \).
\end{namedtheorem}
\noindent At $k=2$, \Cref{thm:CLT_Ustats_mdep_k} recovers \Cref{thm:CLT_Ustats_mdep}. The lag-generic/lag-collision split has a classical antecedent in \citet{sen1963properties}, while \citet{malevich_abdalimov_1982} studies stationary $U$-statistics under $m$-dependence in both the fixed-$m$ case and regimes with $m=m(n)\to\infty$. For symmetric $\varphi$ and under stationarity, \Cref{thm:CLT_Ustats_mdep_k} is also consistent with the general fixed-$m$ $U$-statistic CLT of \citet[Thm.~3.3]{janson2023asymptotic}.

\subsection[Order-k NED CLT]{Order-$k$ NED CLT}\label{Sect:DoubleLimit}
The NED extension uses the same approximating sequence and rate balance introduced in \Cref{Sect:NED}. The order-$k$ additions are the order-$k$ version of the same two-branch kernel perturbation condition. In both branches, the held-fixed co-ordinates are understood to range over the relevant original, truncated, mixed, and generic independent-copy choices. For \(a=0,\ldots,k\), write \(Z_{\mathbf i}^{(m,a)}\) for \((W_{i_1},\ldots,W_{i_k})\) with the first \(a\) coordinates replaced by their \(m\)-approximants. For a one-coordinate replacement in coordinate \(a\), write \(Z_{-a}:=(Z_1,\ldots,Z_{a-1},Z_{a+1},\ldots,Z_k)\) for the held-fixed collection.

\begin{namedassumption}[Kernel perturbation for the order-$k$ kernel]{$5'$}\label{ass:kernel_perturb_k}
    One of the following two branches holds.
	    \begin{enumerate}[label=(\alph*),itemsep=0pt,topsep=0pt]
		        \item \emph{Expected-H\"older branch.} There exist $C<\infty$ and \(\alpha\in(0,1]\) such that, uniformly over \(\mathbf i_{\neq}\), \(m\in\mathbb N_0\), \(1\leq a\leq k\), and every relevant held-fixed collection \(Z_{-a}\),
	        \[
	        \begin{aligned}
	            &\|\varphi(Z_1,\ldots,Z_{a-1},W_{i_a},Z_{a+1},\ldots,Z_k)
	            -\varphi(Z_1,\ldots,Z_{a-1},W_{i_a}^{(m)},Z_{a+1},\ldots,Z_k)\|_2\\
	            &\hspace{1.5cm}\leq
	            C\|W_{i_a}-W_{i_a}^{(m)}\|_2^\alpha,
	        \end{aligned}
		        \]

        \item \emph{Data-dependent modulus branch.} There is a measurable function
        \(L(w,\widetilde w,z_{-a})\), taking values in \([0,\infty)\), such that, uniformly over \(\mathbf i_{\neq}\), \(m\in\mathbb N_0\), \(1\leq a\leq k\), and every relevant held-fixed collection \(Z_{-a}\),
        \[
        \begin{aligned}
            &\big|
            \varphi(Z_1,\ldots,Z_{a-1},W_{i_a},Z_{a+1},\ldots,Z_k)\\
            &\hspace{0.5cm}
            -\varphi(Z_1,\ldots,Z_{a-1},W_{i_a}^{(m)},Z_{a+1},\ldots,Z_k)
            \big| \\
            &\hspace{1.5cm}\leq
            L(W_{i_a},W_{i_a}^{(m)},Z_{-a})\|W_{i_a}-W_{i_a}^{(m)}\|.
        \end{aligned}
        \]
        There is $\delta>0$ such that, as \(M\to\infty\),
        \[
            \E[L(W_{i_a},W_{i_a}^{(m)},Z_{-a})
            \indicator{L(W_{i_a},W_{i_a}^{(m)},Z_{-a})>M}]
            =
            \bigO(M^{-\delta}),
        \]
        uniformly over the same indices and held-fixed collections as in the preceding display.
        For some \(r>2\),
        \begin{equation}\label{eq:random_modulus_kernel_moment_k}
            \sup_{\mathbf i_{\neq},\,m\in\mathbb N_0,\,0\le a\le k}
            \|\varphi(Z_{\mathbf i}^{(m,a)})\|_r
            <\infty
        \end{equation}
        The same tail and moment bounds are understood for the corresponding generic independent copies used in projection arguments.
    \end{enumerate}
\end{namedassumption}

\noindent The effective rate \(\rho_m\) is read branchwise exactly as in the order-$2$ case: in the expected-H\"older branch, \(\rho_m:=\nu_m^\alpha\); in the data-dependent modulus branch, with \(\delta\) and \(r\) as in the assumption, let \(\tau:=(1+\delta)(r-2)/(2r)\), \(\beta:=\tau/(1+\tau)\), and \(\rho_m:=\nu_m^\beta\). At \(k=2\), this is the order-$k$ analogue of \Cref{ass:kernel_perturb_2}. It retains the one-coordinate perturbation format in both branches.

The theorem below provides the order-$k$ extension of the NED limit theory.

\begin{namedtheorem}[Order-\texorpdfstring{$k$}{k} NED CLT]{$3'$}\label{thm:CLT_double_limit}
    Let $\varphi(w_1,\ldots,w_k)$, $k\geq2$, be symmetric and centred in the independent-copy sense, let \(\varphi_i^{(1)}(W_i)\) denote the corresponding centred first-order projection, and set \(\sigma_n^2:=\V[n^{-1/2}\sum_{i=1}^n\varphi_i^{(1)}(W_i)]\). Suppose that \Cref{ass:ned,ass:kernel_perturb_k} hold, the uniform integrability condition \eqref{eq:kernel_ui_k} holds for $\{\varphi(W_{i_1},\ldots,W_{i_k})\}$ and uniformly in $m$ for $\{\varphi(W_{i_1}^{(m)},\ldots,W_{i_k}^{(m)})\}$, \(\rho_m=\bigO(m^{-\eta})\) for some \(\eta>1\).
    If $m_n\to\infty$, $m_n=o(\sqrt n)$, and $\sqrt n\,\rho_{m_n}=o(1)$, then
    (a) $\displaystyle U_{n,k}=\frac kn\sum_{i=1}^n\varphi_i^{(1)}(W_i)+o_p(n^{-1/2})$;
    (b) $\displaystyle U_{n,k}=\bigOp(n^{-1/2})=o_p(1)$.
    If, in addition, the uniform integrability condition \eqref{eq:kernel_ui_k} holds for some \(r>2\) and \(\sigma_n\geq\lambda>0\), then
    (c) $\displaystyle (k\sigma_n)^{-1}\sqrt n\,U_{n,k}\xrightarrow{d}\mathcal N(0,1)$.
\end{namedtheorem}

\noindent \Cref{thm:CLT_double_limit} closes the order-$k$ NED extension: it delivers a CLT for $U_{n,k}$ under near-epoch dependence, with no exact-$m$-dependence requirement on $\{W_i\}$ itself.

\section{Feasible Inference}\label{Sect:FeasibleInference}
The limit theorems above reduce both sampling designs to the covariance of the first-order projection. In the clustered case that covariance is estimated by cluster aggregation, while under near-epoch dependence it is estimated by a long-run covariance procedure.

\subsection{Cluster-robust projection covariance}\label{Sect:ClusterEstimation}
Inference via \Cref{thm:CLT_Ustats_Clus_k} requires an estimator of the covariance of the order-$k$ first-order projection. The scalar target is $\sigma_{n,k}^2$ from \Cref{Sect:Orderk}. For each \(i\in[n]\), define the all-coindex leave-one-out average
\[
    \tilde\varphi_{n,k,i}
    :=
    \frac1{\Perm{n-1}{k-1}}
    \sum_{\substack{(i_2,\ldots,i_k)_{\neq}\\ i_a\in[n]\setminus\{i\},\ a=2,\ldots,k}}
    \varphi(W_i,W_{i_2},\ldots,W_{i_k}).
\]
Then \(U_{n,k}=n^{-1}\sum_{i=1}^n\tilde\varphi_{n,k,i}\). Centre this observation-level vector by setting
\(
    \hat\varphi^{(1)}_{n,k,i}:=\tilde\varphi_{n,k,i}-U_{n,k}.
\)
The natural cluster-robust plug-in estimator is
\[
    \hat\sigma_{n,k}^2:=\frac{k^2}{n}\sum_{g=1}^G\Big(\sum_{i\in\G_g}\hat\varphi^{(1)}_{n,k,i}\Big)^2.
\]

\noindent Thus the statistic and covariance estimator are built from the same \(n\)-vector, i.e., the average of \(\tilde\varphi_{n,k,i}\) gives the $U$-statistic, while its centred version gives the cluster summands for feasible inference. The consistency argument compares this all-coindex empirical projection with the infeasible cluster-generic projection after cluster aggregation.

\begin{theorem}[Consistent estimation of $\sigma_{n,k}^2$]\label{thm:cluster_var_consistency}
    Under the assumptions of \Cref{thm:CLT_Ustats_Clus_k}, including $\sigma_{n,k}\geq\lambda>0$, $\hat\sigma_{n,k}^2/\sigma_{n,k}^2\xrightarrow{p}1$.
\end{theorem}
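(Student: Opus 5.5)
The plan is to compare $\hat\sigma_{n,k}^2$ with the infeasible cluster-sum estimator
\[
\tilde\sigma_{n,k}^2:=\frac{k^2}{n}\sum_{g=1}^G\Big(\sum_{i\in\G_g}\bar\varphi_{n,k}^{(1)}(W_i)\Big)^2 ,
\]
and to show separately that $\tilde\sigma_{n,k}^2-\sigma_{n,k}^2=o_p(1)$ and $\hat\sigma_{n,k}^2-\tilde\sigma_{n,k}^2=o_p(1)$. Since $\sigma_{n,k}\ge\lambda$, this gives the ratio statement.

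\textbf{Step 1: infeasible estimator.} Put $S_g:=\sum_{i\in\G_g}\bar\varphi_{n,k}^{(1)}(W_i)$. The $S_g$ are independent across $g$ by \Cref{ass:sampling}, and $\E[\tilde\sigma_{n,k}^2]=\sigma_{n,k}^2$. I would apply the weak law for clustered sums used in \citet{hansen2019asymptotic} (their Theorem~1 / variance-estimator lemma) to the summands $n^{-1}S_g^2$. The inputs are as follows.
\begin{itemize}
\item Uniform $r$-integrability of $\bar\varphi_{n,k}^{(1)}(W_i)$. This follows from \eqref{eq:clus_ui_k} by conditional Jensen and convexity of the average.
\item The cluster-heterogeneity bound of \Cref{ass:clus_het}.
\end{itemize}
If necessary, I would truncate at $M$ and use the uniform integrability to handle the tail. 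The truncated part has variance bounded by $n^{-2}\sum_g n_g^4M^4\le M^4 m_n^2 n^{-1}\to0$.

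\textbf{Step 2: feasible minus infeasible.} Write $\hat\varphi^{(1)}_{n,k,i}=\bar\varphi_{n,k}^{(1)}(W_i)+R_i$, where
\[
R_i:=\tilde\varphi_{n,k,i}-\frac1n\sum_j\tilde\varphi_{n,k,j}-\bar\varphi_{n,k}^{(1)}(W_i).
\]
Expanding the square and using Cauchy--Schwarz gives
\[
|\hat\sigma^2-\tilde\sigma^2|\le \frac{k^2}{n}\sum_g T_g^2+2\,\tilde\sigma\Big(\frac{k^2}{n}\sum_g T_g^2\Big)^{1/2},
\qquad T_g:=\sum_{i\in\G_g}R_i ,
\]
so it suffices to show $n^{-1}\sum_g\E[T_g^2]\to0$. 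I would decompose $R_i$ into four pieces.
\begin{enumerate}
\item[(i)] A \emph{column-share} difference. $\tilde\varphi_{n,k,i}$ averages over all $\Perm{n-1}{k-1}$ co-index tuples, whereas $\bar\varphi$ averages over the $Q_{n,i}$ cluster-generic ones. The fraction of tuples with a same-cluster collision is $\bigO(m_n/n)$, so this term has $L_2$ size $\bigO(m_n/n)$ per $i$. Summed over a cluster and squared, it contributes $\bigO(n^{-1}\sum_g n_g^2m_n^2/n^2)=\bigO(m_n^2/n^2)\cdot\bigO(1)$.
\item[(ii)] The \emph{Hoeffding decomposition} of $\varphi(W_i,W_{i_2},\dots)$ over cluster-generic tuples. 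Subtracting the first-order projection in the $i$-coordinate leaves two things: the first-order projections in the co-index coordinates, averaged over all co-indices, which is essentially $(k-1)$ times the centred sample mean of the $\bar\varphi(W_j)$ and hence $\bigOp(\sqrt{m_n/n})$ uniformly in $i$; and degenerate higher-order terms.
\item[(iii)] Centring by $U_{n,k}$, which is $\bigOp(\sqrt{m_n/n})$ by \Cref{thm:CLT_Ustats_Clus_k}(b).
\end{enumerate}
Terms (ii-first) and (iii) are common to every $i$ up to $\bigO(m_n/n)$ corrections, with $L_2$ size $\sqrt{m_n/n}$. They contribute
\[
n^{-1}\sum_g n_g^2\,m_n/n\le m_n^2/n\to0
\]
by \Cref{ass:clus_het}.

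\textbf{Main obstacle.} The hard term is the degenerate higher-order piece. For fixed $i$, $D_i:=Q_{n,i}^{-1}\sum\varphi^{(c\ge2)}$ is an average of degenerate kernels with $W_i$ held as a free argument. I need
\[
\E\Big[\Big(\sum_{i\in\G_g}D_i\Big)^2\Big]=o(n_g^2)
\]
uniformly, or at least summing to $o(n)$. My first attempt is the column-partition counting used for part (a) of \Cref{thm:CLT_Ustats_Clus_k}. A covariance $\E[D_iD_{i'}]$ is nonzero only if the co-index tuples share at least one further cluster beyond the forced linkage through $i,i'$. This makes $\E[D_i^2]\lesssim m_n/n$ and, by Cauchy--Schwarz, $\E[T_g^2]\lesssim n_g^2m_n/n$. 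Summing gives $n^{-1}\sum_g n_g^2\,m_n/n=\bigO(m_n^2/n)=o(1)$.

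For unbounded kernels these second-moment bounds need $r\ge2$ moments of $\varphi$. I would get them by truncating $\varphi$ at level $M$, applying the counting argument to the bounded part, and controlling the remainder uniformly via \eqref{eq:clus_ui_k} before letting $M\to\infty$.
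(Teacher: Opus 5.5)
Your architecture matches the paper's. The oracle $\tilde\sigma_{n,k}^2$ is handled by the Hansen--Lee variance-estimator result. The feasible-to-oracle gap is handled by the same cluster-sum bookkeeping:
\begin{itemize}
\item non-generic co-indices and the $Q_{n,i}$ denominator contribute $\bigO(m_n/n)$ per unit;
\item the centring contributes $U_{n,k}^2\,n^{-1}\sum_g n_g^2=\bigOp(m_n^2/n)$;
\item the fluctuation is controlled because covariances vanish unless the two co-index tuples share a cluster, giving $\bigO(m_n^2/n)$.
\end{itemize}
Two differences are minor. The paper does not split the fluctuation into co-index first-order and degenerate parts; one conditional-independence argument handles $\varphi(W_i,W_{\mathbf j})-\varphi^{(1)}_{\mathbf j}(W_i)$ at once. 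Your truncation in Step 2 is also unnecessary, since $r\ge2$ already bounds second moments uniformly.

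The genuine gap is the normalisation. You aim to prove $\tilde\sigma_{n,k}^2-\sigma_{n,k}^2=o_p(1)$ and $\hat\sigma_{n,k}^2-\tilde\sigma_{n,k}^2=o_p(1)$. This implicitly treats $n^{-1}\sum_g n_g^2$ as bounded, and hence $\sigma_{n,k}^2\le Cn^{-1}\sum_g n_g^2\le Cm_n$. Assumption~\ref{ass:clus_het} guarantees that only when $r=2$. For $r>2$ it allows $\sigma_{n,k}^2\to\infty$, and then your difference-form claims are false.

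For example, take $k=2$, $\varphi(x,y)=x+y+xy$, and $W_i=Y_{g(i)}$ with $Y_g$ i.i.d.\ $\mathcal N(0,1)$. Use $G=n/m$ clusters of equal size $m\asymp n^{(r-2)/(2(r-1))}$. Every hypothesis of the theorem holds, $\bar\varphi_n(W_i)=W_i$, and $\sigma_{n,2}^2=4m$. Then $\tilde\sigma_{n,2}^2-\sigma_{n,2}^2=4m\{G^{-1}\sum_gY_g^2-1\}$ has standard deviation $4\sqrt{2m^3/n}$. This diverges when $r>4$, even though the ratio tends to one.

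Your fallback has the same defect. In fact $\sum_g n_g^4\le m_n^2\sum_g n_g^2\le m_n^3n$, not $m_n^2n$, and $n^{-2}\sum_g n_g^4$ need not vanish. So a fourth-moment bound on truncated cluster sums cannot deliver the oracle step even in ratio form.

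The repair is to work relative to $\sigma_{n,k}^2$ throughout, as the paper does.
\begin{itemize}
\item \emph{Oracle step.} Either invoke Hansen--Lee as the paper does, with the conclusion read as $\tilde\sigma_{n,k}^2/\sigma_{n,k}^2\xrightarrow{p}1$. Or apply a weak law to the independent summands $Y_g:=k^2A_g^2/(n\sigma_{n,k}^2)$, where $A_g:=\sum_{i\in\G_g}\bar\varphi^{(1)}_{n,k}(W_i)$, so that $\sum_g\E[Y_g]=1$. Truncating at $Y_g\le\varepsilon$ gives $\V\big(\sum_gY_g\indicator{Y_g\le\varepsilon}\big)\le\varepsilon\sum_g\E[Y_g]=\varepsilon$, using first moments only. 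The tail $\sum_g\E[Y_g\indicator{Y_g>\varepsilon}]\to0$ follows from $\sigma_{n,k}\ge\lambda$ and the Lindeberg-type estimate (uniform $r$-integrability plus Assumption~\ref{ass:clus_het}) that underlies the Hansen--Lee CLT.
\item \emph{Gap step.} Divide your Cauchy--Schwarz display by $\sigma_{n,k}^2$. With $\tilde\sigma_{n,k}/\sigma_{n,k}=\bigOp(1)$ and $n^{-1}\sum_g\E[T_g^2]=\bigO(m_n^2/n)$, you get $(\hat\sigma_{n,k}^2-\tilde\sigma_{n,k}^2)/\sigma_{n,k}^2=o_p(1)$. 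This is exactly how the paper concludes.
\end{itemize}

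One small correction in (ii): with heterogeneous marginals the co-index projections are not common across $i$, since they depend on the law of $W_i$. You do not need commonality, though. Their per-$i$ $L_2$ size $\bigO(\sqrt{m_n/n})$ plus Minkowski within each cluster suffices.
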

\subsection{HAC projection covariance}\label{Sect:Estimation}

\citet{dehling2017testing} considers this problem for Kendall's tau, an order-$2$ $U$-statistic, under near-epoch dependence. The construction combines a plug-in estimator of the first-order projection with a HAC (heteroscedasticity- and autocorrelation-consistent) kernel estimator of the long-run variance, building on \citet{dejong2000consistency}. \citet{fischer2017robust} develops the same broad oracle-plus-plug-in HAC strategy for multivariate kernels in a bounded-kernel NED setting.

This section extends that plug-in-plus-HAC architecture to the order-$k$ projection in \Cref{thm:CLT_double_limit}, under the present $L_2$-NED framework and either of the complementary kernel-regularity branches. The oracle HAC step uses the heterogeneous-array consistency result of \citet{dejong2000consistency}, so stationarity is not imposed.

For $i=1,\ldots,n$, use the same leave-one-out empirical projection,
\[
    \hat\varphi^{(1)}_{n,k,i}:=\tilde\varphi_{n,k,i}-U_{n,k},
\]
as defined in \Cref{Sect:ClusterEstimation}.
Let
\(
    \hat\rho(\ell) := \frac1n\sum_{i=1}^{n-\ell}\hat\varphi^{(1)}_{n,k,i}\,\hat\varphi^{(1)}_{n,k,i+\ell},\quad \ell=0,1,\ldots,n-1
\) be the empirical covariance for lag $\ell$, and define the HAC-type long-run variance estimator
\[
    \hat\sigma_n^2 := \hat\rho(0)+2\sum_{\ell=1}^{n-1}\kappa\Big(\frac \ell{h_n}\Big)\hat\rho(\ell)
\]
where $\kappa(\cdot)$ is a HAC weight function and $h_n$ is a bandwidth sequence.

The following regularity condition follows \citet[Assumption 2.6]{dehling2017testing}.
\begin{assumption}[HAC weights and bandwidth]\label{ass:hac}
    $\kappa:\mathbb R\to[-1,1]$ satisfies $\kappa(0)=1$, $\kappa(x)=\kappa(-x)$ for all $x$, $\int|\kappa(x)|\,dx<\infty$, its Fourier transform $f(\xi):=(2\pi)^{-1}\int\kappa(x)e^{i\xi x}\,dx$ satisfies $\int|f(\xi)|\,d\xi<\infty$, and $\kappa$ is continuous at $0$ and at all but finitely many points. The bandwidth satisfies $h_n\to\infty$ and $h_n=o(\sqrt n)$.
\end{assumption}
\noindent \Cref{ass:hac} is satisfied by the Bartlett, Parzen, quadratic-spectral, and Tukey--Hanning weight functions; see \citet{dehling2017testing} and \citet{dejong2000consistency}.

\begin{theorem}[Consistent estimation of $\sigma_n^2$]\label{thm:hac_consistency}
    Under the assumptions of \Cref{thm:CLT_double_limit}, with the uniform integrability condition \eqref{eq:kernel_ui_k} imposed for some \(r>2\), and if \Cref{ass:hac} also holds, $\hat\sigma_n^2-\sigma_n^2\xrightarrow{p}0$.
\end{theorem}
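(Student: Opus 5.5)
The argument is a standard two-step HAC consistency proof. First show that the infeasible oracle estimator $\tilde\sigma_n^2$ is consistent. It is built from the true centred projections $\varphi_i^{(1)}(W_i)$ in place of $\hat\varphi^{(1)}_{n,k,i}$, with
\[
\tilde\rho(\ell):=\frac1n\sum_{i=1}^{n-\ell}k^{-2}\cdot k^2\,\varphi_i^{(1)}(W_i)\varphi_{i+\ell}^{(1)}(W_{i+\ell}).
\]
Then show that replacing the oracle summands by the empirical leave-one-out projections costs $o_p(1)$ in the weighted lag sum. The normalisation needs care: $\tilde\varphi_{n,k,i}$ estimates $\varphi_i^{(1)}(W_i)$, while $\sigma_n^2$ is the long-run variance of $n^{-1/2}\sum_i\varphi_i^{(1)}(W_i)$. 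So no factor $k^2$ enters, which matches \Cref{thm:CLT_double_limit}(c).

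\textbf{Oracle step.} Write $X_{n,i}:=\varphi_i^{(1)}(W_i)$. These are mean zero, and uniform integrability of $|X_{n,i}|^r$ with $r>2$ follows from \eqref{eq:kernel_ui_k} by conditional Jensen. The NED property of $\{X_{n,i}\}$ on the i.i.d. base follows from \Cref{ass:ned} and \Cref{ass:kernel_perturb_k} by a one-coordinate perturbation: $\|X_{n,i}-\E[X_{n,i}\mid\mathcal F_{i-\floor{m/2}}^{i+\ceil{m/2}}]\|_2\le C\rho_m$. This uses the projection inequality for conditional expectations together with the bound $\|X_{n,i}-\varphi_i^{(1)}(W_i^{(m)})\|_2\le C\rho_m$, obtained by integrating out independent copies in the remaining coordinates. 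Since $\rho_m=\bigO(m^{-\eta})$ with $\eta>1$, the array is $L_2$-NED of size $-1$ on an i.i.d. (hence trivially mixing) base. The heterogeneous-array result of \citet{dejong2000consistency} then gives $\tilde\sigma_n^2-\sigma_n^2\to_p0$ under \Cref{ass:hac}; its bandwidth condition is implied by $h_n=o(\sqrt n)$.

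\textbf{Plug-in step.} Decompose
\[
\hat\varphi^{(1)}_{n,k,i}=X_{n,i}+R_{n,i},\qquad R_{n,i}:=(\tilde\varphi_{n,k,i}-X_{n,i})-U_{n,k}.
\]
By \Cref{thm:CLT_double_limit}(b), $U_{n,k}=\bigOp(n^{-1/2})$. For the other part, apply the Hoeffding decomposition of $\tilde\varphi_{n,k,i}$ around $i$ through the approximating $m_n$-dependent sequence. Thanks to the exact $m_n$-dependence, the order-$k$ collision counting used for \Cref{thm:CLT_Ustats_mdep_k} bounds $\tilde\varphi_{n,k,i}-X_{n,i}$ by four pieces:
\begin{itemize}
\item the lag-collision co-indices, of order $m_n/n$;
\item the average of co-index projections $n^{-1}\sum_{j}\varphi^{(1)}_j(W_j)$, of order $n^{-1/2}$;
\item the degenerate higher-order terms, of order $n^{-1/2}$ or smaller;
\item the NED approximation error, of order $\rho_{m_n}$.
\end{itemize}
This should give $\max_i\|\tilde\varphi_{n,k,i}-X_{n,i}\|_2=\bigO(m_n/n+n^{-1/2}+\rho_{m_n})=:\epsilon_n$, with $\epsilon_n=o(n^{-1/2}m_n^{1/2})$ up to the rate conditions, so $\|R_{n,i}\|_2\le C\epsilon_n$ uniformly in $i$. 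For each lag,
\[
\hat\rho(\ell)-\tilde\rho(\ell)=\frac1n\sum_i\big(X_{n,i}R_{n,i+\ell}+R_{n,i}X_{n,i+\ell}+R_{n,i}R_{n,i+\ell}\big),
\]
and Cauchy--Schwarz bounds its $L_1$ norm by $C(\epsilon_n+\epsilon_n^2)$. Summing with weights, $\sum_{\ell}|\kappa(\ell/h_n)|=\bigO(h_n)$ because $\kappa$ is integrable and bounded. The perturbation is therefore $\bigOp(h_n\epsilon_n)$, and this is $o_p(1)$ provided $h_n(n^{-1/2}+m_n/n+\rho_{m_n})\to0$. The first two terms are handled by $h_n=o(\sqrt n)$ and $m_n=o(\sqrt n)$. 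The third is handled by $\sqrt n\rho_{m_n}=o(1)$, which gives $h_n\rho_{m_n}=o(h_n/\sqrt n)$.

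\textbf{Main obstacle.} The hard part is the uniform-in-$i$ $L_2$ bound on $\tilde\varphi_{n,k,i}-X_{n,i}$. The crude Cauchy--Schwarz route needs $\|R_{n,i}\|_2=o(h_n^{-1})$, and this is sharp only if the degenerate and collision pieces really are $\bigO(n^{-1/2})$ individually in $L_2$, not merely after averaging over $i$. I would first try to prove this by conditioning on $W_i$ and reusing the order-$k$ remainder variance bounds from the proof of \Cref{thm:CLT_Ustats_mdep_k}, applied to the $(k-1)$-sample sum in the co-indices with $m=m_n$. There, the $m_n$-dependent covariance counting gives variance $\bigO(m_n/n)$ rather than $\bigO(1/n)$. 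If that yields only $\epsilon_n\asymp\sqrt{m_n/n}$, I would fall back to a sharper bound that avoids paying $h_n$ on every lag. This means exploiting that $\sum_i X_{n,i}R_{n,i+\ell}$ is itself a centred U-type sum whose variance is controlled by the same collision counting. An alternative is to bound the weighted sum directly as a quadratic form $n^{-1}\sum_{i,j}\kappa((i-j)/h_n)X_{n,i}R_{n,j}$ using $\int|f|<\infty$ to write $\kappa$ as a Fourier mixture, as in \citet{dejong2000consistency} and \citet{dehling2017testing}. That reduces the quadratic form to $\sup_\xi|n^{-1/2}\sum_i e^{i\xi i}X_{n,i}|\cdot|n^{-1/2}\sum_j e^{-i\xi j}R_{n,j}|$, and each factor is controlled in $L_2$ by the \Cref{thm:CLT_double_limit}-type bounds.
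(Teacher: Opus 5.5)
Your oracle step matches the paper's (\Cref{lem:oracle_consistency}). The plug-in step, however, has a genuine gap, and it sits exactly where you flag it. The uniform bound $\max_i\|\tilde\varphi_{n,k,i}-X_{n,i}\|_2=\bigO(m_n/n+n^{-1/2}+\rho_{m_n})$ does not follow from the exact-$m_n$ counting you cite.

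Write $\tilde\varphi_{n,k,i}-X_{n,i}=H_{n,k,i}+\frac{k-1}{n-1}\sum_{j\neq i}X_{n,j}$. After truncation, the lag-generic degenerate part of $H_{n,k,i}$ can have non-vanishing covariances whenever some co-index of one tuple lies within $m_n$ lags of a co-index of the other. For a fixed anchor there are $\bigO(m_nn^{2k-3})$ such pairs, against the normalisation $\Perm{n-1}{k-1}^2\asymp n^{2k-2}$. So this piece is $\bigO(\sqrt{m_n/n})$ in $L_2$, not $\bigO(n^{-1/2})$. Re-choosing the window to balance $\rho_l$ against $\sqrt{l/n}$ gives at best $\|H_{n,k,i}\|_2=\bigO(n^{-\eta/(2\eta+1)})$; this is \Cref{lem:hac_anchor_bound}. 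With that rate, $h_n\epsilon_n\to0$ is not implied by $h_n=o(\sqrt n)$: take $h_n=n^{0.45}$ and $\eta=1.1$. So lag-by-lag Cauchy--Schwarz proves the theorem only under a stronger bandwidth condition. It would survive under $h_n=o(\sqrt n)$ if you could show $\sup_i\|H_{n,k,i}\|_2=\bigO(n^{-1/2})$ for the original process. That would plausibly require truncating each covariance pair at its own isolation distance and summing $\rho_t$ using $\eta>1$, rather than a single $m_n$-window.

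The paper closes the gap with three ingredients.
\begin{enumerate}
\item The exact identity of \Cref{lem:phi1_hat_identity}, $\varphi_i-\hat\varphi^{(1)}_{n,k,i}=\bar\varphi_n^{(1)}+\bar H_n+\frac{k-1}{n-1}(\varphi_i-\bar\varphi_n^{(1)})-H_{n,k,i}$ with $\varphi_i=X_{n,i}$, isolates a part that is common across $i$ or carries an $\bigO(1/n)$ coefficient. Only this part goes through your crude bound, at cost $\bigOp(h_n/\sqrt n)$. The identity also gives $L_2$ control of $U_{n,k}=k\bar\varphi_n^{(1)}+\bar H_n$, which you only have in probability.
\item The term quadratic in $H_{n,k,i}$ is also handled crudely, since $h_nn^{-2\eta/(2\eta+1)}\to0$.
\item The term linear in $H_{n,k,i}$ is the real work, done in \Cref{lem:hac_linear_H}. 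It treats $\sum_\ell\kappa(\ell/h_n)n^{-1}\sum_i\varphi_iH_{n,k,i+\ell}$ as one full-sample weighted average over $(\ell,i,i_2,\ldots,i_k)$ and truncates at $q$. Degeneracy-and-linking then gives joint variance $\bigO(h_nq/n)$, a cancellation across anchors that no per-anchor $L_2$ bound can see. The proof finishes with $q_n\sim(h_nn)^{1/(2\eta+1)}$.
\end{enumerate}

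Your first fallback names this idea but does not carry it out. Your Fourier fallback is a genuinely different and workable route for the linear term, since it removes the factor $h_n$ via $\int|f|<\infty$. As stated, though, it does not close, for three reasons.
\begin{enumerate}
\item $R_{n,j}$ contains the common shift. Near $\xi=0$, $n^{-1/2}\sum_j e^{\mathrm{i}\xi j/h_n}$ times an $\bigOp(n^{-1/2})$ constant is $\bigOp(1)$, so the shift must be split off first and bounded crudely.
\item You must take the expectation inside the $\xi$-integral, not a supremum of a random product.
\item You need $\sup_\xi\|n^{-1/2}\sum_je^{\mathrm{i}\xi j/h_n}H_{n,k,j}\|_2=o(1)$. \Cref{thm:CLT_double_limit} does not give this, because it controls only the unweighted $\bar H_n$. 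You would have to redo the truncation-and-counting with bounded deterministic weights, balancing $l/\sqrt n$ against $\sqrt n\rho_l$, which is possible because $\eta>1$.
\end{enumerate}
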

\section{Empirical Applications}\label{Sect:Empirical}
The empirical applications mirror the paper's theoretical progression. Each application pairs a familiar order-2 statistic with a genuine order-3 statistic under a common sampling scheme, so the empirical section moves from order-2 intuition to order-$k$ feasible inference in the same spirit as the theory. The county-income application uses the Gini coefficient and L-skewness under state-level clustering, while the financial-return application uses Kendall's tau and Spearman's rho under weak time-series dependence. In both cases, the common issue is feasible inference rather than point estimation: the existing $U$-statistic literature provides only part of what is needed for these applications, and the procedures developed here supply the missing cluster-robust and weak-dependence inference steps used below.

\subsection{County-income inequality and asymmetry}\label{Sect:Empirical_Gini}

\citet{park_shin_2023_county_incomes} studies the distribution dynamics of US county per-capita incomes from 1970 to 2017, including the role of government transfer programmes in compressing cross-county income differences. The application follows that question through two index-based summaries built from the same county panel: one for overall inequality and one for distributional asymmetry. The Gini coefficient is the mean absolute pairwise income gap divided by twice the mean income, so its numerator is a natural order-2 $U$-statistic. Counties are observed repeatedly over time and are naturally grouped within states, making state-clustered inference for annual cross-county inequality a direct use case for \Cref{Sect:ClusterEstimation}. Related work by \citet{bhattacharya2007inference} and \citet{darku2020gini} treats Gini inference under survey designs; the application instead uses administrative county panels and compares pre-transfer and post-transfer county-income measures through both the overall spread and the shape of the cross-county distribution.

Let $Y_{it}$ denote per-capita income in county $i$ in year $t$. For a fixed year $t$, let \(\lambda_{\iota t}\), \(\iota\in[3]\), denote the first three L-moments of the cross-county income distribution, written as
\[
    \lambda_{1t}:=\E[\varphi_1(Y_{1t})],\quad
    \lambda_{2t}:=\E[\varphi_2(Y_{1t},Y_{2t})],\quad \text{and} \quad
    \lambda_{3t}:=\E[\varphi_3(Y_{1t},Y_{2t},Y_{3t})],
\]
with kernels
\(
    \varphi_1(y_1):=y_1,\qquad
    \varphi_2(y_1,y_2):=\frac12|y_1-y_2|,
\)
and
\[
    \varphi_3(y_1,y_2,y_3):=\frac13\{\max(y_1,y_2,y_3)-2\operatorname{med}(y_1,y_2,y_3)+\min(y_1,y_2,y_3)\};
\]
see \citet[eqn. 2.4]{hosking1990lmoments}. The application reports two ratio transforms of adjacent L-moments, namely the Gini coefficient \(\displaystyle G_t:=\frac{\lambda_{2t}}{\lambda_{1t}}\) and the L-skewness \(\displaystyle \tau_{3t}:=\frac{\lambda_{3t}}{\lambda_{2t}}\). The Gini coefficient is the ratio of the second L-moment to the mean, while L-skewness is the ratio of the third to the second L-moment. This gives the inequality and asymmetry summaries a common order-statistic backbone while keeping the second object dimension free. See also \citet{sreelakshmi_asha_nair_2015_lmoments} for related discussion of L-moments and inequality.

The sample analogues are built from an order-1 mean, an order-2 $U$-statistic, and an order-3 $U$-statistic. In particular,
\[
    \widehat\lambda_{1t}:=\frac1{n_t}\sum_{i=1}^{n_t}Y_{it},\qquad
    \widehat\lambda_{2t}:=\frac1{2n_t(n_t-1)}\sum_{i=1}^{n_t}\sum_{j\neq i}|Y_{it}-Y_{jt}|,
\]
so that \(\widehat G_t=\widehat\lambda_{2t}/\widehat\lambda_{1t}\). Here \(i\in[n_t]\) indexes counties observed in year \(t\), and \(n_t\) is the number of such counties. Likewise, \(\widehat\tau_{3t}=\widehat\lambda_{3t}/\widehat\lambda_{2t}\), where \(\widehat\lambda_{3t}\) is the order-3 $U$-statistic sample analogue of \(\lambda_{3t}\). The clustered first-order projection for \(\widehat G_t\) therefore combines the empirical projections for \((\widehat\lambda_{1t},\widehat\lambda_{2t})\) through the delta method, while that for \(\widehat\tau_{3t}\) combines those for \((\widehat\lambda_{2t},\widehat\lambda_{3t})\). In both cases, the cluster-robust variance sums the resulting first-order projections within states, exactly as in \Cref{Sect:ClusterEstimation}. The reported standard errors therefore treat within-state county outcomes as potentially dependent while using the cross-state variation for inference. This order-3 addition asks a distinct empirical question: not only whether transfers are associated with a compressed overall inequality, but also whether they are associated with altered asymmetry of the county-income distribution.

The data are the replication files from \citet{park_shin_2023_county_incomes}. Following that construction, the pre-transfer series is pre-tax county per-capita income: government transfer income and military income are excluded from Bureau of Economic Analysis (BEA) personal income before dividing by county population. Annual Gini and L-skewness measures are computed from this pre-transfer series and from the post-transfer series that adds all government transfer receipts available in the replication package. The aim is to isolate an index-based implication of that transfer exercise and equip it with inference designed for clustered $U$-statistics. For the Gini, the estimand of interest is
\[
    \Delta_t:=G_t^{\mathrm{post\ transfer}}-G_t^{\mathrm{pre\ transfer}},
\]
so negative values indicate that the post-transfer income measure is less unequal across counties than the pre-transfer measure. The corresponding L-skewness gap is defined analogously. \Cref{fig:park_shin_gaps} plots both annual paths. Each panel reports pointwise intervals and simultaneous sup-$t$ bands over calendar years, constructed following the sup-$t$ principle of \citet{li_liao_2020_uniform_nonparametric}.\footnote{The sup-$t$ step is applied to the year-indexed path of annual $U$-statistics: point estimates and state-clustered standard errors are computed year by year, and the simultaneous band calibrates the supremum of the resulting studentised path, without requiring the annual estimates to be independent across years.}

\begin{figure}[h!]
\centering
\caption{County-income order-2 and order-3 transfer effects, 1970--2017.}
\begin{subfigure}[t]{0.49\textwidth}
\centering
\includegraphics[width=\textwidth]{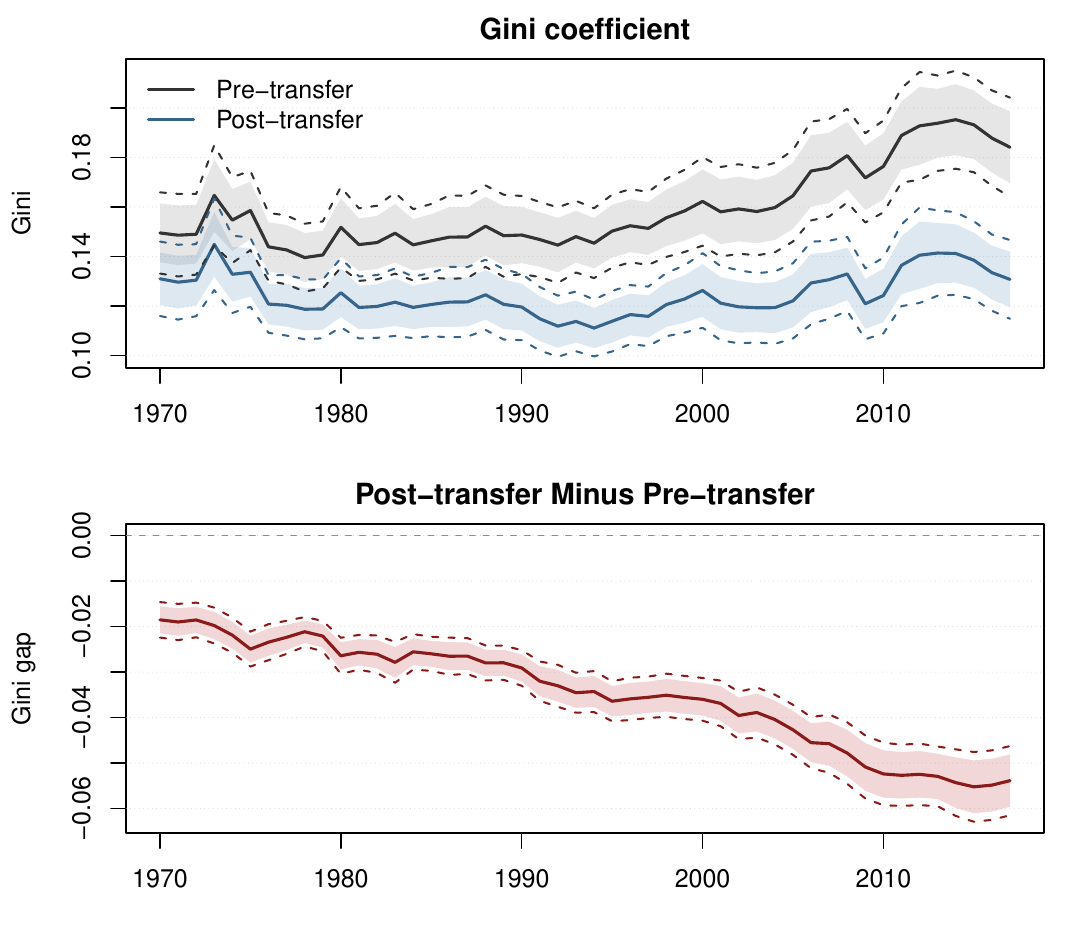}
\caption{Gini coefficients and transfer-induced Gini gaps.}
\label{fig:park_shin_gini_gap}
\end{subfigure}\hfill
\begin{subfigure}[t]{0.49\textwidth}
\centering
\includegraphics[width=\textwidth]{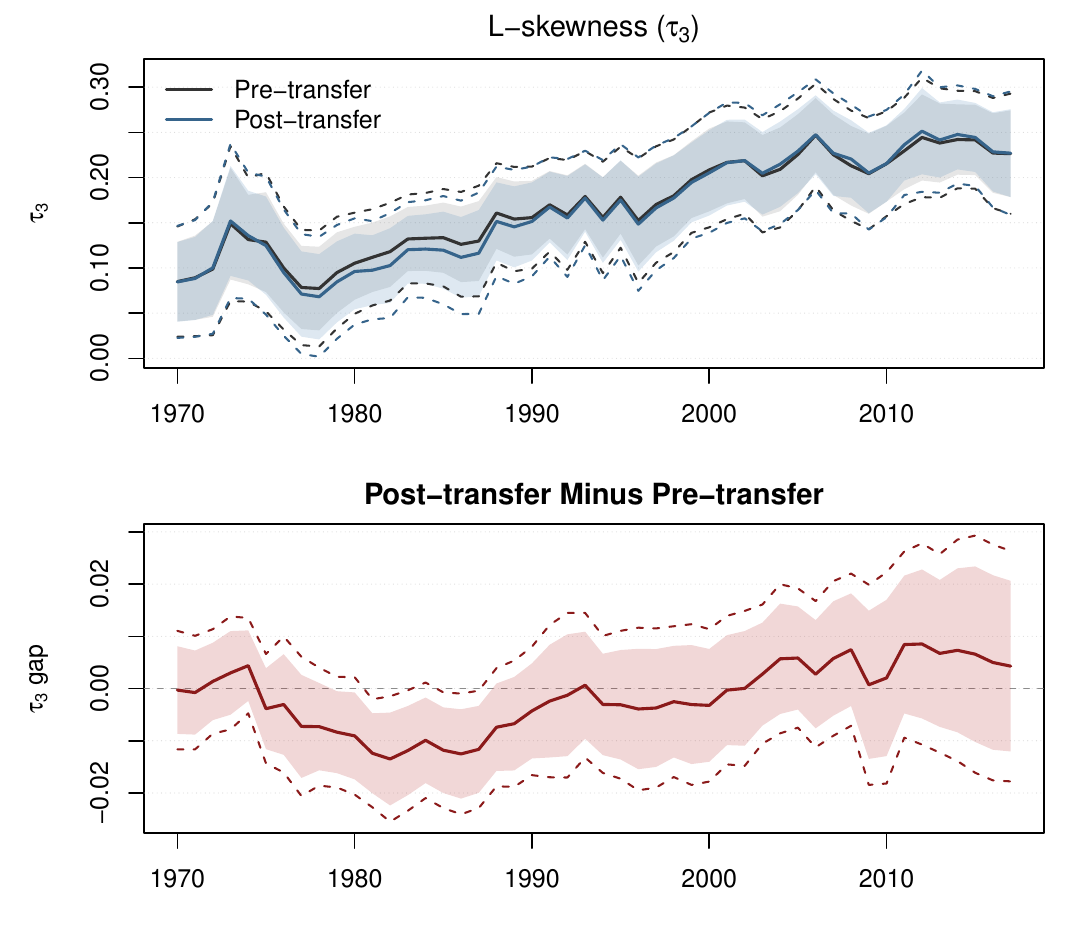}
\caption{L-skewness and transfer-induced gaps.}
\label{fig:park_shin_l3_gap}
\end{subfigure}
\label{fig:park_shin_gaps}
\figurenote{Panel (a) plots the pre-transfer and post-transfer cross-county Gini series in the upper sub-panel and post-transfer minus pre-transfer Gini in the lower sub-panel. Panel (b) plots the analogous L-skewness series and gap. Shaded bands are pointwise 95\% intervals using state-clustered empirical first-order projections; dashed bands are simultaneous 95\% sup-$t$ bands over calendar years, following the same sup-$t$ principle as in \citet{li_liao_2020_uniform_nonparametric}.}
\end{figure}

The pre-transfer cross-county Gini rises from $0.1495$ in 1970 to $0.1841$ in 2017. By contrast, the post-transfer Gini remains nearly flat, moving from $0.1310$ to $0.1308$ over the same period. The post-transfer gap is already negative in 1970, $-0.0185$, and widens to $-0.0539$ by 2017, with a state-clustered standard error of $0.0029$. L-skewness gives the order-3 counterpart on a scale-free metric, but the signal is much weaker: the post-transfer gap is essentially zero in 1970, at about $-0.0003$, equals about $0.0043$ in 2017, and has a 2017 state-clustered standard error of about $0.0083$. Transfers therefore clearly compress cross-county inequality, whereas the evidence for a systematic effect on right-tail asymmetry is weak once the third L-moment is standardised. This is consistent with the substantive conclusion of \citet{park_shin_2023_county_incomes} that transfers mitigate rising county-income inequality, while adding state-clustered uncertainty for annual $U$-statistic summaries. The simultaneous bands sharpen two features of the Gini figure. First, the zero-effect path is rejected uniformly: the upper dashed band for $\Delta_t$ is below zero in every year, with its maximum equal to $-0.0146$. Second, the transfer effect strengthens over time rather than remaining merely negative throughout the sample.

The L-skewness figure is more cautious at the order-3 level: the pre-transfer and post-transfer paths remain close, the gap changes sign over the sample, and the simultaneous bands do not support a uniform asymmetry effect analogous to the Gini result. Taken together, the two figures show that transfers compress the cross-county distribution, while the evidence for a systematic effect on distributional asymmetry is weaker once uncertainty is evaluated under arbitrary within-state dependence.

\subsection{Rank correlations in standardised financial returns}\label{Sect:Empirical_Tong_Hansen}

\citet{tong_hansen_2026_dynamic_factor_correlations} studies dynamic factor correlations in daily standardised returns for US equities and factor portfolios. That empirical design is well suited to the weak-dependence side of the paper. The observations form a long daily time series, while the substantive object is dependence among factors, sectors, and stock returns. The application turns from linear dependence to rank dependence and again pairs an order-2 statistic with an order-3 statistic under the same sampling scheme. Kendall's tau gives the order-2 benchmark related to \citet{dehling2017testing}, while Spearman's rho provides the higher-order counterpart through its order-3 $U$-statistic representation.

The data in \citet{tong_hansen_2026_dynamic_factor_correlations} contain $4,278$ daily observations from January 3, 2007 to December 29, 2023. The variables used here are the AR(1)-EGARCH standardised returns from those replication files for the Fama--French five factors, the momentum factor, nine sector SPDR ETF factors, and $323$ individual stocks. The cited paper models dynamic linear correlation and sparse idiosyncratic correlation matrices. The exercise instead tracks rank dependence and equips it with HAC inference under weak serial dependence.

For the rank-correlation extension, let $Z_t=(X_t,Y_t)$ denote a bivariate standardised-return series, where $X_t$ is the market factor and $Y_t$ is a sector factor. Write
\[
    \tau_K:=\E[\psi_2(Z_1,Z_2)]
    \quad\text{with}\quad
    \psi_2(z_1,z_2):=\mathrm{sign}(x_1-x_2)\mathrm{sign}(y_1-y_2),
\]
for Kendall's tau, and
\[
    \rho_S:=\E[\psi_3(Z_1,Z_2,Z_3)]
    \quad\text{with}\quad
    \psi_3(z_1,z_2,z_3):=\frac12\sum_{\pi\in S_3}
    \mathrm{sign}(x_{\pi(1)}-x_{\pi(2)})
    \mathrm{sign}(y_{\pi(1)}-y_{\pi(3)}),
\]
for Spearman's rho, where \(S_3\) denotes the six permutations of \(\{1,2,3\}\). This is the symmetrised order-3 kernel associated with the basic non-symmetric term \(\mathrm{sign}(x_1-x_2)\mathrm{sign}(y_1-y_3)\). Under bounded conditional crossing densities for the relevant pairwise differences, \Cref{lem:sign_crossing_bound} verifies the expected-H\"older branch with \(\alpha=1/2\) for these rank kernels. For both statistics, standard errors come from the empirical first-order projection and a Bartlett HAC long-run variance estimator. For an order-$k$ statistic, \(\widehat{\mathrm{Var}}(U_{n,k})=k^2\hat\sigma_n^2/n\), so \(\mathrm{SE}(U_{n,k})=k\hat\sigma_n/\sqrt n\).\footnote{Throughout the paper, HAC implementations use the Bartlett weight function \(\kappa(x)=(1-|x|)\indicator{|x|\leq1}\) with bandwidth \(h_n=\lfloor n^{1/4}\rfloor+1\).}

\begin{table}[h!]
\centering
\caption{Market-sector rank dependence in Tong--Hansen standardised returns, 2007--2023.}
\label{tab:tong_hansen_rank_hac}
\small
\begin{tabular}{lrrrrr}
\toprule
Sector & Pearson & Kendall $\hat\tau$ & SE & Spearman $\hat\rho_3$ & SE \\
\midrule
XLY & 0.8896 & 0.7032 & 0.0064 & 0.8781 & 0.0053 \\
XLK & 0.8906 & 0.7034 & 0.0067 & 0.8754 & 0.0057 \\
XLI & 0.8779 & 0.6831 & 0.0078 & 0.8589 & 0.0068 \\
XLF & 0.8383 & 0.6421 & 0.0086 & 0.8225 & 0.0083 \\
XLB & 0.8242 & 0.6206 & 0.0078 & 0.8069 & 0.0076 \\
XLV & 0.7661 & 0.5548 & 0.0097 & 0.7379 & 0.0106 \\
XLE & 0.6664 & 0.4787 & 0.0116 & 0.6560 & 0.0138 \\
XLP & 0.6769 & 0.4705 & 0.0108 & 0.6457 & 0.0129 \\
XLU & 0.4861 & 0.3179 & 0.0128 & 0.4528 & 0.0172 \\
\bottomrule
\end{tabular}

\tablenote{Each sector series is paired with the market factor \( \mathrm{MKT}\text{-}\mathrm{RF}_t \). Standard errors use empirical first-order projections and Bartlett HAC long-run variances with bandwidth \(h_n=\lfloor n^{1/4}\rfloor+1\).}
\end{table}

\Cref{tab:tong_hansen_rank_hac} shows that rank dependence is strongest between the market factor and Consumer Discretionary (XLY), Information Technology (XLK), Industrials (XLI), Financials (XLF), and Materials (XLB). The weakest market-sector link is Utilities (XLU), but it remains strongly positive. The Spearman order-3 estimates are numerically almost identical to conventional rank Spearman correlations, with a maximum full-sample gap of about $0.00013$ across the market-factor pairs, confirming that the higher-order representation recovers the familiar rank-correlation object. The empirical payoff is feasible HAC inference for rank-based $U$-statistic dependence measures in a weakly dependent daily time series, now for an order-3 statistic alongside the order-2 Kendall benchmark treated by \citet{dehling2017testing}.

\section{Monte Carlo Simulations}\label{Sect:Simulations}
This section reports two Monte Carlo studies calibrated to the empirical applications in \Cref{Sect:Empirical_Gini,Sect:Empirical_Tong_Hansen}. The clustered arm uses the order-2 Gini mean difference and the order-3 third L-moment, while the weak-dependence arm uses the order-2 Kendall statistic and the order-3 Spearman statistic. In each case, the goal is the same: to assess the corresponding feasible variance estimator through finite-sample dispersion, standard-error calibration, and test size. Both simulation arms use \(2000\) replications. For each design, the reported diagnostics are the normalised median absolute deviation (MAD, or MAD/IQR in the clustered arm), the ratio of the feasible standard error to the Monte Carlo standard deviation (SE/SD), and the empirical \(5\%\) rejection frequency.

\subsection{Clustered sampling: Gini mean difference and the third L-moment}\label{Sect:Sim_Cluster}
The clustered simulation uses the same order-2/order-3 kernel pair as the county-income application. The order-2 Gini mean-difference kernel is centred as
\[
    \varphi_2(w_1,w_2):=|w_1-w_2|-\theta_{\mathrm{GMD}},\qquad \theta_{\mathrm{GMD}}:=\E|W_1-W_2|,
\]
and the order-3 third L-moment kernel is centred as
\[
    \varphi_3(w_1,w_2,w_3):=\frac13\{\max(w_1,w_2,w_3)-2\operatorname{med}(w_1,w_2,w_3)+\min(w_1,w_2,w_3)\}-\theta_{L3},
\]
where \(\theta_{L3}:=\E[\{\max(W_1,W_2,W_3)-2\operatorname{med}(W_1,W_2,W_3)+\min(W_1,W_2,W_3)\}/3]\), with the \(W_a\)'s independent draws from the calibrated marginal law. The DGP is calibrated to the county-income application. Cluster sizes are chosen deterministically from the empirical cluster-size distribution subject to the growth restriction on \(m_n\) in \Cref{ass:clus_het}, and the outcome is generated by applying the empirical quantile function to
\(
    \Phi\{\sqrt\rho\,Z_{g(i)}+\sqrt{1-\rho}\,\epsilon_i\},\quad Z_g,\epsilon_i\overset{\mathrm{iid}}\sim\mathcal N(0,1),
\)
where \(\Phi\) is the standard-normal distribution function. Thus \(\rho\) tunes within-cluster dependence while preserving the calibrated marginal distribution. The resulting clusters are heterogeneous and unbalanced, chosen to respect the growth pattern in \Cref{ass:clus_het}.

The design parameters are \(G_n\in\{200,400,600\}\) and \(\rho\in\{0,0.25,0.5\}\), with the implied sample sizes chosen by the calibrated cluster-size rule. The population targets are \(\theta_{\mathrm{GMD}}\) and \(\theta_{L3}\), computed exactly under the calibrated marginal distribution. Because each kernel is centred at its target, the object under study is the centred \(U\)-statistic, its cluster-robust standard error, and the associated cluster-robust \(t\)-statistic for testing the known null value \(0\). To show what is lost by ignoring the sampling scheme, \Cref{tab:sim_cluster_coverage} also reports the corresponding iid standard-error ratio and iid rejection rate alongside the cluster-robust diagnostics.

\begin{table}[h!]
\centering
\caption{Clustered arm: finite-sample behaviour of the cluster-robust \(t\)-statistic.}
\label{tab:sim_cluster_coverage}
\scriptsize
\resizebox{\textwidth}{!}{%
\begin{tabular}{lrrrrrrrrrrr}
\toprule
\multicolumn{2}{c}{Design} & \multicolumn{5}{c}{GMD} & \multicolumn{5}{c}{Third L-moment} \\
\cmidrule(lr){1-2}\cmidrule(lr){3-7}\cmidrule(lr){8-12}
$(G_n,m_n^2/n)$ & $\rho$ & $\frac{\mathrm{MAD}}{\mathrm{IQR}}$ & $\frac{\mathrm{SE}}{\mathrm{SD}}_{\mathrm{rob}}$ & 5\%$_{\mathrm{rob}}$ & $\frac{\mathrm{SE}}{\mathrm{SD}}_{\mathrm{iid}}$ & 5\%$_{\mathrm{iid}}$ & $\frac{\mathrm{MAD}}{\mathrm{IQR}}$ & $\frac{\mathrm{SE}}{\mathrm{SD}}_{\mathrm{rob}}$ & 5\%$_{\mathrm{rob}}$ & $\frac{\mathrm{SE}}{\mathrm{SD}}_{\mathrm{iid}}$ & 5\%$_{\mathrm{iid}}$ \\
\midrule
$(200,0.097)$ & 0.00 & 0.036 & 0.973 & 0.076 & 0.975 & 0.075 & 0.014 & 0.952 & 0.098 & 0.954 & 0.097 \\
 & 0.25 & 0.042 & 0.929 & 0.091 & 0.867 & 0.104 & 0.016 & 0.900 & 0.118 & 0.875 & 0.117 \\
 & 0.50 & 0.046 & 0.965 & 0.081 & 0.792 & 0.133 & 0.016 & 0.912 & 0.108 & 0.808 & 0.135 \\
\midrule
$(400,0.070)$ & 0.00 & 0.026 & 0.992 & 0.051 & 0.995 & 0.051 & 0.010 & 0.983 & 0.066 & 0.986 & 0.065 \\
 & 0.25 & 0.027 & 0.950 & 0.074 & 0.869 & 0.100 & 0.010 & 0.943 & 0.086 & 0.907 & 0.091 \\
 & 0.50 & 0.032 & 0.965 & 0.079 & 0.754 & 0.145 & 0.012 & 0.958 & 0.089 & 0.816 & 0.129 \\
\midrule
$(600,0.054)$ & 0.00 & 0.019 & 1.019 & 0.054 & 1.020 & 0.054 & 0.007 & 1.016 & 0.063 & 1.018 & 0.064 \\
 & 0.25 & 0.022 & 0.981 & 0.058 & 0.888 & 0.083 & 0.008 & 0.969 & 0.066 & 0.923 & 0.075 \\
 & 0.50 & 0.026 & 0.965 & 0.064 & 0.738 & 0.155 & 0.010 & 0.945 & 0.081 & 0.790 & 0.132 \\
\bottomrule
\end{tabular}
}
\tablenote{The design tuple is \((G_n,m_n^2/n)\), where \(G_n\) is the number of clusters, \(n\) is the implied sample size, and \(m_n=\max_{g\leq G_n}n_g\) is the maximum cluster size. The three designs have implied \((n,m_n)=(507,7),(1149,9),(1852,10)\). GMD is the centred Gini mean difference (order 2) and the third L-moment is the centred order-3 kernel described in \Cref{Sect:Empirical_Gini}. \(\rho\) is the one-factor within-cluster dependence parameter. ``MAD/IQR'' is the Monte Carlo median absolute deviation of the centred statistic across replications, divided by the population interquartile range of the empirical county-income calibration, \(\mathrm{IQR}(W)=11{,}587.154\); original-unit MAD values are recovered by multiplying by this IQR. ``SE/SD'' is the ratio of the mean feasible standard error to the Monte Carlo standard deviation, with subscripts ``rob'' and ``iid'' denoting the cluster-robust and iid formulas. Each row uses \(2000\) replications; ``5\%'' is the empirical rejection rate of the corresponding two-sided \(5\%\) \(t\)-test.}
\end{table}

\Cref{tab:sim_cluster_coverage} shows a clear separation between the robust and iid procedures once within-cluster dependence is present. Under the cluster-robust formula, the SE/SD ratios remain close to one, ranging from \(0.929\) to \(1.019\) for the Gini mean difference and from \(0.900\) to \(1.016\) for the third L-moment, while the corresponding \(5\%\) rejection rates improve as \(G_n\) increases and \(m_n^2/n\) declines. Even under \(\rho=0\), where the robust and iid procedures target the same population variance, finite-sample differences remain because the robust estimator is cluster-aggregated. The distortion is most visible for the order-3 L-moment at \(G_n=200\), where the cluster-robust rejection rate is \(9.8\%\), then falls to \(6.6\%\) and \(6.3\%\) as \(G_n\) increases. The iid comparator looks similar only when \(\rho=0\). Once \(\rho>0\), it understates dispersion, with SE/SD ratios falling as low as \(0.738\) for the Gini mean difference and \(0.790\) for the third L-moment, and its rejection rates rise to about \(16\%\). As expected, the order-3 statistic is the more demanding object, but the cluster-robust procedure remains well behaved across the calibrated designs.

\subsection{Near-epoch dependence: Kendall and Spearman rank correlations}\label{Sect:Sim_NED}
The weak-dependence simulation uses the same order-2/order-3 rank-correlation pair as the Tong--Hansen application. Kendall's tau is the order-2 statistic, while Spearman's rho is computed through its order-3 \(U\)-statistic representation.

For each replication, a bivariate stationary Gaussian AR(1) process \(W_t:=(X_t,Y_t)\) is generated through
\(
    X_t=\rho X_{t-1}+\sqrt{1-\rho^2}\,u_t,\quad
    Y_t=\rho Y_{t-1}+\sqrt{1-\rho^2}\,v_t,
\)
where \((u_t,v_t)\) are i.i.d.\ bivariate normal innovations with correlation \(\alpha\). Thus \(\rho\) tunes serial dependence here, in direct parallel with the clustered design. The dependence levels are calibrated to the full-sample Tong--Hansen market-sector estimates from \Cref{Sect:Empirical_Tong_Hansen}: high dependence uses MKT--XLK, middle dependence uses MKT--XLE, and low dependence uses MKT--XLU. The design parameters are \(n\in\{400,800,1200\}\) and \(\rho\in\{0,0.25,0.50\}\). The null value being tested is the known Gaussian-copula population rank correlation,
\[
    \tau(\alpha)=\frac{2}{\pi}\arcsin(\alpha),\qquad
    \rho_S(\alpha)=\frac{6}{\pi}\arcsin(\alpha/2),
\]
so the reported rejection rates are size checks for the HAC-based Wald statistic rather than power calculations against independence. As in the clustered arm, \Cref{tab:sim_ned_rank} also includes an iid comparator to show the effect of ignoring serial dependence.

\begin{table}[H]
\centering
\caption{NED rank-correlation arm: finite-sample behaviour of the HAC Wald statistic.}
\label{tab:sim_ned_rank}
\scriptsize
\begin{tabular}{lrrrrrrrrrrrr}
\toprule
 &  &  &  & \multicolumn{4}{c}{Kendall} & \multicolumn{4}{c}{Spearman} \\
\cmidrule(lr){5-8}\cmidrule(lr){9-12}
$n$ & Scenario & $\rho$ & $\alpha$ & $\mathrm{MAD}$ & $\frac{\mathrm{SE}}{\mathrm{SD}}_{\mathrm{rob}}$ & 5\%$_{\mathrm{rob}}$ & 5\%$_{\mathrm{iid}}$ & $\mathrm{MAD}$ & $\frac{\mathrm{SE}}{\mathrm{SD}}_{\mathrm{rob}}$ & 5\%$_{\mathrm{rob}}$ & 5\%$_{\mathrm{iid}}$ \\
\midrule
400 & High & 0.00 & 0.891 & 0.0101 & 1.0148 & 0.060 & 0.053 & 0.0083 & 1.0084 & 0.065 & 0.059 \\
 & High & 0.25 & 0.891 & 0.0113 & 1.0094 & 0.049 & 0.059 & 0.0089 & 0.9956 & 0.062 & 0.068 \\
 & High & 0.50 & 0.891 & 0.0134 & 0.9527 & 0.069 & 0.111 & 0.0107 & 0.9407 & 0.073 & 0.114 \\
 & Middle & 0.00 & 0.666 & 0.0173 & 0.9925 & 0.051 & 0.047 & 0.0209 & 0.9897 & 0.053 & 0.050 \\
 & Middle & 0.25 & 0.666 & 0.0185 & 0.9631 & 0.067 & 0.070 & 0.0225 & 0.9576 & 0.070 & 0.070 \\
 & Middle & 0.50 & 0.666 & 0.0213 & 0.9435 & 0.071 & 0.130 & 0.0259 & 0.9451 & 0.072 & 0.128 \\
 & Low & 0.00 & 0.486 & 0.0196 & 0.9767 & 0.062 & 0.059 & 0.0270 & 0.9715 & 0.068 & 0.065 \\
 & Low & 0.25 & 0.486 & 0.0214 & 0.9720 & 0.059 & 0.065 & 0.0290 & 0.9672 & 0.062 & 0.073 \\
 & Low & 0.50 & 0.486 & 0.0253 & 0.9366 & 0.075 & 0.132 & 0.0349 & 0.9309 & 0.078 & 0.127 \\
\midrule
800 & High & 0.00 & 0.891 & 0.0080 & 0.9782 & 0.058 & 0.055 & 0.0064 & 0.9737 & 0.062 & 0.059 \\
 & High & 0.25 & 0.891 & 0.0077 & 1.0106 & 0.050 & 0.055 & 0.0063 & 0.9992 & 0.048 & 0.051 \\
 & High & 0.50 & 0.891 & 0.0093 & 0.9545 & 0.069 & 0.114 & 0.0073 & 0.9471 & 0.070 & 0.116 \\
 & Middle & 0.00 & 0.666 & 0.0124 & 1.0039 & 0.045 & 0.047 & 0.0152 & 0.9965 & 0.051 & 0.049 \\
 & Middle & 0.25 & 0.666 & 0.0128 & 1.0045 & 0.048 & 0.056 & 0.0156 & 1.0035 & 0.048 & 0.057 \\
 & Middle & 0.50 & 0.666 & 0.0151 & 0.9586 & 0.066 & 0.121 & 0.0183 & 0.9588 & 0.071 & 0.116 \\
 & Low & 0.00 & 0.486 & 0.0142 & 0.9974 & 0.054 & 0.055 & 0.0193 & 0.9960 & 0.058 & 0.058 \\
 & Low & 0.25 & 0.486 & 0.0152 & 0.9814 & 0.060 & 0.067 & 0.0206 & 0.9835 & 0.059 & 0.068 \\
 & Low & 0.50 & 0.486 & 0.0184 & 0.9451 & 0.061 & 0.119 & 0.0253 & 0.9423 & 0.060 & 0.123 \\
\midrule
1200 & High & 0.00 & 0.891 & 0.0062 & 0.9812 & 0.057 & 0.052 & 0.0050 & 0.9797 & 0.058 & 0.057 \\
 & High & 0.25 & 0.891 & 0.0063 & 0.9955 & 0.056 & 0.066 & 0.0050 & 0.9916 & 0.061 & 0.072 \\
 & High & 0.50 & 0.891 & 0.0076 & 0.9370 & 0.060 & 0.130 & 0.0060 & 0.9383 & 0.061 & 0.120 \\
 & Middle & 0.00 & 0.666 & 0.0103 & 0.9901 & 0.049 & 0.047 & 0.0125 & 0.9914 & 0.049 & 0.049 \\
 & Middle & 0.25 & 0.666 & 0.0102 & 0.9984 & 0.054 & 0.061 & 0.0122 & 0.9933 & 0.054 & 0.064 \\
 & Middle & 0.50 & 0.666 & 0.0129 & 0.9320 & 0.067 & 0.124 & 0.0155 & 0.9309 & 0.070 & 0.126 \\
 & Low & 0.00 & 0.486 & 0.0118 & 0.9839 & 0.057 & 0.056 & 0.0161 & 0.9822 & 0.056 & 0.053 \\
 & Low & 0.25 & 0.486 & 0.0120 & 0.9986 & 0.049 & 0.059 & 0.0164 & 0.9984 & 0.050 & 0.060 \\
 & Low & 0.50 & 0.486 & 0.0147 & 0.9470 & 0.065 & 0.121 & 0.0201 & 0.9476 & 0.065 & 0.117 \\
\bottomrule
\end{tabular}

\tablenote{Kendall is Kendall's tau (order 2) and Spearman is Spearman's rho (order 3). High, middle, and low dependence use MKT--XLK, MKT--XLE, and MKT--XLU, respectively; $\rho$ is the AR(1) persistence parameter and $\alpha$ is the calibrated contemporaneous correlation. Each row uses $2000$ replications; ``MAD'' is the Monte Carlo median absolute deviation of the centred statistic across replications, ``SE/SD'' is the ratio of the mean feasible standard error to the Monte Carlo standard deviation, and ``5\%'' is the empirical rejection rate of the corresponding two-sided $5\%$ Wald test. Subscripts ``rob'' and ``iid'' denote the HAC and iid formulas.}
\end{table}

\Cref{tab:sim_ned_rank} tells the same story for weak dependence. The HAC estimator remains well calibrated across dependence levels, with SE/SD$_{\mathrm{rob}}$ ratios between \(0.932\) and \(1.015\) for Kendall and between \(0.931\) and \(1.008\) for Spearman. The robust \(5\%\) rejection rates mostly stay in the \(0.045\) to \(0.078\) range, whereas the iid comparator again breaks down in the persistent designs: once \(\rho=0.5\), iid rejection rates move into the \(0.111\) to \(0.132\) range at \(n=400\) and remain clearly oversized even at \(n=1200\). Dispersion rises, as expected, in the smaller and more persistent designs, with the order-3 Spearman statistic again the more demanding object, but the HAC procedure remains stable across the calibrated designs.

\section{Conclusion}\label{Sect:Conclusion}

This paper develops a unified asymptotic theory and feasible inference framework for order-$k$ $U$-statistics under clustered and weakly dependent sampling. The limit theory delivers asymptotically linear representations, weak laws of large numbers, and central limit theorems for clustered, exact $m$-dependent, and near-epoch-dependent designs. Feasible inference is then developed through cluster-robust and HAC covariance estimators. This makes inference feasible for parameters of economic interest often cast as U-statistics, such as Gini coefficients and L-skewness parameters, under clustered and weakly dependent sampling schemes. The central organising idea is simple: after partitioning the sample into columns, the difficult non-linear remainder can be analysed through sampling-generic tuples of mutually independent units across the relevant vertical partitions, while same-cluster and lag collisions are counted and controlled explicitly. That step is common across sampling schemes; what remains sampling scheme-specific is the limit theory for the first-order projection and the way its covariance is estimated. The empirical and simulation results illustrate that this separation leads to usable cluster-robust and HAC procedures for both order-2 and higher-order $U$-statistics.

\vspace{0.35cm}

\noindent \texttt{Replication files:}
The replication package, including code, data, and output files, is available on the author's website.

\noindent \texttt{Declaration of AI use:}
During the preparation of this manuscript, the author used OpenAI's ChatGPT and Codex for research assistance. The author reviewed and edited all AI-assisted output and takes full responsibility for the content of the manuscript.

\printbibliography
\end{refsection}

\newpage
\setcounter{page}{1}
\begin{refsection}
	\appendix
	\renewcommand{\thetable}{S.\arabic{table}}
	\renewcommand{\thefigure}{S.\arabic{figure}}
	\renewcommand{\thesection}{S.\arabic{section}}
	\setcounter{equation}{0}
	\renewcommand{\theequation}{S.\arabic{equation}}
	
	\begin{center}
		\Large{\bf Supplementary Material for \\
        Limit Theory for U-Statistics under Clustered and Weakly Dependent Data}
	\end{center}
	\begin{center}
		Emmanuel Selorm Tsyawo
	\end{center}

The proofs follow their dependency order rather than the order of presentation in the main text. The appendix first presents the unified linked-tuple machinery in \Cref{app:linked_tuple_machinery}: \Cref{app:vertical_arrangement} gives the vertical partitioning device, and \Cref{app:abstract_linked_tuples} gives the linked-tuple formulation used by the clustered and exact $m$-dependent proofs. The clustered results are collected in \Cref{app:clustered_limit_proofs}, including the projection-transfer lemma first needed there. The exact $m$-dependent and NED results are collected in \Cref{app:weak_dependence_limit_proofs}, with the NED approximation bounds stated in \Cref{app:ned}. Feasible-inference proofs then follow in \Cref{app:inference_proofs}.

\section{Unified Linked-Tuple Machinery}\label{app:linked_tuple_machinery}

\subsection{Vertical partitioning}\label{app:vertical_arrangement}
A theoretically useful device for the proof is a partition of the index set into vertical columns. Let \(1\leq M_n\leq n\), set \(s_n:=\ceil{n/M_n}\), and define \(\kappa_n:=n-M_n(s_n-1)\). Arrange the ordered observations in \(s_n\) rows and \(M_n\) columns as follows:

\begin{align}\label{eqn:dat_arrange}
    \begin{bmatrix}
W_1 & \ldots & W_{\kappa_n} & W_{{\kappa_n}+1} & \ldots & W_{M_n}\\
\vdots & \ddots & \vdots & \vdots & \ddots & \vdots \\
W_{{M_n}(s_n-2)+1} & \ldots & W_{{M_n}(s_n-2)+{\kappa_n}} & W_{{M_n}(s_n-2)+{\kappa_n}+1} & \ldots &W_{{M_n}(s_n-1)} \\
W_{{M_n}(s_n-1)+1} & \ldots & W_n & \varnothing &\ldots& \varnothing
\end{bmatrix}
\end{align}
The symbols \( \varnothing \) are visual placeholders for empty slots and are not observations. The display partitions the observed indices into vertical sets \(\mathcal M_n(1),\ldots,\mathcal M_n(M_n)\), namely
\begin{align*}
\overbrace{
\begin{bmatrix}
        W_1 \\ W_{M_n+1} \\ \vdots \\ W_{M_n(s_n-2)+1} \\ W_{M_n(s_n-1)+1}
    \end{bmatrix}
    }^{\mathcal{M}_n(1)};\ \ldots
\overbrace{
    \begin{bmatrix}
        W_{\kappa_n} \\ W_{M_n+{\kappa_n}} \\ \vdots \\ W_{M_n(s_n-2)+{\kappa_n}} \\ W_n
    \end{bmatrix}}^{\mathcal{M}_n(\kappa_n)}; \
\overbrace{
    \begin{bmatrix}
        W_{{\kappa_n}+1}\\ W_{M_n+{\kappa_n}+1} \\ \vdots \\ W_{M_n(s_n-2)+{\kappa_n}+1}\\ \varnothing
    \end{bmatrix}}^{\mathcal{M}_n(\kappa_n+1)}; \ \ldots;
\overbrace{
    \begin{bmatrix}
        W_{M_n} \\ W_{2M_n} \\ \vdots \\ W_{M_n(s_n-1)} \\ \varnothing
    \end{bmatrix}}^{\mathcal{M}_n(M_n)}
\end{align*}
\noindent after ignoring any \( \varnothing \) placeholder. Denote the corresponding numbers of observed units by \(N_\ell:=\#\mathcal M_n(\ell)\), \(\ell\in[M_n]\). Then \(N_\ell=s_n\) if \(\ell\leq\kappa_n\) and \(N_\ell=s_n-1\) otherwise. By construction, \(n=\sum_{\ell=1}^{M_n}N_\ell\), \(M_n/n\leq1\), and \(s_n\leq2n/M_n\).

For clustered sampling, order clusters in decreasing order of size and take \(M_n=m_n:=\max_{g\in[G]}n_g\). This choice ensures that each vertical partition contains at most one observation from any cluster, so observations within a vertical partition are mutually independent under \Cref{ass:sampling}. Moreover, each observation has at most one same-cluster counterpart in any other vertical partition. For exact \(m\)-dependence, take \(M_n=m+1\) and use the residue classes modulo \(m+1\) as the vertical partitions. Observations within a vertical partition are then mutually independent, and an \(m\)-lag neighbourhood can meet any other vertical partition at most twice.

The above vertical partitioning is a proof device only; it does not alter the \(U\)-statistic, but exposes vertical partitions of mutually independent observations. This is useful in rendering the projection and CLT arguments below tractable.

\subsection{Unified linked-tuple formulation}\label{app:abstract_linked_tuples}
This subsection records the common order-$k$ decomposition used by the clustered and exact $m$-dependent sampling schemes below. The formulation is deliberately scheme-neutral: observations are arranged into vertical partitions, and a link relation records which pairs may be dependent. In clustered sampling, links mean shared cluster membership; under exact $m$-dependence, links mean being within $m$ lags. The unified formulation keeps the two features needed for the decomposition: observations in a vertical partition are mutually independent, while linked observations may be arbitrarily dependent. Scheme-specific proofs only need to specify the vertical partitions and the link relation, substitute the resulting counts, and then handle the first-order projection limit.

\paragraph{Tuple notation.}
For an ordered tuple \(\mathbf i=(i_1,\ldots,i_k)\in[N]^k\), \(\mathbf i_{\neq}\) means \(i_a\neq i_b\) for all \(a\neq b\), and such tuples are index-generic. Scheme-specific genericity is imposed through the link relation below: under clustered sampling, linked pairs share a cluster; under exact \(m\)-dependence, linked pairs lie within \(m\) lags.

\paragraph{Unified vertical-link setup.}
For each \(n\), let \(\{\mathcal M_n(\ell):1\leq\ell\leq M_n\}\) be a vertical partition of \([n]\), and write \(N_\ell:=\#\mathcal M_n(\ell)\leq s_n:=\ceil{n/M_n}\), with \(M_n/n\leq1\) as in the construction of \Cref{app:vertical_arrangement}. The observations in each \(\mathcal M_n(\ell)\) are mutually independent. Let \(\sim_n\) be a symmetric relation on \([n]\). For a finite set \(A\subset[n]\), write \(\mathcal F_A:=\sigma(W_i:i\in A)\). The relation is assumed to encode possible dependence in the following sense: if \(A_1,\ldots,A_q\) are disjoint finite subsets of \([n]\), and no element of \(A_p\) is linked to any element of \(A_{p'}\) whenever \(p\neq p'\), then \(\mathcal F_{A_1},\ldots,\mathcal F_{A_q}\) are mutually independent. Linked observations may otherwise be arbitrarily dependent. The two constructions used below satisfy the column-link count
\begin{equation}\label{eq:abstract_column_link_count}
    \#\{j\in\mathcal M_n(\ell):j\sim_n i\}\leq 2
    \qquad\text{for all }i\in[n]\text{ and }1\leq\ell\leq M_n.
\end{equation}
For clustered data, the vertical partitioning with \(M_n=m_n\) gives \(N_\ell\leq s_n\) and the sharper bound one in \eqref{eq:abstract_column_link_count}. For exact \(m\)-dependence, the residue classes modulo \(m+1\) give \(M_n=m+1\), and an \(m\)-lag window can meet any residue column at most once on each side.

Throughout this subsection write \(\mathbf i=(i_1,\ldots,i_k)\). An ordered distinct-index tuple is called \emph{linked-generic} if \(i_a\not\sim_n i_b\) for all \(a\neq b\). Otherwise it is \emph{non-generic}. For a linked-generic tuple, \(W_{i_1},\ldots,W_{i_k}\) are mutually independent by the preceding independence condition.

\paragraph{Projection notation.}
For a linked-generic tuple \(\mathbf i\) and \(S\subseteq[k]\), write \(c=|S|\), \(\mathbf i_S=(i_a:a\in S)\), and \(W_{\mathbf i_S}=(W_{i_a}:a\in S)\). Define
\[
    \pi_S^{\mathbf i}(w_S)
    :=
    \E[\varphi(W_{i_1},\ldots,W_{i_k})\mid W_{i_a}=w_a,\ a\in S],
    \qquad
    \pi_\emptyset^{\mathbf i}=0,
\]
where the kernel is centred. The tuple-specific Hoeffding projection is
\[
    \varphi_{\mathbf i,S}^{(c)}(w_S)
    :=
    \sum_{S'\subseteq S}(-1)^{|S\setminus S'|}\pi_{S'}^{\mathbf i}(w_{S'}).
\]
Hence, for every linked-generic tuple,
\begin{equation}\label{eq:abstract_hoeffding_reconstruction}
    \varphi(W_{i_1},\ldots,W_{i_k})
    =
    \sum_{S\subseteq[k]}\varphi_{\mathbf i,S}^{(|S|)}(W_{\mathbf i_S})
    \quad\text{a.s.},
\end{equation}
and, for every non-empty \(S\subseteq[k]\) and \(a\in S\),
\begin{equation}\label{eq:abstract_hoeffding_degeneracy}
    \E\!\left[
    \varphi_{\mathbf i,S}^{(|S|)}(W_{\mathbf i_S})
    \mid W_{\mathbf i_{S\setminus\{a\}}}
    \right]=0.
\end{equation}

For \(i\in[n]\), let
\[
    \mathcal A_{n,k}^{\sim}(i)
    :=
    \{(i_2,\ldots,i_k)_{\neq}:(i,i_2,\ldots,i_k)\text{ is linked-generic}\},
    \qquad
    Q_{n,i}^{\sim}:=\#\mathcal A_{n,k}^{\sim}(i).
\]
In the two dictionaries used below, \(Q_{n,i}^{\sim}>0\) for all sufficiently large \(n\). Define the linked-generic first-order average, for such \(n\), by
\[
    \bar\varphi_{n,k,i}^{\sim}(W_i)
    :=
    \frac{1}{Q_{n,i}^{\sim}}
    \sum_{\mathbf j\in\mathcal A_{n,k}^{\sim}(i)}
    \varphi_{\mathbf j}^{(1)}(W_i),
\]
where \(\varphi_{\mathbf j}^{(1)}(W_i)=\E[\varphi(W_i,W_{j_2},\ldots,W_{j_k})\mid W_i]\). Let \(R_{n,i}^{\sim}:=\Perm{n-1}{k-1}-Q_{n,i}^{\sim}\).

\begin{lemma}[Unified order-$k$ linked-tuple decomposition]\label{lem:abstract_linked_orderk_decomposition}
Let \( \displaystyle S_{n,k}:=\sum_{\mathbf i_{\neq}}\varphi(W_{i_1},\ldots,W_{i_k})\) under the unified vertical-link setup of the preceding paragraph.
\begin{align}\label{eq:abstract_linked_orderk_decomp}
    S_{n,k}
    &=
    \underbrace{k\Perm{n-1}{k-1}\sum_{i=1}^n\bar\varphi_{n,k,i}^{\sim}(W_i)}_{L_{n,k}^{\sim}:\ \text{linear projection}}
    \;-\;
    \underbrace{k\sum_{i=1}^nR_{n,i}^{\sim}\bar\varphi_{n,k,i}^{\sim}(W_i)}_{\mathcal C_{n,k}^{\sim}:\ \text{projection-count correction}}
    \notag\\
    &\quad
    +\underbrace{
    \sum_{\mathbf i_{\neq}\ \mathrm{linked\mbox{-}generic}}
    \sum_{\substack{S\subseteq[k]\\ |S|\ge2}}
    \varphi_{\mathbf i,S}^{(|S|)}(W_{\mathbf i_S})}_{\mathcal R_{n,k}^{\sim}:\ \text{higher-order Hoeffding remainder}}
    +
    \underbrace{
    \sum_{\mathbf i_{\neq}\ \mathrm{non\mbox{-}generic}}
    \varphi(W_{i_1},\ldots,W_{i_k})}_{\mathcal H_{n,k}^{\sim}:\ \text{linked-collision remainder}} .
\end{align}
\end{lemma}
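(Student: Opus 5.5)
This identity is purely algebraic, with a probabilistic input only through the almost-sure Hoeffding reconstruction \eqref{eq:abstract_hoeffding_reconstruction}. I would prove it by splitting the ordered distinct-index sum $S_{n,k}$ into linked-generic and non-generic tuples. The non-generic part is, by definition, $\mathcal H_{n,k}^{\sim}$, so nothing further is needed there.

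For each linked-generic tuple $\mathbf i$, apply \eqref{eq:abstract_hoeffding_reconstruction}. Mutual independence of $W_{i_1},\ldots,W_{i_k}$ holds by the link-independence condition. Centring gives $\pi_\emptyset^{\mathbf i}=0$, so the $S=\emptyset$ term vanishes. The terms with $|S|\geq2$ summed over linked-generic tuples are exactly $\mathcal R_{n,k}^{\sim}$. What remains is the singleton part
\[
\sum_{\mathbf i_{\neq}\ \mathrm{linked\mbox{-}generic}}\sum_{a=1}^k\varphi^{(1)}_{\mathbf i,\{a\}}(W_{i_a}),
\qquad
\varphi^{(1)}_{\mathbf i,\{a\}}(w)=\pi_{\{a\}}^{\mathbf i}(w)=\E[\varphi(W_{\mathbf i})\mid W_{i_a}=w].
\]

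Next I would reindex this singleton part. For fixed $a$ and fixed $i_a=i$, the remaining coordinates range over ordered distinct tuples such that the whole tuple is linked-generic. Linked-genericity is a pairwise condition invariant under permuting coordinates, so this set is in bijection with $\mathcal A_{n,k}^{\sim}(i)$. By symmetry of $\varphi$, moving coordinate $a$ to the front gives $\pi_{\{a\}}^{\mathbf i}(W_i)=\varphi^{(1)}_{\mathbf j}(W_i)$ for the corresponding co-index tuple $\mathbf j$. Summing over $a$ therefore yields
\[
k\sum_{i=1}^n\sum_{\mathbf j\in\mathcal A_{n,k}^{\sim}(i)}\varphi^{(1)}_{\mathbf j}(W_i)=k\sum_{i=1}^nQ_{n,i}^{\sim}\,\bar\varphi^{\sim}_{n,k,i}(W_i).
\]

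Finally, write $Q_{n,i}^{\sim}=\Perm{n-1}{k-1}-R_{n,i}^{\sim}$, which splits this expression into $L_{n,k}^{\sim}-\mathcal C_{n,k}^{\sim}$.

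The only delicate points are bookkeeping, and I would handle them explicitly:
\begin{enumerate}[(i)]
\item The bijection and symmetry step in the reindexing must be justified, as above.
\item For $n$ with $Q_{n,i}^{\sim}=0$, the average $\bar\varphi^{\sim}_{n,k,i}$ is undefined, so I restrict to large $n$ as the setup stipulates; alternatively, the product $Q_{n,i}^{\sim}\bar\varphi^{\sim}_{n,k,i}$ can be read as the empty sum $0$.
\item The identity holds almost surely, inherited from \eqref{eq:abstract_hoeffding_reconstruction} over finitely many tuples.
\end{enumerate}
I expect no genuine obstacle. The degeneracy property \eqref{eq:abstract_hoeffding_degeneracy} and the column-link count \eqref{eq:abstract_column_link_count} are not needed here; they enter only later, when the remainders are bounded.
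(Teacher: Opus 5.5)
Your proposal is correct and follows essentially the same route as the paper. You split into linked-generic and non-generic tuples and apply the tuple-wise Hoeffding reconstruction. You then use permutation invariance of the linked-generic domain together with symmetry of $\varphi$ to collapse the singleton terms to $k\sum_i Q_{n,i}^{\sim}\bar\varphi_{n,k,i}^{\sim}(W_i)$, and substitute $Q_{n,i}^{\sim}=\Perm{n-1}{k-1}-R_{n,i}^{\sim}$. The paper additionally writes the sum over column tuples of the vertical partition first, but that is a pure reindexing with no role in this identity, and your explicit bijection $\mathbf i\mapsto\mathbf i_{-a}$ makes the paper's one-line symmetry step more transparent.
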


\begin{proof}
Using the unified vertical partition,
\begin{align*}
    S_{n,k}
    &=
    \sum_{\ell_1=1}^{M_n}\cdots\sum_{\ell_k=1}^{M_n}
    \sum_{\substack{i_a\in\mathcal M_n(\ell_a),\,a\in[k]\\ \mathbf i_{\neq}}}
    \varphi(W_{i_1},\ldots,W_{i_k})\\
    &=
    \underbrace{
    \sum_{\ell_1=1}^{M_n}\cdots\sum_{\ell_k=1}^{M_n}
    \sum_{\substack{i_a\in\mathcal M_n(\ell_a),\,a\in[k]\\ \mathbf i_{\neq}\\ \mathrm{linked\mbox{-}generic}}}
    \varphi(W_{i_1},\ldots,W_{i_k})}_{G_{n,k}^{\sim}}
    +
    \underbrace{
    \sum_{\ell_1=1}^{M_n}\cdots\sum_{\ell_k=1}^{M_n}
    \sum_{\substack{i_a\in\mathcal M_n(\ell_a),\,a\in[k]\\ \mathbf i_{\neq}\\ \mathrm{non\mbox{-}generic}}}
    \varphi(W_{i_1},\ldots,W_{i_k})}_{\mathcal H_{n,k}^{\sim}} .
\end{align*}
For each fixed linked-generic tuple, the coordinates are mutually independent, so the tuple-specific Hoeffding decomposition in \eqref{eq:abstract_hoeffding_reconstruction} applies tuple by tuple. Hence the linked-generic part equals
\begin{align*}
    G_{n,k}^{\sim}
    &=
    \underbrace{
    \sum_{\ell_1=1}^{M_n}\cdots\sum_{\ell_k=1}^{M_n}
    \sum_{\substack{i_a\in\mathcal M_n(\ell_a),\,a\in[k]\\ \mathbf i_{\neq}\\ \mathrm{linked\mbox{-}generic}}}
    \sum_{a=1}^k\varphi_{\mathbf i_{-a}}^{(1)}(W_{i_a})}_{L_{n,\mathrm{gen},k}^{\sim}}
    +
    \underbrace{
    \sum_{\ell_1=1}^{M_n}\cdots\sum_{\ell_k=1}^{M_n}
    \sum_{\substack{i_a\in\mathcal M_n(\ell_a),\,a\in[k]\\ \mathbf i_{\neq}\\ \mathrm{linked\mbox{-}generic}}}
    \sum_{\substack{S\subseteq[k]\\ |S|\ge2}}
    \varphi_{\mathbf i,S}^{(|S|)}(W_{\mathbf i_S})}_{\mathcal R_{n,k}^{\sim}} .
\end{align*}
By symmetry of \(\varphi\), the linked-generic tuple domain is invariant under permutations of the coordinates, so every first-order position contributes the same value. Therefore
\[
    L_{n,\mathrm{gen},k}^{\sim}
    =
    k\underbrace{\sum_{i=1}^n
    \sum_{\mathbf j\in\mathcal A_{n,k}^{\sim}(i)}
    \varphi_{\mathbf j}^{(1)}(W_i)}_{T_{n,k}^{\sim}} .
\]
The linked-generic co-index count gives
\[
    T_{n,k}^{\sim}
    =
    \underbrace{\sum_{i=1}^nQ_{n,i}^{\sim}\bar\varphi_{n,k,i}^{\sim}(W_i)}_{B_{n,k}^{\sim}} .
\]
Finally, since \(R_{n,i}^{\sim}=\Perm{n-1}{k-1}-Q_{n,i}^{\sim}\),
\[
    B_{n,k}^{\sim}
    =
    \underbrace{\Perm{n-1}{k-1}\sum_{i=1}^n\bar\varphi_{n,k,i}^{\sim}(W_i)}_{L_{n,k}^{\sim}/k}
    -
    \underbrace{\sum_{i=1}^nR_{n,i}^{\sim}\bar\varphi_{n,k,i}^{\sim}(W_i)}_{\mathcal C_{n,k}^{\sim}/k}.
\]
Combining these displays gives \eqref{eq:abstract_linked_orderk_decomp}.
\end{proof}

\paragraph{Unified projection-count correction.}
The projection-count correction is governed by co-index tuples that fail linked-genericity after an anchor \(i\) is fixed.

\begin{lemma}[Unified projection-count correction]\label{lem:abstract_projection_count}
Under the unified vertical-link setup,
\begin{equation}\label{eq:abstract_Ri_count}
    0\leq R_{n,i}^{\sim}
    \leq
    2\binom{k}{2}M_n n^{k-2}
    \qquad\text{uniformly in }i.
\end{equation}
If, in addition, \(\sup_i\|\bar\varphi_{n,k,i}^{\sim}(W_i)\|_2\leq C_1\), then
\[
    \|\mathcal C_{n,k}^{\sim}\|_2
    \leq
    2k\binom{k}{2}C_1M_nn^{k-1}.
\]
\end{lemma}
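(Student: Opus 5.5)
The count bound comes from a union bound over which pair in the tuple is linked, followed by counting partners via the column-link property \eqref{eq:abstract_column_link_count}. The norm bound for \(\mathcal C_{n,k}^{\sim}\) then follows from Minkowski's inequality.

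\emph{Count bound.} Fix an anchor \(i\). Nonnegativity holds because \(\mathcal A_{n,k}^{\sim}(i)\) is a subset of the \(\Perm{n-1}{k-1}\) ordered distinct co-index tuples from \([n]\setminus\{i\}\). So \(R_{n,i}^{\sim}\) counts the tuples \((i_2,\ldots,i_k)_{\neq}\) for which \((i,i_2,\ldots,i_k)\) is non-generic, i.e. some pair of positions \(\{a,b\}\subseteq[k]\), \(a\neq b\) (with position \(1\) carrying \(i\)), satisfies \(i_a\sim_n i_b\). By the union bound, \(R_{n,i}^{\sim}\) is at most the sum over the \(\binom{k}{2}\) position pairs of the number of co-index tuples whose entries in that pair are linked.

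For a fixed pair \(\{a,b\}\), I would first sum out the position \(b\), given everything else. Given the other coordinates, in particular \(i_a\) (or \(i\) when \(a=1\)), the number of admissible \(i_b\) with \(i_b\sim_n i_a\) is at most
\[
\sum_{\ell=1}^{M_n}\#\{j\in\mathcal M_n(\ell):j\sim_n i_a\}\leq 2M_n,
\]
by \eqref{eq:abstract_column_link_count}. The remaining free co-index positions number \(k-2\) when \(a=1\) and \(k-3\) when \(a\ge2\) (beyond \(i_a\)). Each ranges over at most \(n\) values, so together with \(i_a\) the count is at most \(n^{k-2}\) in either case. Each pair therefore contributes at most \(2M_n n^{k-2}\), and summing over the \(\binom k2\) pairs gives \eqref{eq:abstract_Ri_count}, uniformly in \(i\) since nothing depended on \(i\). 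The only care needed is in the bookkeeping: condition on the linked partner's index and let it range freely, rather than double-counting. This is the step I expect to need the most attention, though it is elementary.

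\emph{Norm bound.} By Minkowski's inequality,
\[
\|\mathcal C_{n,k}^{\sim}\|_2\le k\sum_{i=1}^n R_{n,i}^{\sim}\,\|\bar\varphi_{n,k,i}^{\sim}(W_i)\|_2.
\]
This is valid because the \(R_{n,i}^{\sim}\) are deterministic nonnegative counts. Inserting \(R_{n,i}^{\sim}\le 2\binom k2 M_n n^{k-2}\) and \(\|\bar\varphi_{n,k,i}^{\sim}(W_i)\|_2\le C_1\), and summing over the \(n\) anchors, gives \(2k\binom k2 C_1M_nn^{k-1}\). No independence structure is used in this last step, which is deliberate: the crude triangle-inequality bound already suffices. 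After division by \(\Perm nk\asymp n^k\), it yields the \(\bigO(M_n/n)\) remainder rate appearing in \Cref{thm:CLT_Ustats_Clus_k}(a) and \Cref{thm:CLT_Ustats_mdep_k}(a).
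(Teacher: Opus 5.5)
Your proof is correct and is essentially the paper's argument. The paper splits the excluded tuples into those with a co-index linked to the anchor ($k-1$ positions) and those with an internal co-index link ($\binom{k-1}{2}$ pairs), each contributing at most $2M_nn^{k-2}$; together these are exactly your single union bound over the $\binom{k}{2}$ position pairs, and the bound on $\|\mathcal C_{n,k}^{\sim}\|_2$ comes from the same Minkowski step.
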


\begin{proof}
The count \(R_{n,i}^{\sim}=\Perm{n-1}{k-1}-Q_{n,i}^{\sim}\) is the number of ordered co-index tuples \((i_2,\ldots,i_k)_{\neq}\) for which \((i,i_2,\ldots,i_k)\) is not linked-generic. Split this count as \(R_{n,i}^{\sim}\leq R_{n,i}^{(1)}+R_{n,i}^{(2)}\), where \(R_{n,i}^{(1)}\) counts co-index tuples with at least one co-index linked to the anchor \(i\), and \(R_{n,i}^{(2)}\) counts co-index tuples with no co-index linked to \(i\), but with an internal link among the co-indices.

For \(R_{n,i}^{(1)}\), fix a position \(b\in\{2,\ldots,k\}\) at which the co-index is linked to \(i\). By \eqref{eq:abstract_column_link_count}, the number of possible values across all vertical columns is at most \(2M_n\). The remaining \(k-2\) co-index positions can be filled in at most \(n^{k-2}\) ways. A union bound over the \(k-1\) possible positions gives
\[
    R_{n,i}^{(1)}
    \leq
    2(k-1)M_nn^{k-2}.
\]

For \(R_{n,i}^{(2)}\), fix a pair of co-index positions \(2\leq b<c\leq k\) at which an internal link occurs. There are at most \(n\) choices for the first coordinate in the pair and at most \(2M_n\) choices for the linked second coordinate. The remaining \(k-3\) co-index positions can be filled in at most \(n^{k-3}\) ways. A union bound over the \(\binom{k-1}{2}\) possible pairs gives
\[
    R_{n,i}^{(2)}
    \leq
    2\binom{k-1}{2}M_nn^{k-2}.
\]
Combining the two bounds gives \eqref{eq:abstract_Ri_count}. The displayed \(L_2\) bound follows from
\[
    \|\mathcal C_{n,k}^{\sim}\|_2
    \leq
    k\sum_{i=1}^nR_{n,i}^{\sim}\|\bar\varphi_{n,k,i}^{\sim}(W_i)\|_2
\]
and \eqref{eq:abstract_Ri_count}.
\end{proof}

\paragraph{Unified degenerate bound.}
The next lemma isolates the higher-order Hoeffding projection terms, i.e., the components with \(|S|\ge2\). It is the linked analogue of the column-tuple argument used by the scheme-specific order-$k$ proofs below.

\begin{lemma}[Unified linked degenerate variance]\label{lem:abstract_linked_degenerate_variance}
Fix \(2\leq c\leq k\) and \(S\subseteq[k]\) with \(|S|=c\). If the unified vertical-link setup holds and \(\sup_{\mathbf i_{\neq}}\E|\varphi(W_{i_1},\ldots,W_{i_k})|^r\leq C\) for some \(r\ge2\), then, for every fixed column tuple \((\ell_1,\ldots,\ell_k)\in[M_n]^k\),
\[
    \E\left[
    \left\{
    \sum_{\substack{i_a\in\mathcal M_n(\ell_a),\,a\in[k]\\ \mathbf i_{\neq}\ \mathrm{linked\mbox{-}generic}}}
    \varphi_{\mathbf i,S}^{(c)}(W_{\mathbf i_S})
    \right\}^2
    \right]
    \leq
    (2c)^c4^kC^{2/r}s_n^{2k-c}.
\]
\end{lemma}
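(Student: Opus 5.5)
Expand the square as a double sum over pairs of linked-generic tuples \(\mathbf i,\mathbf i'\) with \(i_a,i'_a\in\mathcal M_n(\ell_a)\), and bound each cross moment \(\E[\varphi_{\mathbf i,S}^{(c)}(W_{\mathbf i_S})\varphi_{\mathbf i',S}^{(c)}(W_{\mathbf i'_S})]\) separately. The whole argument is degeneracy plus counting.

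\textbf{Step 1 (uniform moment bound).} For any linked-generic \(\mathbf i\), the Hoeffding term \(\varphi_{\mathbf i,S}^{(c)}\) is a signed sum of \(2^c\) conditional expectations \(\pi_{S'}^{\mathbf i}\). By conditional Jensen, each of these has \(L_2\) norm at most \(\|\varphi\|_2\le\|\varphi\|_r\le C^{1/r}\). Hence \(\|\varphi_{\mathbf i,S}^{(c)}\|_2\le 2^cC^{1/r}\le 2^kC^{1/r}\). By Cauchy--Schwarz, every cross moment is then at most \(4^kC^{2/r}\). This accounts for the factor \(4^kC^{2/r}\).

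\textbf{Step 2 (vanishing cross terms).} Fix \(\mathbf i,\mathbf i'\). Suppose some \(a\in S\) has \(i_a\notin\{i'_b:b\in S\}\) and \(i_a\) is not linked to any index in \(\mathbf i_{S\setminus\{a\}}\cup\mathbf i'_S\). The independence property of \(\sim_n\) makes \(W_{i_a}\) independent of the remaining variables. Conditioning on them and applying the degeneracy \eqref{eq:abstract_hoeffding_degeneracy} gives a zero cross moment. Genericity of \(\mathbf i\) already rules out links inside \(\mathbf i_S\). So a nonzero cross term requires every \(i_a\), \(a\in S\), to equal or be linked to some \(i'_b\), \(b\in S\). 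Symmetrically, every \(i'_b\) must equal or be linked to some \(i_a\).

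\textbf{Step 3 (counting).} Bound the number of surviving pairs. The coordinates outside \(S\) range freely in their columns, giving at most \(s_n^{2(k-c)}\) choices for \((\mathbf i_{-S},\mathbf i'_{-S})\). Choose \(\mathbf i_S\) freely, giving at most \(s_n^c\) choices. Then each \(i'_b\), \(b\in S\), lies in the fixed column \(\mathcal M_n(\ell_b)\) and must equal or be linked to one of the \(c\) indices \(i_a\). By the column-link count \eqref{eq:abstract_column_link_count}, each \(i_a\) offers at most \(1+2\) candidates in a given column. This gives at most \(3c\) choices per \(b\); to obtain the stated constant, use \(2c\) by absorbing equality into the link count, or otherwise accept \((3c)^c\). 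The number of surviving pairs is therefore at most \((2c)^c s_n^{2k-c}\). Multiplying by the Step 1 bound gives the claim.

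\textbf{Main obstacle.} The key step is Step 2, because linked-generic tuples in different pairs may share dependence. The point to check is that the independence hypothesis of the setup applies to the partition \(\{i_a\}\) versus the rest: all other indices are unlinked to \(i_a\) and distinct from it. Also, indices \(i'_b\) with \(b\notin S\) do not enter the product at all, so they impose no constraint. The counting constant is secondary and can be loosened if needed.
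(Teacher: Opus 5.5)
Your proposal is correct and follows the paper's proof essentially step for step. It uses the same $4^kC^{2/r}$ bound from the $2^c$ conditional expectations, the same vanishing of cross moments by degeneracy once an active coordinate of one tuple is unlinked from the other tuple's active coordinates, and the same one-sided count of $(2c)^cs_n^{2k-c}$ surviving pairs. Your worry about coincidences $i'_b=i_a$ is settled because $\sim_n$ is reflexive in both dictionaries (same cluster, or within $m$ lags), so the column-link count \eqref{eq:abstract_column_link_count} already includes $j=i$ and the constant $(2c)^c$ holds without retreating to $(3c)^c$. The paper's own Step 2 tacitly relies on this same convention.
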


\begin{proof}
Write \(S=\{1,\ldots,c\}\) without loss of generality, by relabelling positions.

\emph{Step 1 (uniform second moment).} By the standard projection formula underlying \eqref{eq:abstract_hoeffding_reconstruction}, \(\varphi_{\mathbf i,S}^{(c)}\) is a sum of \(2^c\leq2^k\) conditional expectations of \(\varphi\). Conditional Jensen's inequality and Lyapunov's inequality give
\[
    \sup_{\mathbf i_{\neq}\ \mathrm{linked\mbox{-}generic}}
    \E\big[(\varphi_{\mathbf i,S}^{(c)}(W_{\mathbf i_S}))^2\big]
    \leq
    4^kC^{2/r}
    =:C_k'.
\]

\emph{Step 2 (covariance vanishes below full active linkage).} Let \(\mathbf i,\mathbf i'\) be linked-generic tuples from the fixed column tuple, \(i_a,i_a'\in\mathcal M_n(\ell_a)\) for each \(a\in[k]\). Suppose some active position \(a_0\in S\) of \(\mathbf i\) is unlinked from every active position of \(\mathbf i'\). Since \(\mathbf i\) is linked-generic, \(i_{a_0}\) is also unlinked from every other active coordinate \(i_a\), \(a\in S\setminus\{a_0\}\). Thus the singleton sigma-field \(\mathcal F_{\{i_{a_0}\}}\) is independent of the sigma-field generated by \(\{W_{i_a}:a\in S\setminus\{a_0\}\}\) and \(\{W_{i_b'}:b\in S\}\). Combining this independence with \eqref{eq:abstract_hoeffding_degeneracy} gives
\[
    \E\!\left[
        \varphi_{\mathbf i,S}^{(c)}(W_{\mathbf i_S})
        \mid (W_{i_a})_{a\in S\setminus\{a_0\}},\,(W_{i_b'})_{b\in S}
    \right]
    =
    0.
\]
Consequently,
\[
\begin{aligned}
    \E\!\left[
        \varphi_{\mathbf i,S}^{(c)}(W_{\mathbf i_S})
        \varphi_{\mathbf i',S}^{(c)}(W_{\mathbf i'_S})
    \right]
    &=
    \E\!\left[
        \varphi_{\mathbf i',S}^{(c)}(W_{\mathbf i'_S})
        \E\!\left[
            \varphi_{\mathbf i,S}^{(c)}(W_{\mathbf i_S})
            \mid (W_{i_a})_{a\in S\setminus\{a_0\}},\,(W_{i_b'})_{b\in S}
        \right]
    \right]
    =0.
\end{aligned}
\]
The same argument applies symmetrically if some active position of \(\mathbf i'\) is unlinked from every active position of \(\mathbf i\).

\emph{Step 3 (counting surviving linked configurations).} The only covariance terms that may survive are therefore those for which every active position of \(\mathbf i'\) is linked to at least one active position of \(\mathbf i\), and symmetrically every active position of \(\mathbf i\) is linked to at least one active position of \(\mathbf i'\). It is enough to count the first requirement. The active coordinates of \(\mathbf i\) have at most \(\prod_{a\in S}N_{\ell_a}\leq s_n^c\) choices. Once these are fixed, each active coordinate \(i_b'\), \(b\in S\), must be linked to one of the \(c\) fixed active coordinates of \(\mathbf i\). By \eqref{eq:abstract_column_link_count}, each fixed active coordinate has at most two linked observations in the prescribed column \(\mathcal M_n(\ell_b)\). Hence \(i_b'\) has at most \(2c\) choices, and the active vector \((i_b')_{b\in S}\) has at most \((2c)^c\) choices. The inactive coordinates contribute at most \(\prod_{a\notin S}N_{\ell_a}\leq s_n^{k-c}\) choices for \(\mathbf i\) and at most \(s_n^{k-c}\) choices for \(\mathbf i'\). Thus the product-over-positions count for potentially nonzero covariance terms is bounded by
\[
    \prod_{a\in S}N_{\ell_a}
    \times (2c)^c
    \times \prod_{a\notin S}N_{\ell_a}
    \times \prod_{a\notin S}N_{\ell_a}
    \leq
    (2c)^cs_n^{2k-c}.
\]

\emph{Step 4 (assembly).} By Steps 1 and 3 and Cauchy--Schwarz, every surviving covariance is bounded in absolute value by \(C_k'=4^kC^{2/r}\), while Step 2 zeroes out every other pair. Therefore the second moment is at most \((2c)^cC_k's_n^{2k-c}\), as claimed.
\end{proof}

\begin{lemma}[Unified linked remainder controls]\label{lem:abstract_linked_remainder_controls}
Under the assumptions of \Cref{lem:abstract_linked_degenerate_variance}, write \(\mathcal R_{n,k}^{\sim}=\sum_{c=2}^k\mathcal R_{n,k,c}^{\sim}\), where
\[
    \mathcal R_{n,k,c}^{\sim}
    :=
    \sum_{\mathbf i_{\neq}\ \mathrm{linked\mbox{-}generic}}
    \sum_{\substack{S\subseteq[k]\\ |S|=c}}
    \varphi_{\mathbf i,S}^{(c)}(W_{\mathbf i_S}).
\]
Then
\[
    \|\mathcal R_{n,k,c}^{\sim}\|_2
    \leq
    \binom{k}{c}(2c)^{c/2}2^kC^{1/r}M_n^ks_n^{k-c/2},
    \qquad
    2\leq c\leq k,
\]
and therefore \(\|\mathcal R_{n,k}^{\sim}\|_2\leq\sum_{c=2}^k\binom{k}{c}(2c)^{c/2}2^kC^{1/r}M_n^ks_n^{k-c/2}\). The linked-collision remainder satisfies
\[
    \|\mathcal H_{n,k}^{\sim}\|_2
    \leq
    2\binom{k}{2}C^{1/r}M_nn^{k-1}.
\]
\end{lemma}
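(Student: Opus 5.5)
The plan is to prove the two bounds separately. Both start from the column decomposition already used in \Cref{lem:abstract_linked_orderk_decomposition}.

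\emph{Higher-order remainder.} Split the linked-generic sum defining \(\mathcal R_{n,k,c}^{\sim}\) over column tuples \((\ell_1,\ldots,\ell_k)\in[M_n]^k\) and over the \(\binom kc\) subsets \(S\) with \(|S|=c\):
\[
\mathcal R_{n,k,c}^{\sim}=\sum_{|S|=c}\ \sum_{(\ell_1,\ldots,\ell_k)\in[M_n]^k} T_{\boldsymbol\ell,S},
\]
where \(T_{\boldsymbol\ell,S}\) is the inner sum over \(i_a\in\mathcal M_n(\ell_a)\) of linked-generic tuples. Minkowski's inequality then gives
\[
\|\mathcal R_{n,k,c}^{\sim}\|_2\le\sum_{|S|=c}\sum_{\boldsymbol\ell}\|T_{\boldsymbol\ell,S}\|_2 .
\]
\Cref{lem:abstract_linked_degenerate_variance} bounds each term by
\[
\|T_{\boldsymbol\ell,S}\|_2\le\{(2c)^c4^kC^{2/r}s_n^{2k-c}\}^{1/2}=(2c)^{c/2}2^kC^{1/r}s_n^{k-c/2}.
\]
There are \(M_n^k\) column tuples and \(\binom kc\) subsets, so summing yields the stated bound. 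Summing over \(c=2,\ldots,k\) and applying Minkowski once more gives the bound on \(\mathcal R_{n,k}^{\sim}\). This part is mechanical. The crude Minkowski sum over columns, costing a factor \(M_n^k\), is exactly what the statement allows, so no orthogonality across column tuples is needed.

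\emph{Collision remainder.} The plan is the triangle inequality on \(\mathcal H_{n,k}^{\sim}\): each summand has \(\|\varphi(W_{i_1},\ldots,W_{i_k})\|_2\le\|\varphi\|_r\le C^{1/r}\) by Lyapunov, since \(r\ge2\). It therefore remains to show that the number of ordered distinct non-generic \(k\)-tuples is at most \(2\binom k2M_nn^{k-1}\). I will count by a union bound over the unordered pair of positions \(\{a,b\}\) carrying a link, which gives \(\binom k2\) choices. For each such pair:
\begin{itemize}
\item \(i_a\) has at most \(n\) choices;
\item \(i_b\) must be linked to \(i_a\); by the column-link count \eqref{eq:abstract_column_link_count}, summed over the \(M_n\) columns, it has at most \(2M_n\) choices;
\item the remaining \(k-2\) positions have at most \(n^{k-2}\) choices.
\end{itemize}
This yields \(\binom k2\cdot n\cdot 2M_n\cdot n^{k-2}=2\binom k2M_nn^{k-1}\), as required.

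The only step needing care is that the count uses \eqref{eq:abstract_column_link_count} uniformly over columns. For exact \(m\)-dependence, this also has to cover a link reaching the same column on both sides, and the factor \(2\) already accounts for that. No dependence structure is used in this step, because the triangle inequality bound does not need cancellations. I therefore expect no real obstacle; the main thing to check is that Lyapunov's inequality combined with the uniform \(r\)-th moment bound from \Cref{lem:abstract_linked_degenerate_variance} is applied to non-generic tuples as well. That is legitimate because the supremum in that lemma's hypothesis ranges over all \(\mathbf i_{\neq}\).
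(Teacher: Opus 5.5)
Your proposal is correct and follows the paper's own proof essentially step for step. For the higher-order remainder you apply Minkowski over the $M_n^k$ column tuples and the $\binom{k}{c}$ active sets, and take the square root of the bound in \Cref{lem:abstract_linked_degenerate_variance} on each block. For the collision term you use the same union bound over linked position pairs, giving $n\cdot 2M_n\cdot n^{k-2}$ tuples per pair, and then combine Minkowski with Lyapunov's inequality under the hypothesis that the $r$-th moment bound holds uniformly over all distinct tuples.
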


\begin{proof}
Fix \(2\leq c\leq k\). For each active set \(S\) with \(|S|=c\), decompose the corresponding sum into the \(M_n^k\) column tuples. By Minkowski's inequality and \Cref{lem:abstract_linked_degenerate_variance},
\[
    \left\|
    \sum_{\mathbf i_{\neq}\ \mathrm{linked\mbox{-}generic}}
    \varphi_{\mathbf i,S}^{(c)}(W_{\mathbf i_S})
    \right\|_2
    \leq
    \sum_{\ell_1=1}^{M_n}\cdots\sum_{\ell_k=1}^{M_n}
    (2c)^{c/2}2^kC^{1/r}s_n^{k-c/2}.
\]
Summing over the \(\binom{k}{c}\) choices of \(S\) gives the displayed bound for \(\mathcal R_{n,k,c}^{\sim}\), and summing over \(c=2,\ldots,k\) gives the bound for \(\mathcal R_{n,k}^{\sim}\).

For the collision term, every non-generic ordered \(k\)-tuple has at least one linked pair. Fix a pair of positions \(1\leq a<b\leq k\). There are at most \(n\) choices for \(i_a\). Once \(i_a\) is fixed, \eqref{eq:abstract_column_link_count} gives at most two linked choices for \(i_b\) in each vertical column, hence at most \(2M_n\) choices across all columns. The remaining \(k-2\) positions have at most \(n^{k-2}\) choices. A union bound over the \(\binom{k}{2}\) possible linked pairs gives at most \(2\binom{k}{2}M_nn^{k-1}\) non-generic ordered tuples. Therefore Minkowski's inequality and Lyapunov's inequality give
\[
    \|\mathcal H_{n,k}^{\sim}\|_2
    \leq
    \sum_{\mathbf i_{\neq}\ \mathrm{non\mbox{-}generic}}
    \|\varphi(W_{i_1},\ldots,W_{i_k})\|_2
    \leq
    2\binom{k}{2}C^{1/r}M_nn^{k-1}.
\]
This proves the lemma.
\end{proof}

\begin{proposition}[Unified asymptotically linear representation]\label{prop:abstract_linked_linear_representation}
Fix \(k\geq2\). Under the assumptions of \Cref{lem:abstract_linked_remainder_controls}, suppose also that the squared kernels are uniformly integrable over all distinct ordered \(k\)-tuples:
\[
    \lim_{M\to\infty}\sup_{\mathbf i_{\neq}}
    \E\!\left[
    |\varphi(W_{i_1},\ldots,W_{i_k})|^2
    \indicator{|\varphi(W_{i_1},\ldots,W_{i_k})|>M}
    \right]=0.
\]
Then
\begin{align}\label{eq:abstract_linked_linear_representation}
    U_{n,k}
    &=
    \frac{k}{n}\sum_{i=1}^n\bar\varphi_{n,k,i}^{\sim}(W_i)
    +
    \bigOp\!\left(\frac{M_n}{n}\right).
\end{align}
\end{proposition}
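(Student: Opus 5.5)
The plan is to divide the exact decomposition of \Cref{lem:abstract_linked_orderk_decomposition} by \(\Perm{n}{k}\) and show that, after this normalisation, each of the three remainder terms is \(\bigO(M_n/n)\) in \(L_2\). Markov's (Chebyshev's) inequality then turns each of these \(L_2\) bounds into \(\bigOp(M_n/n)\).

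\emph{Leading term and moment constants.} Since \(U_{n,k}=S_{n,k}/\Perm{n}{k}\) and \(k\Perm{n-1}{k-1}/\Perm{n}{k}=k/n\), the linear projection term \(L_{n,k}^{\sim}/\Perm{n}{k}\) equals \(\frac kn\sum_i\bar\varphi_{n,k,i}^{\sim}(W_i)\) exactly. I will use throughout that, because \(k\) is fixed, \(\Perm{n}{k}\geq (n/2)^k\) once \(n\geq 2k\). I will also take \(n\) large enough that \(Q_{n,i}^{\sim}>0\), so that \(\bar\varphi_{n,k,i}^{\sim}\) is well defined.

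The lemmas need uniform moment constants, which I obtain as follows.
\begin{itemize}
\item The assumptions of \Cref{lem:abstract_linked_degenerate_variance} give \(\sup_{\mathbf i_{\neq}}\E|\varphi|^r\leq C\) for some \(r\geq2\). The stated uniform integrability of the squared kernels gives, in any case, \(\sup_{\mathbf i_{\neq}}\|\varphi\|_2<\infty\).
\item For the bound on \(\bar\varphi\), conditional Jensen gives \(\|\varphi_{\mathbf j}^{(1)}(W_i)\|_2\leq\|\varphi(W_i,W_{j_2},\ldots,W_{j_k})\|_2\). Minkowski's inequality applied to the average over \(\mathcal A_{n,k}^{\sim}(i)\) then gives \(\sup_i\|\bar\varphi_{n,k,i}^{\sim}(W_i)\|_2\leq C_1<\infty\).
\end{itemize}

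\emph{Bounding the three remainders.} With these constants in hand:
\begin{itemize}
\item \Cref{lem:abstract_projection_count} yields \(\|\mathcal C_{n,k}^{\sim}\|_2/\Perm{n}{k}\leq 2k\binom k2C_1M_nn^{k-1}(2/n)^k=\bigO(M_n/n)\).
\item \Cref{lem:abstract_linked_remainder_controls} yields \(\|\mathcal H_{n,k}^{\sim}\|_2/\Perm nk=\bigO(M_n/n)\) in the same way.
\end{itemize}

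\emph{Higher-order Hoeffding remainder.} This is the step I expect to need the most care, because the bound in \Cref{lem:abstract_linked_remainder_controls} carries the crude factor \(M_n^k\) from summing over column tuples. The key is to trade columns against rows using \(s_n\leq 2n/M_n\). For each \(2\leq c\leq k\) this gives
\[
M_n^ks_n^{k-c/2}\leq 2^{k}M_n^{k}(n/M_n)^{k-c/2}=2^kM_n^{c/2}n^{k-c/2},
\]
so that
\[
\|\mathcal R_{n,k,c}^{\sim}\|_2/\Perm nk\leq C_{k}(M_n/n)^{c/2}.
\]
Since \(M_n/n\leq1\) and \(c\geq2\), we have \((M_n/n)^{c/2}\leq M_n/n\). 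Summing over the finitely many \(c\) then gives \(\|\mathcal R_{n,k}^{\sim}\|_2/\Perm nk=\bigO(M_n/n)\).

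\emph{Assembly.} Collecting the three bounds, \(U_{n,k}-\frac kn\sum_i\bar\varphi_{n,k,i}^{\sim}(W_i)\) has \(L_2\) norm \(\bigO(M_n/n)\). Chebyshev's inequality then gives the \(\bigOp(M_n/n)\) remainder in \eqref{eq:abstract_linked_linear_representation}, with constants depending only on \(k\), \(C\) and \(r\).
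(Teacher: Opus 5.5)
Your proposal is correct and follows essentially the same route as the paper. You divide the decomposition of \Cref{lem:abstract_linked_orderk_decomposition} by \(\Perm nk\) and bound \(\mathcal C_{n,k}^{\sim}\) and \(\mathcal H_{n,k}^{\sim}\) in \(L_2\) at rate \(M_nn^{k-1}/\Perm nk\). You then use \(s_n\leq 2n/M_n\) to turn the \(M_n^ks_n^{k-c/2}\) bound for \(\mathcal R_{n,k,c}^{\sim}\) into \((M_n/n)^{c/2}\), with the \(c=2\) term dominating; your explicit constants (e.g.\ \(\Perm nk\geq(n/2)^k\)) simply make the paper's \(\Perm nk\asymp n^k\) step concrete.
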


\begin{proof}
Divide the decomposition in \eqref{eq:abstract_linked_orderk_decomp} by \(\Perm nk\). The linear term becomes
\[
    \frac{L_{n,k}^{\sim}}{\Perm nk}
    =
    \frac{k\Perm{n-1}{k-1}}{\Perm nk}\sum_{i=1}^n\bar\varphi_{n,k,i}^{\sim}(W_i)
    =
    \frac{k}{n}\sum_{i=1}^n\bar\varphi_{n,k,i}^{\sim}(W_i).
\]
The uniform-integrability condition implies \(C_\varphi:=\sup_{\mathbf i_{\neq}}\|\varphi(W_{i_1},\ldots,W_{i_k})\|_2<\infty\). Conditional Jensen's inequality and Minkowski's inequality give
\[
    \sup_i\|\bar\varphi_{n,k,i}^{\sim}(W_i)\|_2
    \leq
    C_\varphi.
\]
By \Cref{lem:abstract_projection_count},
\[
    \frac{\mathcal C_{n,k}^{\sim}}{\Perm nk}
    =
    \bigOp\!\left(\frac{M_nn^{k-1}}{\Perm nk}\right)
    =
    \bigOp\!\left(\frac{M_n}{n}\right),
\]
because \(k\) is fixed and \(\Perm nk\asymp n^k\). The same argument and \Cref{lem:abstract_linked_remainder_controls} give
\[
    \frac{\mathcal H_{n,k}^{\sim}}{\Perm nk}
    =
    \bigOp\!\left(\frac{M_nn^{k-1}}{\Perm nk}\right)
    =
    \bigOp\!\left(\frac{M_n}{n}\right).
\]
Finally, for each \(2\leq c\leq k\), \Cref{lem:abstract_linked_remainder_controls} gives
\[
    \frac{\mathcal R_{n,k,c}^{\sim}}{\Perm nk}
    =
    \bigOp\!\left(\frac{M_n^ks_n^{k-c/2}}{\Perm nk}\right)
    =
    \bigOp\!\left(\left(\frac{M_n}{n}\right)^{c/2}\right),
\]
using \(s_n\leq2n/M_n\) from the vertical partition, \(k\) fixed, and \(\Perm nk\asymp n^k\).
Since \(k\) is fixed and \(M_n/n\leq1\), the sum over \(c=2,\ldots,k\) is dominated by its \(c=2\) term and is therefore \(\bigOp(M_n/n)\).
Combining these bounds gives \eqref{eq:abstract_linked_linear_representation}.
\end{proof}

\section{Proofs for Clustered Sampling}\label{app:clustered_limit_proofs}

\paragraph{Cluster dictionary.}

Throughout this section write \(\mathbf i=(i_1,\ldots,i_k)\). Specialise the unified link relation of \Cref{app:abstract_linked_tuples} to \(i\sim_n j\) iff \(g(i)=g(j)\). Then linked-generic tuples are precisely cluster-generic tuples, \( \displaystyle M_n=m_n:=\max_{g\in[G]}n_g\), and the column-link count in \eqref{eq:abstract_column_link_count} holds with the sharper constant one because each vertical column contains at most one observation from any cluster. For a generic tuple and \(S\subseteq[k]\), use the unified projections \(\varphi_{\mathbf i,S}^{(|S|)}\). At \(S=\{a\}\), abbreviate \(\varphi_{\mathbf i,\{a\}}^{(1)}(w_a)\) by \(\varphi^{(1)}_{\mathbf i_{-a}}(w_a)\), where \(\mathbf i_{-a}:=(i_b)_{b\in[k]\setminus\{a\}}\).

The unified H-decomposition in \eqref{eq:abstract_hoeffding_reconstruction}--\eqref{eq:abstract_hoeffding_degeneracy} specialises to cluster-generic tuples because their coordinates are mutually independent by \Cref{ass:sampling}; see also \citet[Sect.~1.6, Thm.~1]{lee1990ustatistics}. Thus
\begin{equation}\label{eq:hoeffding_reconstruction}
    \varphi(W_{i_1},\ldots,W_{i_k})
    =
    \sum_{S\subseteq[k]}\varphi_{\mathbf i,S}^{(|S|)}(W_{\mathbf i_S})
    \quad\text{a.s.},
\end{equation}
The alternating-sum definition of the Hoeffding projections also gives the usual degeneracy property:
\begin{equation}\label{eq:hoeffding_degeneracy}
    \E\!\left[
    \varphi_{\mathbf i,S}^{(|S|)}(W_{\mathbf i_S})
    \mid W_{\mathbf i_{S\setminus\{a\}}}
    \right]=0,
    \qquad \emptyset\neq S\subseteq[k],\ a\in S.
\end{equation}
Write \(R_i:=\Perm{n-1}{k-1}-Q_{n,i}\).
By \Cref{lem:abstract_projection_count}, with \(M_n=m_n\),
\begin{equation}\label{eq:orderk_Ri_count}
    R_i
    \leq
    2\binom{k}{2}m_nn^{k-2}.
\end{equation}

\subsection[Clustered order-k U-statistic]{Proof of \cref{thm:CLT_Ustats_Clus_k}}\label{app:orderk}

\paragraph{Part (a)} The clustered dictionary above identifies \(\bar\varphi_{n,k,i}^{\sim}(W_i)\) with \(\bar\varphi_{n,k}^{(1)}(W_i)\) and \(M_n\) with \(m_n\). The uniform integrability (UI) condition \eqref{eq:clus_ui_k} implies \(\sup_{\mathbf i_{\neq}}\|\varphi(W_{i_1},\ldots,W_{i_k})\|_2<\infty\); hence conditional Jensen's inequality and Minkowski's inequality give \(\sup_i\|\bar\varphi_{n,k}^{(1)}(W_i)\|_2<\infty\). Therefore \Cref{prop:abstract_linked_linear_representation} gives
\[
    U_{n,k}=\frac kn\sum_{i=1}^n\bar\varphi_{n,k}^{(1)}(W_i)+\bigOp\Big(\frac{m_n}{n}\Big),
\]
which is part (a) of \Cref{thm:CLT_Ustats_Clus_k}.

\paragraph{Part (b)}
The projection array \(\{\bar\varphi_{n,k}^{(1)}(W_i)\}_{i=1}^n\) is mean zero by generic centring and has uniformly bounded second moments by conditional Jensen's inequality. Cross-cluster independence gives
\[
    \E\Big[\Big(\frac{1}{n}\sum_{i=1}^n\bar\varphi_{n,k}^{(1)}(W_i)\Big)^2\Big]
    =
    \frac{1}{n^2}\sum_{g=1}^{G}\E\Big[\Big(\sum_{i\in\G_g}\bar\varphi_{n,k}^{(1)}(W_i)\Big)^2\Big]
    \leq
    C\frac{m_n}{n}.
\]
Hence \(n^{-1}\sum_i\bar\varphi_{n,k}^{(1)}(W_i)=\bigOp(\sqrt{m_n/n})\). Combining this with part (a), and using \(m_n/n\leq\sqrt{m_n/n}\), gives
\[
    U_{n,k}
    =
    \bigOp\Big(\sqrt{\frac{m_n}{n}}\Big).
\]
Since \(m_n/n\to0\) under \Cref{ass:clus_het}, the displayed rate is \(o_p(1)\).

\paragraph{Part (c)}
From part (a),
\[
    \sqrt n\,U_{n,k}
    =
    \frac{k}{\sqrt n}\sum_{i=1}^n\bar\varphi_{n,k}^{(1)}(W_i)
    +
    \bigOp\Big(\frac{m_n}{\sqrt n}\Big).
\]
Apply \citet[Theorem 2]{hansen2019asymptotic} to the triangular array \(\{\bar\varphi_{n,k}^{(1)}(W_i)\}_{i=1}^n\) with the same cluster assignment. Mean zero follows from generic centring. \Cref{lem:UI_transfer_projection} transfers uniform integrability from the kernel to the first projection family, and averaging over generic co-index tuples then transfers it to \(\bar\varphi_{n,k}^{(1)}\). The cluster variance appearing in \citet[Theorem 2]{hansen2019asymptotic} is \(\sigma_{n,k}^2/k^2\), and \(\sigma_{n,k}\geq\lambda>0\) gives non-degeneracy. Therefore
\[
    \frac{k}{\sigma_{n,k}}\cdot\frac1{\sqrt n}\sum_{i=1}^n\bar\varphi_{n,k}^{(1)}(W_i)\xrightarrow{d}\mathcal N(0,1).
\]
Combining this with part (a)'s remainder, which is $o_p(n^{-1/2})$ under \Cref{ass:clus_het} since $m_n/\sqrt n\to0$, via Slutsky's theorem gives $\sigma_{n,k}^{-1}\sqrt n\,U_{n,k}\xrightarrow{d}\mathcal N(0,1)$, which is part (c). \qed

\subsection[Clustered order-2 U-statistic]{Proof of \cref{thm:CLT_Ustats_Clus}}\label{app:order2}
\Cref{thm:CLT_Ustats_Clus} is the $k=2$ case of \Cref{thm:CLT_Ustats_Clus_k}. Indeed, \(Q_{n,i}=n-n_{g(i)}\), \(\bar\varphi_{n,2}^{(1)}(W_i)=\bar\varphi_n(W_i)\), and \(\sigma_{n,k}^2=\sigma_{n,2}^2\) when \(k=2\). The uniform integrability condition \eqref{eq:clus_ui_k} reduces to \eqref{eq:clus_ui_2}. Hence parts (a)--(c) of \Cref{thm:CLT_Ustats_Clus} follow directly from \Cref{thm:CLT_Ustats_Clus_k}. \qed

\subsection{Auxiliary lemmata for clustered sampling}

\subsubsection{Projection transfer}

\noindent The next lemma is used in the clustered-sampling CLT verification to transfer uniform integrability from the kernel to its first projection before averaging over cluster-generic co-indices. Its proof mirrors the proof of Lemma 1 in \citet{hansen2019asymptotic}. For each distinct \(k\)-tuple \((i,\mathbf j)\), where \(\mathbf j=(j_2,\ldots,j_k)\), let
\(
\varphi_{\mathbf j}^{(1)}(W_i)
:=
\E\!\left[\varphi(W_i,W_{j_2},\ldots,W_{j_k})\mid W_i\right].
\)
\begin{lemma}[Uniform-integrability transfer to the first projection]\label{lem:UI_transfer_projection}
Fix \(k\geq2\). Suppose that, for some \(p\geq1\), the kernel family has uniformly integrable \(p\)-th powers:
\begin{equation}\label{eqn:ass_unif_int}
	\lim_{M\to\infty}\sup_{(i,\mathbf j)_{\neq}}
	\E\!\left[
	|\varphi(W_i,W_{j_2},\ldots,W_{j_k})|^p
	\indicator{|\varphi(W_i,W_{j_2},\ldots,W_{j_k})|>M}
	\right]
	=0.
\end{equation}
Then the first-projection family has uniformly integrable \(p\)-th powers:
\[
    \lim_{M\to\infty}\sup_{(i,\mathbf j)_{\neq}}
    \E\!\left[
    |\varphi_{\mathbf j}^{(1)}(W_i)|^p
    \indicator{|\varphi_{\mathbf j}^{(1)}(W_i)|>M}
    \right]
    =0.
\]
\end{lemma}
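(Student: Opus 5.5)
The plan is to use the de la Vallée-Poussin characterisation of uniform integrability, which is the route of Lemma 1 in \citet{hansen2019asymptotic}, together with conditional Jensen. Write \(X_{(i,\mathbf j)}:=\varphi(W_i,W_{j_2},\ldots,W_{j_k})\) and \(Y_{(i,\mathbf j)}:=\varphi_{\mathbf j}^{(1)}(W_i)=\E[X_{(i,\mathbf j)}\mid W_i]\). The family \(\{|X_{(i,\mathbf j)}|^p\}\) is uniformly integrable by \eqref{eqn:ass_unif_int}. Hence there is a convex, nondecreasing function \(\Phi:[0,\infty)\to[0,\infty)\) with \(\Phi(t)/t\to\infty\) and \(\sup_{(i,\mathbf j)_{\neq}}\E[\Phi(|X_{(i,\mathbf j)}|^p)]<\infty\). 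This \(\Phi\) depends only on the tail function \(\sup\E[|X|^p\indicator{|X|^p>M}]\), so it is uniform over the index family. This uniformity is what the argument needs.

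First I will show that \(t\mapsto\Phi(|t|^p)\) is convex on \(\mathbb R\). For \(p\geq1\), the map \(t\mapsto|t|^p\) is convex, and its composition with a convex nondecreasing \(\Phi\) stays convex. Conditional Jensen then gives
\[
\Phi(|Y_{(i,\mathbf j)}|^p)=\Phi\big(|\E[X_{(i,\mathbf j)}\mid W_i]|^p\big)\leq \E\big[\Phi(|X_{(i,\mathbf j)}|^p)\mid W_i\big]\quad\text{a.s.}
\]
Taking expectations yields \(\sup_{(i,\mathbf j)_{\neq}}\E[\Phi(|Y_{(i,\mathbf j)}|^p)]\leq\sup\E[\Phi(|X|^p)]<\infty\). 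The converse direction of de la Vallée-Poussin then gives the conclusion. To make it explicit: for \(M\) large, \(\Phi(s)/s\geq K_M\) whenever \(s>M^p\), with \(K_M\to\infty\). Therefore
\[
\E[|Y|^p\indicator{|Y|>M}]\leq K_M^{-1}\sup\E[\Phi(|Y|^p)]\to0
\]
uniformly over \((i,\mathbf j)_{\neq}\).

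An alternative, elementary route avoids constructing \(\Phi\). Use \(|Y|^p\leq\E[|X|^p\mid W_i]\) and split \(X=X\indicator{|X|\leq K}+X\indicator{|X|>K}\). This gives
\[
\E[|Y|^p\indicator{|Y|>M}]\leq 2^{p-1}\Big(K^p\Prob(|Y|>M)+\sup\E[|X|^p\indicator{|X|>K}]\Big).
\]
The term \(\Prob(|Y|>M)\leq M^{-p}\sup\E|X|^p\) is bounded uniformly, because UI implies uniformly bounded \(p\)-th moments. Letting \(M\to\infty\) and then \(K\to\infty\) finishes the argument. One detail must be checked here: \(\indicator{|Y|>M}\) is \(\sigma(W_i)\)-measurable, so \(\E[\E[|X|^p\mid W_i]\indicator{|Y|>M}]=\E[|X|^p\indicator{|Y|>M}]\). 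With that, the truncation bound holds without needing Jensen on the truncated pieces separately. I would probably present this second route as the main argument, since it is self-contained.

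The main obstacle is minor: keeping every bound uniform over the index family \((i,\mathbf j)_{\neq}\), given that the conditioning law varies with \(i\). Both routes handle this because each constant, namely \(\sup\E|X|^p\) and the tail supremum, is already a supremum over the family. Measurability is also needed, and the paper's standing convention of fixed measurable versions of conditional expectations supplies it.
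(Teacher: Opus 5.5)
Your proposal is correct, and the elementary route you say you would present is essentially the paper's proof: conditional Jensen and the tower property on the $\sigma(W_i)$-measurable event $\{|\varphi^{(1)}_{\mathbf j}(W_i)|>M\}$, a split on the size of the kernel, and Markov's inequality. The only difference is that the paper fixes your free truncation level at $K=\sqrt M$, so the bounded piece becomes $M^{-p/2}\sup\E|\varphi|^p$ and no iterated limit is needed. Your de la Vall\'ee-Poussin alternative is also valid, since $\Phi(|t|^p)$ is convex for $p\ge1$. The $2^{p-1}$ factor in your display is unnecessary but harmless, because the indicator split of $|\varphi|^p$ is exact.
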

\begin{proof}
By conditional Jensen,
\begin{equation}\label{eqn:cond_jen}
	|\varphi_{\mathbf j}^{(1)}(W_i)|^p
	=
	\Big|\E\!\left[\varphi(W_i,W_{j_2},\ldots,W_{j_k})\mid W_i\right]\Big|^p
	\leq
	\E\!\left[|\varphi(W_i,W_{j_2},\ldots,W_{j_k})|^p\mid W_i\right],
\end{equation}
thus by the LIE for a fixed \(M>0\),
\begin{align*}
    &\E\!\left[
    |\varphi_{\mathbf j}^{(1)}(W_i)|^p
    \indicator{|\varphi_{\mathbf j}^{(1)}(W_i)|>M}
    \right] \leq
    \E\!\left[
    |\varphi(W_i,W_{j_2},\ldots,W_{j_k})|^p
    \indicator{|\varphi_{\mathbf j}^{(1)}(W_i)|>M}
    \right].
\end{align*}
Split the right-hand side according to whether \(|\varphi(W_i,W_{j_2},\ldots,W_{j_k})|>\sqrt M\) or not:
\begin{align*}
    &\E\!\left[
    |\varphi(W_i,W_{j_2},\ldots,W_{j_k})|^p
    \indicator{|\varphi_{\mathbf j}^{(1)}(W_i)|>M}
    \right] \\
     &\leq
    \E\!\left[
    |\varphi(W_i,W_{j_2},\ldots,W_{j_k})|^p
    \indicator{|\varphi(W_i,W_{j_2},\ldots,W_{j_k})|>\sqrt M}
    \right] \\
    &\qquad\quad+
    \E\!\left[
    |\varphi(W_i,W_{j_2},\ldots,W_{j_k})|^p
    \indicator{|\varphi_{\mathbf j}^{(1)}(W_i)|>M}
    \indicator{|\varphi(W_i,W_{j_2},\ldots,W_{j_k})|\leq\sqrt M}
    \right].
\end{align*}
The second term is bounded by
\(
    M^{p/2}\Prob\!\left(|\varphi_{\mathbf j}^{(1)}(W_i)|>M\right).
\)
By Markov's inequality and conditional Jensen in \eqref{eqn:cond_jen} again,
\[
    \Prob\!\left(|\varphi_{\mathbf j}^{(1)}(W_i)|>M\right)
    \leq
    M^{-p}\E\!\left[|\varphi_{\mathbf j}^{(1)}(W_i)|^p\right]
    \leq
    M^{-p}\E\!\left[|\varphi(W_i,W_{j_2},\ldots,W_{j_k})|^p\right].
\]
Hence
\begin{align*}
    &\E\!\left[
    |\varphi_{\mathbf j}^{(1)}(W_i)|^p
    \indicator{|\varphi_{\mathbf j}^{(1)}(W_i)|>M}
    \right] \\
    &\leq
    \E\!\left[
    |\varphi(W_i,W_{j_2},\ldots,W_{j_k})|^p
    \indicator{|\varphi(W_i,W_{j_2},\ldots,W_{j_k})|>\sqrt M}
    \right]
    +
    M^{-p/2}
    \E\!\left[
    |\varphi(W_i,W_{j_2},\ldots,W_{j_k})|^p
    \right].
\end{align*}
	Taking the supremum over \((i,\mathbf j)_{\neq}\) in the preceding display, fix \(\varepsilon>0\). The assumption in \eqref{eqn:ass_unif_int} implies
	\[
	    \sup_{(i,\mathbf j)_{\neq}}
	    \E\!\left[
	    |\varphi(W_i,W_{j_2},\ldots,W_{j_k})|^p
	    \right]<\infty.
	\]
	Choose \(M\) sufficiently large that
	\[
	    M
	    \geq
	    \left(
	    \frac{
	    \sup_{(i,\mathbf j)_{\neq}}
	    \E\!\left[
	    |\varphi(W_i,W_{j_2},\ldots,W_{j_k})|^p
	    \right]}
	    {\varepsilon}
	    \right)^{2/p}
	\]
	and
	\[
	    \sup_{(i,\mathbf j)_{\neq}}
	    \E\!\left[
	    |\varphi(W_i,W_{j_2},\ldots,W_{j_k})|^p
	    \indicator{|\varphi(W_i,W_{j_2},\ldots,W_{j_k})|>\sqrt M}
	    \right]
	    \leq \varepsilon,
	\]
	where the second inequality is feasible by \eqref{eqn:ass_unif_int}. The first inequality gives
	\[
	    M^{-p/2}
	    \sup_{(i,\mathbf j)_{\neq}}
	    \E\!\left[
	    |\varphi(W_i,W_{j_2},\ldots,W_{j_k})|^p
	    \right]
	    \leq \varepsilon.
	\]
	Therefore, for this \(M\), and hence for all sufficiently large \(M\),
	\[
	    \sup_{(i,\mathbf j)_{\neq}}
	    \E\!\left[
	    |\varphi_{\mathbf j}^{(1)}(W_i)|^p
	    \indicator{|\varphi_{\mathbf j}^{(1)}(W_i)|>M}
	    \right]
	    \leq 2\varepsilon.
	\]
	Since \(\varepsilon>0\) is arbitrary,
	\[
	    \lim_{M\to\infty}\sup_{(i,\mathbf j)_{\neq}}
	    \E\!\left[
    |\varphi_{\mathbf j}^{(1)}(W_i)|^p
    \indicator{|\varphi_{\mathbf j}^{(1)}(W_i)|>M}
    \right]
    =0.
\]
\end{proof}

\section{Proofs for Weak Dependence}\label{app:weak_dependence_limit_proofs}

\paragraph{Temporal dictionary.}

This section maps exact \(m\)-dependence into the unified linked-tuple machinery. Relabel the residue classes \(\mathcal M_{n,m}(0),\ldots,\mathcal M_{n,m}(m)\) as \(M_n=m+1\) vertical partitions, with \(s=\ceil{n/(m+1)}\). Put \(i\sim_{n,m}j\) iff \(|i-j|\leq m\). Then linked-generic tuples are precisely lag-generic tuples, and the column-link count in \eqref{eq:abstract_column_link_count} holds because an \(m\)-lag window can meet any residue column at most twice. The exact-\(m\) decomposition and remainder rates below are therefore translations of \Cref{app:abstract_linked_tuples}. The temporal CLT is scheme-specific: it applies the blocking idea behind \citet[Theorem~4.6]{henze2024asymptotic} to the lag-generic first-order projection and then uses \citet[Theorem~2]{hansen2019asymptotic} on the retained block sums.

\subsection[Exact m-dependent order-k U-statistic]{Proof of \cref{thm:CLT_Ustats_mdep_k}}\label{app:mdepk}

\subsubsection{Projection notation and decomposition}
Write \( \displaystyle S_{n,m,k}:=\sum_{(i_1,\ldots,i_k)_{\neq}}\varphi(W_{i_1},\ldots,W_{i_k})\), so \(U_{n,k}=S_{n,m,k}/\Perm nk\). For each \(i\), let \(Q_i\) be the number of ordered co-index tuples \((i_2,\ldots,i_k)_{\neq}\) for which \((i,i_2,\ldots,i_k)\) is lag-generic, and set \(R_i:=\Perm{n-1}{k-1}-Q_i\), the number of excluded co-index tuples. Define
\[
    \bar\varphi_{n,m,k,i}^{(1)}(W_i)
    :=
    \frac1{Q_i}\sum_{\substack{(i_2,\ldots,i_k)_{\neq}\\ (i,i_2,\ldots,i_k)\ \mathrm{lag\mbox{-}generic}}}
    \varphi_{(i_2,\ldots,i_k)}^{(1)}(W_i),
\]
where \(\varphi_{(i_2,\ldots,i_k)}^{(1)}(W_i):=\E[\varphi(W_i,W_{i_2},\ldots,W_{i_k})\mid W_i]\).

Under the temporal dictionary \(i\sim_{n,m}j\) iff \(|i-j|\leq m\), lag-generic tuples are linked-generic tuples. Hence the unified Hoeffding reconstruction and the alternating-sum degeneracy relations in \eqref{eq:abstract_hoeffding_reconstruction}--\eqref{eq:abstract_hoeffding_degeneracy} apply directly.

For later bounds, write
\[
\mathcal R_{n,m,k,c}:=
\sum_{\boldsymbol\ell\in\{0,\ldots,m\}^k}
\sum_{\substack{i_a\in\mathcal M_{n,m}(\ell_a),\,a\in[k]\\ \mathbf i_{\neq}\\ \text{lag-gen}}}
\sum_{\substack{S\subseteq[k]\\ |S|=c}}
\varphi_{\mathbf i,S}^{(c)}(W_{\mathbf i_S})
\]
is the aggregate Hoeffding contribution of order \(c\) on the lag-generic part. Applying \Cref{lem:abstract_linked_orderk_decomposition} under the temporal dictionary gives the exact decomposition
\begin{align}\label{eq:mdep_orderk_master_decomp}
    S_{n,m,k}
    &=
    \underbrace{k\Perm{n-1}{k-1}\sum_{i=1}^n\bar\varphi_{n,m,k,i}^{(1)}(W_i)}_{L_{n,m,k}:\ \text{linear projection}}
    \;-\;\underbrace{k\sum_{i=1}^nR_i\bar\varphi_{n,m,k,i}^{(1)}(W_i)}_{\mathcal C_{n,m,k}:\ \text{projection-count correction}} \notag\\
    &\quad
    \;+\;\underbrace{
    \sum_{\boldsymbol\ell\in\{0,\ldots,m\}^k}
    \sum_{\substack{i_a\in\mathcal M_{n,m}(\ell_a),\,a\in[k]\\ \mathbf i_{\neq}\\ \text{lag-gen}}}
    \sum_{\substack{S\subseteq[k]\\ |S|\ge2}}
    \varphi_{\mathbf i,S}^{(|S|)}(W_{\mathbf i_S})
    }_{\mathcal R_{n,m,k}:\ \text{higher-order Hoeffding remainder}}
    \;+\;\underbrace{
    \sum_{\substack{(i_1,\ldots,i_k)_{\neq}\\ \text{non-gen}}}
    \varphi(W_{i_1},\ldots,W_{i_k})
    }_{\mathcal H_{n,m,k}:\ \text{lag-collision remainder}} .
\end{align}
For the bounds below, write \(\mathcal R_{n,m,k}=\sum_{c=2}^k\mathcal R_{n,m,k,c}\), where \(\mathcal R_{n,m,k,c}\) collects the order-\(c\) Hoeffding terms on the lag-generic part.
The remainder rates used below are obtained directly from \Cref{lem:abstract_projection_count,lem:abstract_linked_degenerate_variance,lem:abstract_linked_remainder_controls} under this temporal dictionary, with \(M_n=m+1\) and \(s_n=s=\ceil{n/(m+1)}\). This is the same degeneracy-and-linking mechanism as in \citet[pp.~858--859, Eqs~(6.5)--(6.10)]{janson2023asymptotic}: covariances without enough linked positions vanish, leaving only the linked configurations to count.

\paragraph{Part (a)} Under the temporal dictionary, \(\bar\varphi_{n,k,i}^{\sim}(W_i)=\bar\varphi_{n,m,k,i}^{(1)}(W_i)\) and \(M_n=m+1\). The UI condition \eqref{eq:kernel_ui_k} verifies the kernel condition in \Cref{prop:abstract_linked_linear_representation}; conditional Jensen's inequality and Minkowski's inequality give \(\sup_i\|\bar\varphi_{n,m,k,i}^{(1)}(W_i)\|_2<\infty\). Hence \Cref{prop:abstract_linked_linear_representation} gives
\[
    U_{n,k}
    =
    \frac kn\sum_{i=1}^n\bar\varphi_{n,m,k,i}^{(1)}(W_i)
    +
    \bigOp\Big(\frac{m+1}{n}\Big),
\]
which is part (a).

\paragraph{Part (b)} Since \(\E[\bar\varphi_{n,m,k,i}^{(1)}(W_i)]=0\), \(m\)-dependence gives
\begin{align}\label{eqn:cov_count}
    \E\Big[\Big(\frac1n\sum_{i=1}^n\bar\varphi_{n,m,k,i}^{(1)}(W_i)\Big)^2\Big]
    &=
    \frac1{n^2}\sum_{i=1}^n\sum_{i'=1}^n
    \mathrm{Cov}\big(\bar\varphi_{n,m,k,i}^{(1)}(W_i),\bar\varphi_{n,m,k,i'}^{(1)}(W_{i'})\big) \nonumber \\
    &\leq C\frac{m+1}{n},
\end{align}
because covariances vanish when \(|i-i'|>m\), leaving at most \(\bigO(n(m+1))\) nonzero covariance terms, and \Cref{lem:UI_transfer_projection} plus Jensen's inequality gives a uniform second-moment bound for the projection averages. Therefore \(n^{-1}\sum_i\bar\varphi_{n,m,k,i}^{(1)}(W_i)=\bigOp(\sqrt{(m+1)/n})\), and part (a) gives \(U_{n,k}=\bigOp(\sqrt{(m+1)/n})=o_p(1)\).

\paragraph{Part (c)}
From part (a),
\[
    \sqrt n\,U_{n,k}
    =
    \frac k{\sqrt n}\sum_{i=1}^n\bar\varphi_{n,m,k,i}^{(1)}(W_i)
    +
    \bigOp\Big(\frac{m+1}{\sqrt n}\Big).
\]
Since \(m\) is fixed, \((m+1)/\sqrt n\to0\), so it suffices to derive the limiting distribution of the linear projection. Let \(X_{n,i}:=\bar\varphi_{n,m,k,i}^{(1)}(W_i)\) and \(V_n:=\sum_{i=1}^nX_{n,i}\). The sequence \(\{X_{n,i}:1\leq i\leq n\}\) is centred and \(m\)-dependent. The blocking construction follows the proof of \citet[Theorem~4.6]{henze2024asymptotic}: retain long blocks and discard intervening \(m\)-gaps, so retained block sums are independent. Choose \(b_n=\floor{n^\beta}\) with \(0<\beta<(r-2)/\{2(r-1)\}\), which is possible because \(r>2\) by assumption. Let \(J_n:=\floor{n/(b_n+m)}\), and define retained blocks \(B_{n,j}:=\{(j-1)(b_n+m)+1,\ldots,(j-1)(b_n+m)+b_n\}\), \(j=1,\ldots,J_n\), separated by gaps of length \(m\). Write \(V_{n,B}:=\sum_{j=1}^{J_n}\sum_{i\in B_{n,j}}X_{n,i}\).

The block sums \(\Xi_{n,j}:=\sum_{i\in B_{n,j}}X_{n,i}\) are mutually independent by \(m\)-dependence. For any index set \(A\subseteq[n]\), the same covariance count as in \eqref{eqn:cov_count} gives \(\|\sum_{i\in A}X_{n,i}\|_2^2\leq C|A|(m+1)\). Let \(D_n\) be the set of discarded indices. The set \(D_n\) consists of the \(J_n\) gaps, with total gap size \(mJ_n\leq mn/(b_n+m)\leq mn/b_n\), and a terminal remainder of size less than \(b_n+m\). Hence \(\#D_n=\bigO(nm/b_n+b_n+m)\), and
\[
    \left\|\frac{V_n-V_{n,B}}{\sqrt n}\right\|_2^2
    =
    \frac1n\left\|\sum_{i\in D_n}X_{n,i}\right\|_2^2
    \leq
    \frac{C(m+1)}{n}\#D_n
    \leq
    C(m+1)\left(\frac{m}{b_n}+\frac{b_n}{n}+\frac{m}{n}\right)
    =
    o(1).
\]
Thus \((V_n-V_{n,B})/\sqrt n=o_p(1)\).
Since \(\sigma_{n,m,k}\geq\lambda>0\), this also yields
\[
    \frac{\V(V_{n,B})}{\V(V_n)}\to1,
    \qquad
    \V(V_n)=\frac{n\sigma_{n,m,k}^2}{k^2}.
\]

Apply \citet[Theorem~2]{hansen2019asymptotic} to the retained observations \(X_{n,i}\), grouped by the retained blocks \(B_{n,j}\). The blocks are independent, within-block dependence is unrestricted, and their constructed size \(b_n\) satisfies the same Hansen--Lee numerical size restrictions recorded in \Cref{ass:clus_het}. Indeed, \(b_n^2/n\to0\) because \(\beta<1/2\). Also, since \(J_n=\bigO(n/b_n)\),
\[
    \frac{(J_nb_n^r)^{2/r}}{n}
    =
    \bigO\!\left(n^{2/r-1}b_n^{2(r-1)/r}\right)
    =
    \bigO\!\left(n^{\{2-r+2\beta(r-1)\}/r}\right)
    =
    o(1),
\]
by \(\beta<(r-2)/\{2(r-1)\}\). The required uniform integrability for \(X_{n,i}\) follows from \Cref{lem:UI_transfer_projection} and Jensen's inequality over lag-generic co-indices. Therefore
\[
    \frac{kV_{n,B}}{\sqrt n\,\sigma_{n,m,k}}\xrightarrow{d}\mathcal N(0,1).
\]
Combining the retained-block CLT, the gap negligibility, and the negligible Hoeffding remainder via Slutsky's theorem gives
\[
    \sigma_{n,m,k}^{-1}\sqrt n\,U_{n,k}\xrightarrow{d}\mathcal N(0,1),
\]
which is part (c). \qed

\subsection[Proof of theorem 2: exact m-dependent order-2 U-statistic]{Proof of \cref{thm:CLT_Ustats_mdep}: exact $m$-dependent order-$2$ U-statistic}\label{app:mdep}

\Cref{thm:CLT_Ustats_mdep} is the $k=2$ case of \Cref{thm:CLT_Ustats_mdep_k}. The order-$k$ assumptions reduce directly to the order-$2$ assumptions stated in \Cref{Sect:FixedM}: the first-order projection average is \(\bar\varphi_{n,m,2,i}^{(1)}\), the fixed-$m$ variance is \(\sigma_{n,m,2}^2\), and \eqref{eq:kernel_ui_k} becomes \eqref{eq:kernel_ui_2}. Therefore \Cref{thm:CLT_Ustats_mdep_k}(a)--(c), evaluated at $k=2$, give
\[
    U_{n,2}=\frac2n\sum_{i=1}^n\bar\varphi_{n,m,2,i}^{(1)}(W_i)+\bigOp\Big(\frac{m+1}{n}\Big),\qquad
    U_{n,2}=\bigOp\Big(\sqrt{\frac{m+1}{n}}\Big)=o_p(1),
\]
and
\[
    \sigma_{n,m,2}^{-1}\sqrt n\,U_{n,2}\xrightarrow{d}\mathcal N(0,1).
\]
This proves \Cref{thm:CLT_Ustats_mdep}. \qed

\subsection[Near-epoch-dependent order-k U-statistic]{Proof of \cref{thm:CLT_double_limit}}\label{app:ned_approximation}

Throughout, write $m:=m_n$, $\nu_m:=\nu_{m_n}$, and $\rho_m:=\rho_{m_n}$ where no confusion arises. Recall $\{W_i^{(m)}\}$ is $m$-dependent (\Cref{lem:approx_mdep}).
In this NED proof, \(\varphi_i^{(1)}(W_i)\), \(\mu_{\mathbf i,m}\), and \(\varphi_i^{(1,m)}(W_i^{(m)})\) denote the corresponding date-specific independent-copy centring and first-projection objects.
The proof has the same projection-plus-remainder structure as the clustered and exact $m$-dependent arguments, but inserts two NED transfers. Namely, the statistic is written as the leading term \(k n^{-1}\sum_{i=1}^n\varphi_i^{(1)}(W_i)\), plus the statistic-level truncation error, the exact-$m$ higher-order Hoeffding remainder of the centred truncated statistic, and the projection-transfer error.

\paragraph{Part (a): asymptotic linear representation.} For \(\mathbf i=(i_1,\ldots,i_k)\), let
\[
    \mu_{\mathbf i,m}:=\E[\varphi(W_{i_1}^{(m)},\ldots,W_{i_k}^{(m)})],
    \qquad
    \bar\mu_{n,k}^{[m]}:=\frac1{\Perm nk}\sum_{\mathbf i_{\neq}}\mu_{\mathbf i,m},
\]
write \(U_{n,k}^{[m]}:=\Perm nk^{-1}\sum_{\mathbf i_{\neq}}\varphi(W_{i_1}^{(m)},\ldots,W_{i_k}^{(m)})\), and define
\[
    \varphi_i^{(1,m)}\big(W_i^{(m)}\big)
    :=
    \frac1{\Perm{n-1}{k-1}}
    \sum_{\substack{(i_2,\ldots,i_k)_{\neq}\\ i_a\in[n]\setminus\{i\},\,a=2,\ldots,k}}
    \E\!\left[
    \varphi(W_i^{(m)},W_{i_2}^{(m)},\ldots,W_{i_k}^{(m)})
    -
    \mu_{(i,i_2,\ldots,i_k),m}
    \mid W_i^{(m)}
    \right].
\]
Starting from the definition of \(U_{n,k}\), add and subtract the truncated statistic, its centring, and the truncated first projection to obtain
\begin{align*}
    U_{n,k}
    &=
    U_{n,k}^{[m]}
    +
    \underbrace{\big(U_{n,k}-U_{n,k}^{[m]}\big)}_{\mathcal T_{n,k}^{[m]}}\\
    &=
    \big(U_{n,k}^{[m]}-\bar\mu_{n,k}^{[m]}\big)
    +
    \underbrace{\bar\mu_{n,k}^{[m]}}_{\mathcal H_{n,k}^{[m]}}
    +
    \mathcal T_{n,k}^{[m]}\\
    &=
    \frac kn\sum_{i=1}^n\varphi_i^{(1,m)}\big(W_i^{(m)}\big)
    +
    \underbrace{\left\{
    U_{n,k}^{[m]}-\bar\mu_{n,k}^{[m]}-\frac kn\sum_{i=1}^n\varphi_i^{(1,m)}\big(W_i^{(m)}\big)
    \right\}}_{\mathcal R_{n,k}^{[m]}}
    +
    \mathcal H_{n,k}^{[m]}
    +
    \mathcal T_{n,k}^{[m]}\\
    &=
    \frac kn\sum_{i=1}^n\varphi_i^{(1)}(W_i)
    +
    \underbrace{\frac kn\sum_{i=1}^n\Big\{\varphi_i^{(1,m)}\big(W_i^{(m)}\big)-\varphi_i^{(1)}(W_i)\Big\}}_{\mathcal P_{n,k}^{[m]}}
    +\mathcal R_{n,k}^{[m]}
    +\mathcal H_{n,k}^{[m]}
    +\mathcal T_{n,k}^{[m]}.
\end{align*}
The third equality inserts the exact-\(m\) linear-plus-remainder decomposition for the centred truncated statistic \(U_{n,k}^{[m]}-\bar\mu_{n,k}^{[m]}\). This is \eqref{eq:mdep_orderk_master_decomp} applied to the centred truncated kernel along the exact-\(m\) approximating sequence supplied by \Cref{lem:approx_mdep}; the corresponding rates are obtained from \Cref{lem:abstract_projection_count,lem:abstract_linked_remainder_controls} under the temporal dictionary \(i\sim_{n,m}j\) iff \(|i-j|\leq m\).
It remains to show that these four labelled terms are all \(o_p(n^{-1/2})\).

\emph{The truncation term \(\mathcal T_{n,k}^{[m]}\).} Take a distinct tuple \(\mathbf i=(i_1,\ldots,i_k)\), write \(W_{\mathbf i}:=(W_{i_1},\ldots,W_{i_k})\) and \(W_{\mathbf i}^{(m)}:=(W_{i_1}^{(m)},\ldots,W_{i_k}^{(m)})\). Minkowski's inequality over the \(\Perm nk\) summands gives
\[
    \big\|U_{n,k}-U_{n,k}^{[m]}\big\|_2
    =
    \left\|
    \frac1{\Perm nk}\sum_{\mathbf i_{\neq}}
    \left\{\varphi(W_{\mathbf i})-\varphi(W_{\mathbf i}^{(m)})\right\}
    \right\|_2
    \leq
    \frac1{\Perm nk}\sum_{\mathbf i_{\neq}}
    \big\|\varphi(W_{\mathbf i})-\varphi(W_{\mathbf i}^{(m)})\big\|_2 .
\]
For each summand, \Cref{lem:effective_perturbation}(a) gives
\(
    \big\|\varphi(W_{\mathbf i})-\varphi(W_{\mathbf i}^{(m)})\big\|_2
    \leq
    Ck\rho_m .
\)
Hence \(\|U_{n,k}-U_{n,k}^{[m]}\|_2\leq Ck\rho_m\), so
\(
    \sqrt n\,\big\|U_{n,k}-U_{n,k}^{[m]}\big\|_2\leq Ck\sqrt n\,\rho_m=o(1)
\)
by $\sqrt n\,\rho_{m_n}=o(1)$. $L_2$ convergence to $0$ implies convergence in probability, so
\begin{equation*}
    \mathcal T_{n,k}^{[m]}=o_p(n^{-1/2}). \tag{A}
\end{equation*}

\emph{The centring term \(\mathcal H_{n,k}^{[m]}\).} The truncated kernel need not be exactly centred under its own independent-copy law. Since the original kernel is centred on the corresponding independent-copy law, \(\E[\varphi(W_{i_1},\ldots,W_{i_k})]=0\) for each distinct tuple. Hence the triangle inequality, Lyapunov's inequality, and \Cref{lem:effective_perturbation}(a) give
\[
\begin{aligned}
    |\mu_{\mathbf i,m}|
    &=
    \big|\E[\varphi(W_{i_1}^{(m)},\ldots,W_{i_k}^{(m)})
    -\varphi(W_{i_1},\ldots,W_{i_k})]\big|\\
    &\leq
    \|\varphi(W_{i_1}^{(m)},\ldots,W_{i_k}^{(m)})
    -\varphi(W_{i_1},\ldots,W_{i_k})\|_2
    \leq Ck\rho_m,
\end{aligned}
\]
uniformly over \(\mathbf i_{\neq}\). Therefore \(\sqrt n|\bar\mu_{n,k}^{[m]}|=o(1)\), and hence
\begin{equation*}
    \mathcal H_{n,k}^{[m]}=o(n^{-1/2}). \tag{B}
\end{equation*}

\emph{The exact-\(m\) remainder \(\mathcal R_{n,k}^{[m]}\).} The sequence $\{W_i^{(m)}\}$ satisfies \Cref{def:m_dep} by \Cref{lem:approx_mdep}, and the uniform integrability condition \eqref{eq:kernel_ui_k} holds uniformly in \(m\) for $\{\varphi(W_{i_1}^{(m)},\ldots,W_{i_k}^{(m)})\}$. The tuple-centred truncated summands inherit the required uniform integrability because \(\sup_{\mathbf i_{\neq}}|\mu_{\mathbf i,m}|=\bigO(\rho_m)\). Applying \Cref{lem:abstract_linked_remainder_controls} under the temporal dictionary, uniformly along \(m=m_n\), gives
\begin{equation*}
    \mathcal R_{n,k}^{[m]}=\bigOp\Big(\frac{m+1}{n}\Big)=o_p(n^{-1/2}), \tag{C}
\end{equation*}
using \(m+1=m_n+1=o(\sqrt n)\).

\emph{The projection-transfer term \(\mathcal P_{n,k}^{[m]}\).} Write, for each $i$,
\[
    \varphi_i^{(1,m)}\big(W_i^{(m)}\big)-\varphi_i^{(1)}(W_i)=\underbrace{\Big[\varphi_i^{(1,m)}\big(W_i^{(m)}\big)-\varphi_i^{(1)}\big(W_i^{(m)}\big)\Big]}_{(\mathrm I)}+\underbrace{\Big[\varphi_i^{(1)}\big(W_i^{(m)}\big)-\varphi_i^{(1)}(W_i)\Big]}_{(\mathrm{II})}.
\]
For (I), use the same coupled independent co-indices to define both projections. For a fixed ordered co-index tuple \(\mathbf j=(j_2,\ldots,j_k)\), let \((Y_a,Y_a^{(m)})\), \(a=2,\ldots,k\), be coupled generic original and truncated co-indices, independent of \(W_i^{(m)}\). The corresponding summand in (I) is
\[
    \E\!\left[
    \varphi\big(W_i^{(m)},Y_2^{(m)},\ldots,Y_k^{(m)}\big)
    -
    \varphi\big(W_i^{(m)},Y_2,\ldots,Y_k\big)
    \mid W_i^{(m)}
    \right]-\mu_{(i,\mathbf j),m}.
\]
The co-index replacement bound in \Cref{lem:effective_perturbation}(c), together with the centring bound used for \(\mathcal H_{n,k}^{[m]}\), gives
\[
    \|(\mathrm I)\|_2
    \leq
    C(k-1)\rho_m.
\]
For (II), with the same generic original co-indices \(Y_2,\ldots,Y_k\), the corresponding summand is
\[
    \E\!\left[
    \varphi\big(W_i^{(m)},Y_2,\ldots,Y_k\big)
    -
    \varphi\big(W_i,Y_2,\ldots,Y_k\big)
    \mid W_i^{(m)},W_i
    \right],
\]
so \Cref{lem:effective_perturbation}(b) gives $\|(\mathrm{II})\|_2\leq C\rho_m$.
Combining via Minkowski's inequality,
\[
    \big\|\varphi_i^{(1,m)}\big(W_i^{(m)}\big)-\varphi_i^{(1)}(W_i)\big\|_2\leq Ck\rho_m
\]
for every $i$, hence
\[
    \Big\|\frac k{\sqrt n}\sum_{i=1}^n\Big[\varphi_i^{(1,m)}\big(W_i^{(m)}\big)-\varphi_i^{(1)}(W_i)\Big]\Big\|_2\leq k^2C\sqrt n\,\rho_m=o(1)
\]
by $\sqrt n\,\rho_{m_n}=o(1)$, so
\begin{equation*}
    \mathcal P_{n,k}^{[m]}=o_p(n^{-1/2}). \tag{D}
\end{equation*}

\emph{Assembly for part (a).} Combining (A)--(D),
\[
    \mathcal T_{n,k}^{[m]}+\mathcal R_{n,k}^{[m]}+\mathcal H_{n,k}^{[m]}+\mathcal P_{n,k}^{[m]}
    =
    o_p(n^{-1/2}).
\]
Substituting this rate into the displayed decomposition gives
\begin{equation*}
    \sqrt n\,U_{n,k}=\frac k{\sqrt n}\sum_{i=1}^n\varphi_i^{(1)}(W_i)+o_p(1), \tag{E}
\end{equation*}
purely in terms of the original, untruncated sequence $\{W_i\}$; the proof sequence $m_n$ no longer appears. This proves part (a).

\paragraph{Part (b): stochastic order.} Since $\E[\varphi_i^{(1)}(W_i)]=0$, it suffices to show \(\sigma_n^2=\bigO(1)\). For large \(h\), set \(q=\floor{(h-1)/2}\), and approximate \(\varphi_i^{(1)}(W_i)\) and \(\varphi_{i+h}^{(1)}(W_{i+h})\) by their conditional expectations \(A_i\) and \(A_{i+h}\) on the \(q\)-window fields \(\mathcal F_{i-\floor{q/2}}^{i+\ceil{q/2}}\) and \(\mathcal F_{i+h-\floor{q/2}}^{i+h+\ceil{q/2}}\), respectively. These approximants, not the original projections, are functions of disjoint blocks of the i.i.d.\ innovation base and are therefore independent. By \Cref{lem:effective_perturbation}(b), \(\|\varphi_j^{(1)}(W_j)-A_j\|_2=\bigO(\rho_q)\), \(j\in\{i,i+h\}\), while \(\sup_i\|\varphi_i^{(1)}(W_i)\|_2<\infty\) by conditional Jensen. Since \(\operatorname{Cov}(A_i,A_{i+h})=0\), the covariance is the sum of \(\operatorname{Cov}(\varphi_i^{(1)}(W_i)-A_i,\varphi_{i+h}^{(1)}(W_{i+h}))\) and \(\operatorname{Cov}(A_i,\varphi_{i+h}^{(1)}(W_{i+h})-A_{i+h})\). Cauchy--Schwarz gives \(|\operatorname{Cov}(\varphi_i^{(1)}(W_i),\varphi_{i+h}^{(1)}(W_{i+h}))|\leq C\rho_q\leq Ch^{-\eta}\), uniformly in \(i\); the finitely many small lags are absorbed by the same \(L_2\) bound. Since \(\eta>1\), the covariance series is summable uniformly, and \(\sigma_n^2=\bigO(1)\). Chebyshev's inequality yields \(n^{-1/2}\sum_{i=1}^n\varphi_i^{(1)}(W_i)=\bigOp(1)\). Therefore \(n^{-1}\sum_{i=1}^n\varphi_i^{(1)}(W_i)=\bigOp(n^{-1/2})\), and part (a) gives \(U_{n,k}=\bigOp(n^{-1/2})=o_p(1)\).

\paragraph{Part (c): the central limit theorem.} Apply \citet[Cor.~24.7]{davidson1994stochastic} to the triangular array $X_{nt}:=\varphi_t^{(1)}(W_t)/(\sigma_n\sqrt n)$, $t=1,\ldots,n$, $n\geq1$, viewed as a functional of the full i.i.d.\ innovation base $\{\mathcal E_s\}$. For an arbitrary approximation width $q$, the finite-window fields $\mathcal F_{t-\floor{q/2}}^{t+\ceil{q/2}}$ enter only to verify the NED approximation condition:
\begin{enumerate}[(i)]
    \item $\E[X_{nt}]=0$ since the date-specific projection is centred. Also
    \[
        \V\Big[\sum_tX_{nt}\Big]
        =
        n^{-1}\sigma_n^{-2}\V\Big[\sum_{i=1}^n\varphi_i^{(1)}(W_i)\Big]
        =
        1
    \]
    by definition of \(\sigma_n^2\).
    \item (Domination.) Conditional Jensen gives $\E|\varphi_i^{(1)}(W_i)|^r\leq\E|\varphi(W_i,W_{i_2},\ldots,W_{i_k})|^r$, so the original-kernel part of \eqref{eq:kernel_ui_k} implies the uniform bound \(C_\varphi:=\sup_i\|\varphi_i^{(1)}(W_i)\|_r<\infty\). With $c_{nt}:=C_\varphi/(\sigma_n\sqrt n)$,
    \[
        \|X_{nt}/c_{nt}\|_r=\|\varphi_t^{(1)}(W_t)\|_r/C_\varphi\leq1
    \]
    for every $n,t$.
    \item (NED, and mixing of the base.) By \Cref{lem:effective_perturbation}(b), $\|\varphi_t^{(1)}(W_t)-\E[\varphi_t^{(1)}(W_t)\mid\mathcal F_{t-\floor{q/2}}^{t+\ceil{q/2}}]\|_2\leq C\rho_q$ for every $q$. Hence $X_{nt}$ is $L_2$-NED on $\{\mathcal E_s\}$ with $\|X_{nt}-\E[X_{nt}\mid\mathcal F_{t-\floor{q/2}}^{t+\ceil{q/2}}]\|_2\leq (C/(\sigma_n\sqrt n))\rho_q$. Since \(\rho_q=\bigO(q^{-\eta})\) with \(\eta>1\) by the rate condition in \Cref{thm:CLT_double_limit}, $X_{nt}$ is $L_2$-NED of size $-1$. The base $\{\mathcal E_t\}$ is i.i.d., hence strongly mixing of every size, in particular size $-r/(r-2)$.
    \item (Domination rate.) Let \(c_n^\ast:=\max_{1\leq t\leq n}c_{nt}\). Since \(c_n^\ast\leq C_\varphi/(\lambda\sqrt n)\), one has \(\sup_n n(c_n^\ast)^2\leq C_\varphi^2/\lambda^2<\infty\).
\end{enumerate}
Items (iii)--(iv) verify conditions (c$'$)--(d$'$) of \citet[Cor.~24.7]{davidson1994stochastic}, and (i)--(ii) match its conditions (a)--(b); the corollary gives $\sum_{t=1}^nX_{nt}\xrightarrow{d}\mathcal N(0,1)$, i.e.
\begin{equation*}
    \frac1{\sigma_n\sqrt n}\sum_{i=1}^n\varphi_i^{(1)}(W_i)\xrightarrow{d}\mathcal N(0,1). \tag{F}
\end{equation*}
Combining (E) and (F), and using \(\sigma_n\geq\lambda\), Slutsky's theorem gives
\[
    \frac{\sqrt n\,U_{n,k}}{k\sigma_n}
    =
    \frac1{\sigma_n\sqrt n}\sum_{i=1}^n\varphi_i^{(1)}(W_i)+o_p(1),
\]
and hence $(k\sigma_n)^{-1}\sqrt n\,U_{n,k}\xrightarrow{d}\mathcal N(0,1)$. \qed

\subsection[Near-epoch-dependent order-2 U-statistic]{Proof of \cref{thm:CLT_NED_2}: near-epoch-dependent order-$2$ U-statistic}\label{app:order2_ned}

\Cref{thm:CLT_NED_2} is the $k=2$ case of \Cref{thm:CLT_double_limit}. The assumptions reduce directly: \Cref{ass:kernel_perturb_k} becomes \Cref{ass:kernel_perturb_2} branch by branch, the order-$k$ projection \(\varphi_i^{(1)}\) is the ordinary order-$2$ first-order projection, and the multiplier $k$ becomes $2$. The uniform integrability condition is exactly the order-$2$ version stated in \Cref{thm:CLT_Ustats_mdep}, imposed both on the original kernel and on the finite-range approximants, with the additional \(r>2\) requirement inherited for the CLT conclusion. Therefore \Cref{thm:CLT_double_limit}(a)--(c), evaluated at $k=2$, give
\[
    U_{n,2}=\frac2n\sum_{i=1}^n\varphi_i^{(1)}(W_i)+o_p(n^{-1/2}),\qquad
    U_{n,2}=\bigOp(n^{-1/2})=o_p(1),
\]
and
\[
    (2\sigma_n)^{-1}\sqrt n\,U_{n,2}\xrightarrow{d}\mathcal N(0,1),
\]
where \(\sigma_n^2\) is the order-$2$ variance normalisation in \Cref{thm:CLT_NED_2}. \qed

\subsection{Auxiliary lemmata for weak dependence}\label{app:ned}

\begin{lemma}[Exact $m$-dependence of the approximating sequence]\label{lem:approx_mdep}
    Under \Cref{ass:ned}, $\{W_i^{(m)}\}_{i\geq1}$ is $m$-dependent in the sense of \Cref{def:m_dep}.
\end{lemma}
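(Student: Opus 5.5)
The plan is to verify \Cref{def:m_dep} directly from the i.i.d.\ base. Fix \(m\in\mathbb N_0\) and \(s\geq1\). By construction, \(W_i^{(m)}=\E[W_i\mid\mathcal F_{i-\floor{m/2}}^{i+\ceil{m/2}}]\) is \(\mathcal F_{i-\floor{m/2}}^{i+\ceil{m/2}}\)-measurable, since the conditional expectation is taken as a fixed measurable version. Hence \(\sigma(W_1^{(m)},\ldots,W_s^{(m)})\subseteq\mathcal F_{-\infty}^{s+\ceil{m/2}}\), because every retained window for \(i\le s\) has right endpoint at most \(s+\ceil{m/2}\). Likewise,
\[
\sigma(W_{s+m+j}^{(m)}:j\geq1)\subseteq\mathcal F_{s+m+1-\floor{m/2}}^{\infty},
\]
because every retained window for \(i\geq s+m+1\) has left endpoint at least \(s+m+1-\floor{m/2}\).

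The key step is to check that these two innovation index ranges are disjoint. Since \(\floor{m/2}+\ceil{m/2}=m\), the left endpoint of the second range equals \(s+\ceil{m/2}+1\), which is strictly larger than the right endpoint \(s+\ceil{m/2}\) of the first range. The first field is therefore generated by \(\{\mathcal E_t:t\le s+\ceil{m/2}\}\) and the second by \(\{\mathcal E_t:t\ge s+\ceil{m/2}+1\}\).

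Independence of these two fields follows from the i.i.d.\ (indeed merely independent) base. Finite-dimensional cylinder events of each side form \(\pi\)-systems that generate the respective \(\sigma\)-fields, and on such events the joint probability factorises by mutual independence of the \(\mathcal E_t\). Dynkin's \(\pi\)--\(\lambda\) theorem then extends the factorisation to the full \(\sigma\)-fields, which is the grouping property for independent families. Sub-\(\sigma\)-fields of independent \(\sigma\)-fields are independent, so the two \(\sigma\)-fields in \Cref{def:m_dep} are independent for every \(s\). The same argument gives the remark that index sets pairwise separated by more than \(m\) generate independent \(\sigma\)-fields.

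The only real obstacle is the measurability and bookkeeping: the window convention \([i-\floor{m/2},\,i+\ceil{m/2}]\) must produce exactly span \(m\). This is settled by the identity \(\floor{m/2}+\ceil{m/2}=m\). No moment condition beyond the integrability needed for \(W_i^{(m)}\) to be defined, which is implicit in \Cref{ass:ned}'s \(L_2\) bound, is used.
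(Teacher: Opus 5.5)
Your proof is correct and takes essentially the same route as the paper: $W_i^{(m)}$ is measurable with respect to its width-$(m+1)$ innovation block, blocks attached to indices more than $m$ apart are disjoint, and the i.i.d.\ base then gives independence. Your version is more explicit than the paper's, which invokes Doob--Dynkin and asserts independence of separated groups without further detail, whereas you verify \Cref{def:m_dep} directly by splitting the innovations at $s+\ceil{m/2}$ (using $\floor{m/2}+\ceil{m/2}=m$) and justify the grouping step via the $\pi$--$\lambda$ theorem.
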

\begin{proof}
By definition,
\[
    W_i^{(m)}
    =
    \E[W_i\mid\mathcal F_{i-\floor{m/2}}^{i+\ceil{m/2}}],
\]
so \(W_i^{(m)}\) is measurable with respect to the sigma-field generated by the retained innovation block \((\mathcal E_{i-\floor{m/2}},\ldots,\mathcal E_{i+\ceil{m/2}})\). By the Doob--Dynkin measurability lemma, there is a measurable map \(h_{i,m}\) such that
\[
    W_i^{(m)}
    =
    h_{i,m}(\mathcal E_{i-\floor{m/2}},\ldots,\mathcal E_{i+\ceil{m/2}}).
\]
Two retained innovation blocks are disjoint whenever the corresponding observation indices differ by more than \(m\). Since the base sequence is i.i.d., the sigma-fields generated by such separated groups of approximants are independent. Hence \(\{W_i^{(m)}\}_{i\geq1}\) is \(m\)-dependent.
\end{proof}

\begin{lemma}[Effective perturbation bounds]\label{lem:effective_perturbation} 
    Suppose \Cref{ass:ned,ass:kernel_perturb_k} hold. Then, uniformly over relevant indices,
    \begin{enumerate}[(a)]
        \item the coordinatewise kernel perturbation and the induced $U$-statistic perturbation satisfy
        \[
            \big\|\varphi(W_{i_1},\ldots,W_{i_k})-\varphi(W_{i_1}^{(m)},\ldots,W_{i_k}^{(m)})\big\|_2=\bigO(\rho_m),
            \qquad
            \big\|U_{n,k}-U_{n,k}^{[m]}\big\|_2=\bigO(\rho_m);
        \]
        \item the first-order projection perturbation and the conditional NED projection approximation satisfy
        \[
            \big\|\varphi_i^{(1)}(W_i)-\varphi_i^{(1)}(W_i^{(m)})\big\|_2
            +
            \big\|\varphi_i^{(1)}(W_i)-\E[\varphi_i^{(1)}(W_i)\mid\mathcal F_{i-\floor{m/2}}^{i+\ceil{m/2}}]\big\|_2
            =\bigO(\rho_m);
        \]
        \item for coupled generic co-indices \(Y_a,Y_a^{(m)}\), \(a=2,\ldots,k\), independent of \(W_i^{(m)}\),
        \[
            \big\|\varphi(W_i^{(m)},Y_2^{(m)},\ldots,Y_k^{(m)})
            -\varphi(W_i^{(m)},Y_2,\ldots,Y_k)\big\|_2
            =\bigO(\rho_m).
        \]
    \end{enumerate}
\end{lemma}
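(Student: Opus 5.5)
The plan is to reduce all three parts to the one-coordinate perturbation bound of \Cref{ass:kernel_perturb_k}. First I convert that bound into a single $L_2$ rate $\bigO(\rho_m)$ for one replaced coordinate, handling each branch separately. Then I assemble (a)--(c) by telescoping and by the $L_2$-contraction of conditional expectation.

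\emph{Step 1 (one-coordinate $L_2$ rate).} Fix a coordinate $a$ and a relevant held-fixed collection $Z_{-a}$. Let $\Delta$ be the difference of the kernel evaluated with $W_{i_a}$ and with $W_{i_a}^{(m)}$ in slot $a$.

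In the expected-H\"older branch, \Cref{ass:ned} gives directly
\[
\|\Delta\|_2\leq C\nu_m^\alpha=C\rho_m.
\]

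In the data-dependent modulus branch, write $D:=\|W_{i_a}-W_{i_a}^{(m)}\|$ and split on the event $\{L\leq M\}$ for a level $M$ to be chosen.
\begin{itemize}
\item On $\{L\leq M\}$, $|\Delta|\leq MD$, so this part has $L_2$ norm at most $M\nu_m$.
\item On $\{L>M\}$, apply H\"older's inequality with exponents $r/2$ and $r/(r-2)$:
\[
\|\Delta\indicator{L>M}\|_2\leq\|\Delta\|_r\,\Prob(L>M)^{(r-2)/(2r)}.
\]
Here $\|\Delta\|_r$ is bounded uniformly by the moment bound \eqref{eq:random_modulus_kernel_moment_k} on the mixed tuples $Z_{\mathbf i}^{(m,a)}$. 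Markov's inequality with the tail condition gives $\Prob(L>M)\leq M^{-1}\E[L\indicator{L>M}]=\bigO(M^{-1-\delta})$. Hence this part is $\bigO(M^{-\tau})$ with $\tau=(1+\delta)(r-2)/(2r)$.
\end{itemize}
Minimising $M\nu_m+CM^{-\tau}$ over $M$ gives the choice $M=\nu_m^{-1/(1+\tau)}$. The resulting bound is $\|\Delta\|_2=\bigO(\nu_m^{\tau/(1+\tau)})=\bigO(\nu_m^\beta)=\bigO(\rho_m)$. (If $\nu_m=0$ the bound is trivial.)

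\emph{Step 2 (part (a)).} Telescope along the mixed tuples:
\[
\varphi(W_{\mathbf i})-\varphi(W_{\mathbf i}^{(m)})=\sum_{a=1}^k\big\{\varphi(Z_{\mathbf i}^{(m,a-1)})-\varphi(Z_{\mathbf i}^{(m,a)})\big\}.
\]
Each summand is a one-coordinate replacement in slot $a$, with the held-fixed collection consisting of truncated coordinates before $a$ and original coordinates after $a$. These collections belong to the ``relevant mixed'' family in \Cref{ass:kernel_perturb_k}. Minkowski's inequality and Step 1 then give the bound $Ck\rho_m$. The $U$-statistic bound follows by Minkowski's inequality over the $\Perm nk$ summands, as in term (A).

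\emph{Step 3 (part (b)).} Write $\varphi_i^{(1)}(w)$ as an average over co-index tuples of $\E[\varphi(w,Y_2,\ldots,Y_k)]$, with $(Y_a)$ generic independent copies independent of $(W_i,W_i^{(m)})$. Then
\[
\varphi_i^{(1)}(W_i)-\varphi_i^{(1)}(W_i^{(m)})
\]
is the average of $\E[\varphi(W_i,\mathbf Y)-\varphi(W_i^{(m)},\mathbf Y)\mid W_i,W_i^{(m)}]$. Conditional Jensen, Minkowski's inequality over co-indices, and Step 1 (with the independent copies as the held-fixed argument) bound this by $C\rho_m$.

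For the second term, note that $\varphi_i^{(1)}(W_i^{(m)})$ is $\mathcal F_{i-\floor{m/2}}^{i+\ceil{m/2}}$-measurable. Since the conditional expectation is the $L_2$-best approximation among measurable functions of that field, the conditional NED error is at most the first term.

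\emph{Step 4 (part (c)).} Telescope over the $k-1$ co-index slots, replacing $Y_a$ by $Y_a^{(m)}$ one at a time, with $W_i^{(m)}$ and the already-replaced co-indices held fixed. Step 1 again gives $C(k-1)\rho_m$.

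I expect the main obstacle to be Step 1 in the data-dependent branch. Two things must be checked there. First, the $r$-th moment bound has to cover exactly the mixed and independent-copy configurations produced by the telescoping. Second, the truncation level $M$ must be chosen so that the rate lands precisely at $\nu_m^\beta$. The remaining steps are routine contraction and telescoping arguments.
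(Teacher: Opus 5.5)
Your proposal is correct and follows essentially the same route as the paper. Part (a) telescopes along the hybrid tuples $Z_{\mathbf i}^{(m,a)}$, part (c) telescopes over the co-index slots, and part (b) uses conditional Jensen with independent generic co-indices plus the $L_2$-projection property of conditional expectation. In the data-dependent modulus branch you use the same split on $\{L\le M\}$, H\"older and Markov on the tail, and the choice $M\asymp\nu_m^{-1/(1+\tau)}$ yielding $\nu_m^{\tau/(1+\tau)}=\rho_m$. The only difference is organisational, and your explicit H\"older exponents $r/2$, $r/(r-2)$ and the Markov step $\Prob(L>M)\le M^{-1}\E[L\indicator{L>M}]$ make the paper's terse tail argument precise.
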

\begin{proof}
\textbf{Expected-H\"older branch.}

\textbf{Part (a).}
Define the hybrid tuples \(W_{\mathbf i}^{(m,0)}:=(W_{i_1},\ldots,W_{i_k})\), \(W_{\mathbf i}^{(m,k)}:=(W_{i_1}^{(m)},\ldots,W_{i_k}^{(m)})\), and, for \(1\leq a\leq k\),
	\[
	    W_{\mathbf i}^{(m,a)}
	    :=
	    (W_{i_1}^{(m)},\ldots,W_{i_a}^{(m)},W_{i_{a+1}},\ldots,W_{i_k}).
	\]
	The full kernel difference telescopes as
	\[
	    \varphi(W_{\mathbf i})-\varphi(W_{\mathbf i}^{(m)})
	    =
	    \sum_{a=1}^k
	    \{\varphi(W_{\mathbf i}^{(m,a-1)})-\varphi(W_{\mathbf i}^{(m,a)})\}.
	\]
	Each summand replaces only coordinate \(a\). The expected-H\"older branch and \Cref{ass:ned} therefore give
	\[
	    \|\varphi(W_{\mathbf i}^{(m,a-1)})-\varphi(W_{\mathbf i}^{(m,a)})\|_2
	    \leq
	    C\|W_{i_a}-W_{i_a}^{(m)}\|_2^\alpha
	    \leq
	    C\nu_m^\alpha
	    =
	    \bigO(\rho_m).
	\]
	Minkowski's inequality and fixed \(k\) give the tuple bound in part (a). Averaging over the \(\Perm nk\) summands gives the \(U\)-statistic bound.

\textbf{Part (b).}
For the projection perturbation, let \(Y_2,\ldots,Y_k\) be independent generic co-indices, independent of \((W_i,W_i^{(m)})\). Then
\[
    \varphi_i^{(1)}(W_i)-\varphi_i^{(1)}(W_i^{(m)})
    =
    \E\!\left[
        \varphi(W_i,Y_2,\ldots,Y_k)-\varphi(W_i^{(m)},Y_2,\ldots,Y_k)
        \mid W_i,W_i^{(m)}
    \right].
\]
Conditional Jensen's inequality and the expected-H\"older branch give
\[
\begin{aligned}
	    \big\|\varphi_i^{(1)}(W_i)-\varphi_i^{(1)}(W_i^{(m)})\big\|_2
	    &\leq
	    \big\|\varphi(W_i,Y_2,\ldots,Y_k)-\varphi(W_i^{(m)},Y_2,\ldots,Y_k)\big\|_2\\
	    &\leq C\|W_i-W_i^{(m)}\|_2^\alpha
	    \leq C\nu_m^\alpha
	    =\bigO(\rho_m).
\end{aligned}
	\]
For the conditional NED projection approximation, set \(\mathcal F_i^{(m)}:=\mathcal F_{i-\floor{m/2}}^{i+\ceil{m/2}}\). Since \(W_i^{(m)}\) is \(\mathcal F_i^{(m)}\)-measurable, \(\varphi_i^{(1)}(W_i^{(m)})\) is also \(\mathcal F_i^{(m)}\)-measurable. Hence the \(L_2\)-projection property of conditional expectation gives
\[
    \big\|\varphi_i^{(1)}(W_i)-\E[\varphi_i^{(1)}(W_i)\mid\mathcal F_i^{(m)}]\big\|_2
    \leq
    \big\|\varphi_i^{(1)}(W_i)-\varphi_i^{(1)}(W_i^{(m)})\big\|_2
    =
    \bigO(\rho_m).
\]

\textbf{Part (c).}
Set \(Y^{(m,1)}:=(Y_2,\ldots,Y_k)\), \(Y^{(m,k)}:=(Y_2^{(m)},\ldots,Y_k^{(m)})\), and, for \(2\leq a\leq k\),
\[
    Y^{(m,a)}
    :=
    (Y_2^{(m)},\ldots,Y_a^{(m)},Y_{a+1},\ldots,Y_k).
\]
Then
\[
    \varphi(W_i^{(m)},Y_2,\ldots,Y_k)
    -
    \varphi(W_i^{(m)},Y_2^{(m)},\ldots,Y_k^{(m)})
    =
    \sum_{a=2}^k
    \{\varphi(W_i^{(m)},Y^{(m,a-1)})-\varphi(W_i^{(m)},Y^{(m,a)})\}.
\]
Each term is a one-coordinate replacement of a generic independent co-index, so the expected-H\"older branch gives \(\bigO(\rho_m)\) for each summand. Minkowski's inequality and fixed \(k\) give part (c).

\textbf{Data-dependent modulus branch.}
The same proof applies after replacing each one-coordinate perturbation by the following truncation bound. For a generic one-coordinate replacement, write
\[
    \Delta_m
    :=
    \varphi(W,W_2^\star,\ldots,W_k^\star)
    -
    \varphi(\widetilde W,W_2^\star,\ldots,W_k^\star),
\]
where \((W,\widetilde W)\) is either an original/truncated coordinate pair or a coupled generic co-index pair, and the remaining coordinates \(Z_{-a}:=(W_2^\star,\ldots,W_k^\star)\) are the relevant independent generic arguments. On \(\{L(W,\widetilde W,Z_{-a})\le M\}\), the data-dependent modulus branch of \Cref{ass:kernel_perturb_k} and \Cref{ass:ned} give \(\|\Delta_m\indicator{L(W,\widetilde W,Z_{-a})\le M}\|_2\le M\nu_m\). On \(\{L(W,\widetilde W,Z_{-a})>M\}\), the triangle inequality, \eqref{eq:random_modulus_kernel_moment_k}, Markov's inequality, and H\"older's inequality give \(\|\Delta_m\indicator{L(W,\widetilde W,Z_{-a})>M}\|_2=\bigO(M^{-\tau})\), uniformly over the same indices. Hence
\[
    \|\Delta_m\|_2\leq C(M\nu_m+M^{-\tau}).
\]
Optimising with \(M\asymp\nu_m^{-1/(1+\tau)}\) yields the uniform one-coordinate bound
\[
    \|\Delta_m\|_2=\bigO(\nu_m^{\tau/(1+\tau)})=\bigO(\rho_m).
\]
For part (a), insert this bound into the actual-coordinate telescoping display for \(\varphi(W_{\mathbf i})-\varphi(W_{\mathbf i}^{(m)})\), then apply Minkowski's inequality and use fixed \(k\). For part (b), insert the same one-coordinate bound inside the conditional-Jensen display for \(\varphi_i^{(1)}(W_i)-\varphi_i^{(1)}(W_i^{(m)})\), and then use the \(L_2\)-projection property of conditional expectation for the conditional NED projection approximation. For part (c), insert the same bound into the generic co-index telescoping display and apply Minkowski's inequality over the \(k-1\) co-index replacements. \qedhere
\end{proof}

\begin{lemma}[Crossing bound for sign kernels]\label{lem:sign_crossing_bound}
Let \(W_i=(X_i,Y_i)\) and let \(W_i^{(m)}=(X_i^{(m)},Y_i^{(m)})\). Suppose that, for \(V\in\{X,Y\}\), with \(A_{ij,V}:=V_i-V_j\) and \(D_{i,m,V}:=V_i^{(m)}-V_i\), the conditional distribution function
\(
    F_{ij,V}(a\mid W_j,D_{i,m,V})
    :=
    \Prob(A_{ij,V}\leq a\mid W_j,D_{i,m,V})
\)
is continuously differentiable in a neighbourhood of zero and has conditional density \(f_{ij,V}\) satisfying
\[
    \sup_{|a|\leq \bar a}f_{ij,V}(a\mid W_j,D_{i,m,V})\leq C
\]
almost surely, uniformly over the relevant indices, for some \(\bar a>0\). Then
\[
    \big\|\operatorname{sign}(V_i-V_j)-\operatorname{sign}(V_i^{(m)}-V_j)\big\|_2
    \leq
    C\|W_i-W_i^{(m)}\|_2^{1/2}.
\]
Consequently, Kendall-type products and finite symmetrisations of such sign comparisons, including the order-\(3\) Spearman kernel, satisfy the expected-H\"older branch of \Cref{ass:kernel_perturb_2,ass:kernel_perturb_k} with \(\alpha=1/2\).
\end{lemma}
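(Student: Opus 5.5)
The plan is to exploit that the two sign functions differ only when $V_j$ lies between $V_i$ and $V_i^{(m)}$. Since sign takes values in $\{-1,0,1\}$, the difference is bounded by $2$ in absolute value and vanishes unless a crossing occurs. Writing $A:=A_{ij,V}=V_i-V_j$ and $D:=D_{i,m,V}=V_i^{(m)}-V_i$, we have $V_i^{(m)}-V_j=A+D$. The two signs can differ only if $A$ and $A+D$ lie on opposite sides of zero, or one of them equals zero. This forces $|A|\leq|D|$. Hence
\[
\big\|\operatorname{sign}(V_i-V_j)-\operatorname{sign}(V_i^{(m)}-V_j)\big\|_2^2\leq 4\,\Prob(|A|\leq|D|).
\]
The whole argument therefore reduces to bounding this crossing probability linearly in $\E|D|$.

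For the crossing probability, condition on $(W_j,D)$. On the event $\{|D|\leq\bar a\}$, the conditional density bound and the mean value theorem give
\[
\Prob(|A|\leq|D|\mid W_j,D)=F(|D|\mid\cdot)-F(-|D|^-\mid\cdot)\leq 2C|D|.
\]
Continuity of $F$ near zero rules out atoms, so the left limit causes no trouble. On $\{|D|>\bar a\}$, bound the conditional probability by $1\leq|D|/\bar a$. Taking expectations yields $\Prob(|A|\leq|D|)\leq(2C+\bar a^{-1})\E|D|$. Then $\E|D|\leq\|D\|_2\leq\|W_i-W_i^{(m)}\|_2$, because $D$ is a coordinate of $W_i^{(m)}-W_i$ and the Euclidean norm dominates each coordinate. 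Taking square roots gives the claimed bound with exponent $1/2$ and a constant depending only on $C$ and $\bar a$. All of these bounds are uniform over the indices because the density bound is assumed uniform.

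For the consequence, a Kendall product $\operatorname{sign}(X_i-X_j)\operatorname{sign}(Y_i-Y_j)$ changes in coordinate $i$ through two factors. Telescope as $s_Xs_Y-\tilde s_X\tilde s_Y=(s_X-\tilde s_X)s_Y+\tilde s_X(s_Y-\tilde s_Y)$ and use $|s|\leq1$. Each piece is then controlled by the first part, applied with $V=X$ and with $V=Y$. The held-fixed argument $W_j$ may be replaced by any of the relevant original, truncated or independent-copy versions, since the hypothesis is stated conditionally on $W_j$ for those choices. For the Spearman kernel and other finite symmetrisations, the perturbed coordinate appears in finitely many sign products. Any factor not involving it is unchanged, and Minkowski's inequality over the six permutations gives the expected-H\"older branch of \Cref{ass:kernel_perturb_2,ass:kernel_perturb_k} with $\alpha=1/2$.

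The main obstacle is the crossing step, specifically handling large $|D|$ outside the density neighbourhood and justifying that conditioning on $D$ keeps $A$ continuously distributed. The hypothesis is phrased precisely to allow the latter. I expect the Markov-type bound $\indicator{|D|>\bar a}\leq|D|/\bar a$ to handle the former cleanly.
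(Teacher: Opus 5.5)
Your proposal is correct and follows essentially the same route as the paper. Both reduce the sign discrepancy to the crossing event $\{|A|\le|D|\}$, bound its conditional probability given $(W_j,D)$ linearly in $|D|$ (via the density bound near zero, and the trivial bound away from it), pass to $\E|D|\le\|W_i-W_i^{(m)}\|_2$, and handle Kendall products and the Spearman symmetrisation by $|ab-a'b'|\le|a-a'|+|b-b'|$ and Minkowski's inequality.
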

\begin{proof}
It is enough to prove the claim for one sign comparison. Write \(A:=A_{ij,V}\), \(D:=D_{i,m,V}\), and \(F(a\mid W_j,D):=F_{ij,V}(a\mid W_j,D)\). For \(0\leq u\leq\bar a\), the mean-value theorem gives some \(\xi_u\in(-u,u)\) such that
\[
\begin{aligned}
    \Prob(|A|\leq u\mid W_j,D)
    &=
    F(u\mid W_j,D)-F(-u\mid W_j,D)
    =
    2u f(\xi_u\mid W_j,D)\\
    &\leq
    2u\sup_{|a|\leq u}f(a\mid W_j,D)
    \leq
    2Cu.
\end{aligned}
\]
For \(u>\bar a\), the trivial bound by one gives the same linear bound after replacing \(C\) by \(C\vee\bar a^{-1}\). Thus \(\Prob(|A|\leq u\mid W_j,D)\leq Cu\) for all \(u\geq0\). Since \(A_{ij,V}^{(m)}=A+D\), a sign can change only when \(A\) and \(A+D\) lie on opposite sides of zero. Hence, with the convention \(\operatorname{sign}(0)=0\),
\[
    |\operatorname{sign}(A)-\operatorname{sign}(A+D)|^2
    \leq
    4\indicator{|A|\leq |D|}.
\]
Taking conditional expectations and using the derived crossing bound at the random radius \(|D|\) gives
\[
\begin{aligned}
    \E|\operatorname{sign}(A)-\operatorname{sign}(A+D)|^2
    &\leq
    4\E\!\left[\Prob\big(|A|\leq |D|\mid W_j,D\big)\right]\\
    &\leq
    C\E|D|.
\end{aligned}
\]
Since \(|D|\leq \|W_i-W_i^{(m)}\|\), Lyapunov's inequality gives
\[
    \E|\operatorname{sign}(A)-\operatorname{sign}(A+D)|^2
    \leq
    C\|W_i-W_i^{(m)}\|_2.
\]
Taking square roots yields the sign-comparison bound. For Kendall products, use \(|ab-a'b'|\leq |a-a'|+|b-b'|\) for \(a,b,a',b'\in[-1,1]\). For finite symmetrisations, apply the same argument term by term and use Minkowski's inequality. \qedhere
\end{proof}

\section{Proofs for Feasible Inference}\label{app:inference_proofs}

\subsection{Cluster-robust projection covariance estimator}\label{app:cluster_var}
Throughout, write
\[
    \tilde\varphi_{n,k,i}
    :=
    \frac1{\Perm{n-1}{k-1}}
    \sum_{\substack{(i_2,\ldots,i_k)_{\neq}\\ i_a\in[n]\setminus\{i\},\ a=2,\ldots,k}}
    \varphi(W_i,W_{i_2},\ldots,W_{i_k})
\]
for the all-coindex projection average, and
\[
    \tilde\sigma_{n,k}^2:=\dfrac{k^2}{n}\sum_{g=1}^G\Big(\sum_{i\in\G_g}\bar\varphi_{n,k}^{(1)}(W_i)\Big)^2
\]
for the infeasible \emph{oracle} statistic. The proof separates the infeasible oracle statistic from the feasible all-coindex projection statistic: the decomposition below isolates the feasible-to-oracle gap, while the final assembly combines its negligibility with oracle consistency.
The estimator-to-target comparison starts from
\[
    \frac{\hat\sigma_{n,k}^2}{\sigma_{n,k}^2}-1
    =
    \left(\frac{\tilde\sigma_{n,k}^2}{\sigma_{n,k}^2}-1\right)
    +
    \frac{\hat\sigma_{n,k}^2-\tilde\sigma_{n,k}^2}{\sigma_{n,k}^2}.
\]
The first term is the oracle term; the second is the feasible-to-oracle gap.

\begin{lemma}[Cluster-robust variance decomposition]\label{lem:clus_var_decomp}
Let
\[
    A_g:=\sum_{i\in\G_g}\bar\varphi_{n,k}^{(1)}(W_i),\qquad
    D_g:=\sum_{i\in\G_g}\bigl(\tilde\varphi_{n,k,i}-\bar\varphi_{n,k}^{(1)}(W_i)\bigr),
    \qquad
    E_g:=D_g-n_gU_{n,k}.
\]
Then, for every \(n\),
\( \displaystyle 
    \hat\sigma_{n,k}^2-\tilde\sigma_{n,k}^2
    =
    \frac{k^2}{n}\sum_{g=1}^GE_g(2A_g+E_g).
\)
\end{lemma}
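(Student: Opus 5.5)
This identity is purely algebraic, so the plan is to expand both estimators cluster by cluster and compare them. No probabilistic input is needed.

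First I rewrite the feasible cluster sum. By definition, \(\hat\varphi^{(1)}_{n,k,i}=\tilde\varphi_{n,k,i}-U_{n,k}\). Adding and subtracting \(\bar\varphi_{n,k}^{(1)}(W_i)\) inside the sum over \(i\in\G_g\) gives
\[
    \sum_{i\in\G_g}\hat\varphi^{(1)}_{n,k,i}
    =
    \sum_{i\in\G_g}\bar\varphi_{n,k}^{(1)}(W_i)
    +\sum_{i\in\G_g}\bigl(\tilde\varphi_{n,k,i}-\bar\varphi_{n,k}^{(1)}(W_i)\bigr)
    -n_gU_{n,k}.
\]
The right-hand side equals \(A_g+D_g-n_gU_{n,k}=A_g+E_g\). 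Here I only use that \(U_{n,k}\) does not depend on \(i\), so its sum over the cluster is \(n_gU_{n,k}\).

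Next I square and subtract. Since \((A_g+E_g)^2-A_g^2=E_g(2A_g+E_g)\), summing over \(g\in[G]\) and multiplying by \(k^2/n\) gives
\[
    \hat\sigma_{n,k}^2-\tilde\sigma_{n,k}^2=\frac{k^2}{n}\sum_{g=1}^G E_g(2A_g+E_g).
\]
This uses the definitions of \(\hat\sigma_{n,k}^2\) and of the oracle \(\tilde\sigma_{n,k}^2=\frac{k^2}{n}\sum_g A_g^2\).

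The only point to check is well-definedness, namely that \(\bar\varphi_{n,k}^{(1)}(W_i)\) exists. This requires \(Q_{n,i}>0\), which holds whenever at least \(k\) clusters are present, as in the regime of \Cref{thm:CLT_Ustats_Clus_k}. With that noted, the identity holds for every \(n\) and every realisation. There is no genuine obstacle; the substantive work of bounding \(E_g\) comes later, in the consistency proof.
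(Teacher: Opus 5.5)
Your proof is correct and matches the paper's argument step for step. Like the paper, you rewrite $\sum_{i\in\G_g}\hat\varphi^{(1)}_{n,k,i}$ as $A_g+E_g$ by adding and subtracting $\bar\varphi_{n,k}^{(1)}(W_i)$, then apply $(A_g+E_g)^2-A_g^2=E_g(2A_g+E_g)$. Your caveat that the identity needs $Q_{n,i}>0$ (at least $k$ clusters) is a fair refinement of the paper's ``for every $n$''. It holds for all large $n$, since $G\geq n/m_n\to\infty$ under \Cref{ass:clus_het}.
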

\begin{proof}
By the definition of \(\hat\varphi^{(1)}_{n,k,i}\) in \Cref{Sect:ClusterEstimation},
\[
    \hat\varphi^{(1)}_{n,k,i}=
    \frac1{\Perm{n-1}{k-1}}
    \sum_{\substack{(i_2,\ldots,i_k)_{\neq}\\ i_a\in[n]\setminus\{i\},\ a=2,\ldots,k}}
    \varphi(W_i,W_{i_2},\ldots,W_{i_k})
    -U_{n,k} = \tilde\varphi_{n,k,i} - U_{n,k}.
\]
Let
\[
    d_{n,k,i}
    :=
    \tilde\varphi_{n,k,i}-\bar\varphi_{n,k}^{(1)}(W_i),
\]
then adding and subtracting \(\bar\varphi_{n,k}^{(1)}(W_i)\) gives
\[
    \hat\varphi^{(1)}_{n,k,i}=
    \bar\varphi_{n,k}^{(1)}(W_i)+d_{n,k,i}-U_{n,k}.
\]
Therefore, for each cluster \(g\),
\[
    \sum_{i\in\G_g}\hat\varphi^{(1)}_{n,k,i}
    =
    \sum_{i\in\G_g}\bar\varphi_{n,k}^{(1)}(W_i)
    +
    \sum_{i\in\G_g}d_{n,k,i}
    -
    n_gU_{n,k}.
\]
With \(A_g,D_g,E_g\) as defined in the statement, this says \(\sum_{i\in\G_g}\hat\varphi^{(1)}_{n,k,i}=A_g+E_g\). Therefore
\[
    \hat\sigma_{n,k}^2-\tilde\sigma_{n,k}^2
    =
    \frac{k^2}{n}\sum_{g=1}^G\{(A_g+E_g)^2-A_g^2\}
    =
    \frac{k^2}{n}\sum_{g=1}^GE_g(2A_g+E_g),
\]
which proves the claim. \qedhere
\end{proof}

\subsubsection{Oracle consistency}

The oracle term uses the cluster-generic first-order projection and is handled directly by the Hansen--Lee cluster law of large numbers.

\begin{lemma}[Oracle cluster-robust consistency]\label{lem:clus_oracle_consistency}
Under the assumptions of \Cref{thm:CLT_Ustats_Clus_k}, including \(\sigma_{n,k}\geq\lambda>0\),
\[
    \tilde\sigma_{n,k}^2/\sigma_{n,k}^2\xrightarrow{p}1.
\]
\end{lemma}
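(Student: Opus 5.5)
The plan is to show $\tilde\sigma_{n,k}^2-\sigma_{n,k}^2\xrightarrow{p}0$. Since $\sigma_{n,k}\geq\lambda>0$, this gives the ratio statement. Write $A_g:=\sum_{i\in\G_g}\bar\varphi_{n,k}^{(1)}(W_i)$ as in \Cref{lem:clus_var_decomp}. Then
\[
\tilde\sigma_{n,k}^2-\sigma_{n,k}^2=\frac{k^2}{n}\sum_{g=1}^G\bigl(A_g^2-\E[A_g^2]\bigr).
\]
By \Cref{ass:sampling} this is a sum of independent, mean-zero cluster-level terms. The argument is therefore a weak law of large numbers for the triangular array $\{A_g^2\}$, in the form of the Hansen--Lee cluster LLN (\citet[Theorem 1]{hansen2019asymptotic}). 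It is applied to the squared scalar array, with the cluster-size conditions of \Cref{ass:clus_het} inherited directly.

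I would first establish uniform integrability of $|\bar\varphi_{n,k}^{(1)}(W_i)|^r$. \Cref{lem:UI_transfer_projection} with $p=r$ transfers \eqref{eq:clus_ui_k} to each first projection $\varphi^{(1)}_{\mathbf j}(W_i)$. Because $\bar\varphi_{n,k}^{(1)}(W_i)$ is a convex average of these, Jensen's inequality $|\bar\varphi|^r\leq Q_{n,i}^{-1}\sum|\varphi^{(1)}_{\mathbf j}|^r$, combined with the de la Vallée-Poussin characterisation, yields uniform integrability of the average. These are exactly the hypotheses Hansen and Lee impose for their cluster-robust variance consistency, which is stated for $\frac1n\sum_g(\sum_{i\in\G_g}X_i)(\sum_{i\in\G_g}X_i)'$ with $X_i$ uniformly $r$-integrable. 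The quickest route is therefore to invoke their variance-estimation result (the oracle part of their Theorem 3 proof) with $X_i=\bar\varphi_{n,k}^{(1)}(W_i)$, which is mean zero since it is known to be centred.

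If a self-contained argument is preferred, I would truncate. Split $\bar\varphi=\bar\varphi\indicator{|\bar\varphi|\le B}+\bar\varphi\indicator{|\bar\varphi|>B}$, with $B=B_n\to\infty$ chosen slowly.
\begin{itemize}
\item For the bounded part, the variance of $n^{-1}\sum_g A_g^{B\,2}$ is at most $n^{-2}\sum_g\E[(A_g^B)^4]\leq n^{-2}B^4\sum_g n_g^4$. This can be bounded by $B^4 (m_n^2/n)\cdot n^{-1}\sum_g n_g^2$, and \Cref{ass:clus_het} makes $n^{-1}\sum_g n_g^2$ bounded (since $r\geq2$), so the term tends to zero for slowly growing $B$.
\item For the tail part, Minkowski and Cauchy--Schwarz give $\E\,n^{-1}\sum_g |A_g^{T}|^2\leq n^{-1}\sum_g n_g\sum_{i\in\G_g}\E|\bar\varphi_i^T|^2$. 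Using Hölder with exponents $r/2$ and $\bigl(\sum_g n_g^r\bigr)^{2/r}\leq Cn$, this is at most $C\sup_i\|\bar\varphi_i\indicator{|\bar\varphi_i|>B}\|_r^2\to0$.
\item The cross terms between the two parts are handled by Cauchy--Schwarz.
\end{itemize}

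The main obstacle is the tail step: the uniform-integrability bound has to be made uniform in the growing, heterogeneous cluster sizes. This is precisely where the first part of \Cref{ass:clus_het} with exponent $r$ is needed, and where the projection-level uniform integrability from \Cref{lem:UI_transfer_projection} must be carried to the $r$-th power rather than merely the second.
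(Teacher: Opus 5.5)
Your quickest route, applying Hansen and Lee's Theorem 3 to $X_i=\bar\varphi_{n,k}^{(1)}(W_i)$, is exactly the paper's proof. That array is mean zero and has uniformly integrable $r$-th powers by \Cref{lem:UI_transfer_projection} plus Jensen. What fails is the reduction you wrap around it and your self-contained fallback.

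\textbf{The false claim.} Both rest on the statement that \Cref{ass:clus_het} makes $n^{-1}\sum_g n_g^2$ bounded ``since $r\geq2$''. Monotonicity of $\ell^p$ norms gives the opposite inequality, $(\sum_g n_g^r)^{2/r}\leq\sum_g n_g^2$. So for $r>2$ the assumption does not control $\sum_g n_g^2/n$: equal clusters of size $m$ only need $m\leq CG^{1-2/r}$, while $\sum_g n_g^2/n=m$. Hence $\sigma_{n,k}^2$ can diverge; the paper allows this, cf.\ the $\bigOp(\sigma_{n,k}\vee1)$ bound in \Cref{lem:clus_feasible_oracle_gap}.

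\textbf{Why the difference target fails.} Your target $\tilde\sigma_{n,k}^2-\sigma_{n,k}^2\xrightarrow{p}0$ is false in general. Take $k=2$, $\varphi(x,y)=x+y$, $W_i=Z_{g(i)}$ with $Z_g$ i.i.d.\ $\mathcal N(0,1)$, equal clusters of size $m=n^{2/5}$, and $r=6$. Then \Cref{ass:clus_het} holds, since $(\sum_g n_g^6)^{1/3}/n=mG^{-2/3}=1$ and $m^2/n\to0$. Also $\bar\varphi_n(W_i)=Z_{g(i)}$ and $\sigma_{n,2}^2=4m$. The difference $\tilde\sigma_{n,2}^2-\sigma_{n,2}^2=4m(G^{-1}\sum_g Z_g^2-1)$ has standard deviation $4\sqrt2\,m^{3/2}/\sqrt n\asymp n^{1/10}\to\infty$. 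Yet the ratio $G^{-1}\sum_g Z_g^2\xrightarrow{p}1$.

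\textbf{Where your bounds break.} Your bounded-part bound $B^4(m_n^2/n)\,n^{-1}\sum_g n_g^2$ silently uses $\sum_g n_g^2=\bigO(n)$. So does your tail bound ``at most $C\sup_i\|\cdot\|_r^2$''. The self-contained argument therefore covers only $r=2$.

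\textbf{The fix: prove the ratio directly.} The law of large numbers you need is for $A_g^2$ normalised by $n\sigma_{n,k}^2$. Its hypothesis is a Lindeberg condition, not the cluster-size conditions ``inherited directly''.
\begin{itemize}
\item With $A_g$ as in \Cref{lem:clus_var_decomp}, set $\xi_g:=k^2A_g^2/(n\sigma_{n,k}^2)$. These are independent and nonnegative, with $\sum_g\E\xi_g=1$ and $\sum_g\xi_g=\tilde\sigma_{n,k}^2/\sigma_{n,k}^2$.
\item Truncate $\xi_g$ at a fixed $\epsilon>0$, not the observations. Then $\V[\sum_g\xi_g\indicator{\xi_g\leq\epsilon}]\leq\epsilon\sum_g\E\xi_g=\epsilon$. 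The remainder has $L_1$ norm at most $2L_n(\epsilon)$, where $L_n(\epsilon):=\sum_g\E[\xi_g\indicator{\xi_g>\epsilon}]$.
\item Your observation-level truncation belongs inside $L_n(\epsilon)$. With $t_n:=\lambda\sqrt{\epsilon n}/k$, the bound $\sigma_{n,k}\geq\lambda$ gives $L_n(\epsilon)\leq k^2\lambda^{-2}n^{-1}\sum_g\E[A_g^2\indicator{|A_g|>t_n}]$.
\item The bounded piece of $A_g$ is at most $Bm_n\leq t_n/2$ for large $n$, because $m_n=o(\sqrt n)$. Hence $A_g^2\indicator{|A_g|>t_n}\leq4(A_g^{>})^2\indicator{|A_g^{>}|>t_n/2}$.
\item Let $\epsilon_B:=\sup_i\|\bar\varphi_{n,k}^{(1)}(W_i)\indicator{|\bar\varphi_{n,k}^{(1)}(W_i)|>B}\|_r$, so that $\|A_g^{>}\|_r\leq n_g\epsilon_B$. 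Markov's inequality then yields $L_n(\epsilon)\leq C\,\epsilon_B^r\,n^{-r/2}\sum_g n_g^r\leq C\,\epsilon_B^r$.
\end{itemize}
This last step is where the first half of \Cref{ass:clus_het}, with exponent $r$, genuinely enters. Let $B\to\infty$ and then $\epsilon\to0$ to get $\tilde\sigma_{n,k}^2/\sigma_{n,k}^2\xrightarrow{p}1$. The condition $L_n(\epsilon)\to0$ is the same Lindeberg condition that the clustered CLT for $\sum_g A_g$ requires.
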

\begin{proof}
The cluster sums \(\sum_{i\in\G_g}\bar\varphi_{n,k}^{(1)}(W_i)\) form a mean-zero triangular array under \Cref{ass:sampling}, because \(\bar\varphi_{n,k}^{(1)}\) averages only generic co-index tuples. Applying \citet[Theorem 3]{hansen2019asymptotic} to these sums, using \Cref{ass:clus_het} for the cluster-size condition, \Cref{lem:UI_transfer_projection} and Jensen's inequality for the transferred uniform integrability of \(\bar\varphi_{n,k}^{(1)}(W_i)\), and \(\sigma_{n,k}\geq\lambda>0\) for non-degeneracy, gives the claim. \qedhere
\end{proof}

\subsubsection{Feasible-to-oracle gap}

The preceding plug-in bound controls the cluster-level projection error. The next lemma converts that bound into negligibility of the feasible variance estimator relative to the oracle variance estimator.

\begin{lemma}[Feasible-to-oracle cluster-robust gap]\label{lem:clus_feasible_oracle_gap}
Under the assumptions of \Cref{thm:CLT_Ustats_Clus_k}, including \(\sigma_{n,k}\geq\lambda>0\),
\[
    \frac{\hat\sigma_{n,k}^2-\tilde\sigma_{n,k}^2}{\sigma_{n,k}^2}=o_p(1).
\]
\end{lemma}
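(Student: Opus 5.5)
By \Cref{lem:clus_var_decomp} and $\sigma_{n,k}\geq\lambda$, it suffices to show $n^{-1}\sum_g E_g(2A_g+E_g)=o_p(1)$. By Cauchy--Schwarz,
\[
\Big|\frac1n\sum_gE_gA_g\Big|\leq\Big(\frac1n\sum_gE_g^2\Big)^{1/2}\Big(\frac1n\sum_gA_g^2\Big)^{1/2}.
\]
The second factor equals $\tilde\sigma_{n,k}/k$. It is $\bigOp(1)$ because \Cref{lem:clus_oracle_consistency} gives $\tilde\sigma_{n,k}^2/\sigma_{n,k}^2\to_p1$, and $\sigma_{n,k}^2\leq Ck^2n^{-1}\sum_g n_g^2\leq C k^2m_n$. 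On that route one would need $\sigma_{n,k}^2=\bigO(1)$. Rather than assume this, I would work in the ratio form: everything is divided by $\sigma_{n,k}^2$, and $n^{-1}\sum_gA_g^2/\sigma_{n,k}^2=\bigOp(1)$. The whole task therefore reduces to
\[
\frac{1}{n\sigma_{n,k}^2}\sum_gE_g^2=o_p(1),
\]
and it suffices to show $n^{-1}\sum_g\E[E_g^2]\to0$.

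Split $E_g=D_g-n_gU_{n,k}$. For the second piece, $n^{-1}\sum_g n_g^2U_{n,k}^2\leq m_nU_{n,k}^2$. Part (b) of \Cref{thm:CLT_Ustats_Clus_k} gives $U_{n,k}^2=\bigOp(m_n/n)$, so this piece is $\bigOp(m_n^2/n)=o_p(1)$ by \Cref{ass:clus_het}.

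For $D_g$, decompose $d_{n,k,i}=\tilde\varphi_{n,k,i}-\bar\varphi^{(1)}_{n,k}(W_i)$ exactly as in \Cref{lem:abstract_linked_orderk_decomposition}, but with the anchor $i$ held fixed. Of the $\Perm{n-1}{k-1}$ co-index tuples, the $R_i\leq Cm_nn^{k-2}$ non-generic ones contribute a collision piece $\Perm{n-1}{k-1}^{-1}\sum_{\text{non-gen}}\varphi$. This piece has $L_2$ norm $\bigO(m_n/n)$ uniformly in $i$. The generic ones contribute
\[
(Q_{n,i}/\Perm{n-1}{k-1}-1)\,\bar\varphi^{(1)}_{n,k}(W_i),
\]
which is also $\bigO(m_n/n)$ in $L_2$, plus an anchor-fixed Hoeffding remainder. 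That remainder consists of the terms $\varphi^{(1)}_{\mathbf i,\{b\}}(W_{i_b})$ for $b\geq2$ and the terms with $|S|\geq2$, averaged over generic co-indices.

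Summing within a cluster gives, by Minkowski, a contribution of $\bigO(n_gm_n/n)$ from the first two parts. Their share of $n^{-1}\sum_g\E[D_g^2]$ is then at most $Cn^{-1}\sum_g n_g^2m_n^2/n^2\leq Cm_n^3/n^2=o(1)$.

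The main obstacle is the remainder part. My plan is to write
\[
\sum_{i\in\G_g}\big(\text{co-index first-order terms}\big)=\frac{1}{\Perm{n-1}{k-1}}\sum_{i\in\G_g}\sum_{\mathbf j}\sum_{b}\varphi^{(1)}_{\cdot,\{b\}}(W_{j_b}),
\]
which is an average of mean-zero projections over units outside $\G_g$. Its second moment is computed using cross-cluster independence, so that only pairs $j_b,j'_b$ sharing a cluster survive. This gives $\E[\cdot^2]\leq Cn_g^2\,n^{-2}\sum_{h}n_h^2\leq Cn_g^2m_n/n$, and summing yields $n^{-1}\sum_g\E[\cdot]\leq Cm_n^2/n\to0$.

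The $|S|\geq2$ terms are handled in the same spirit, using the degeneracy \eqref{eq:abstract_hoeffding_degeneracy} and the linked-configuration counting of \Cref{lem:abstract_linked_degenerate_variance}. Covariances vanish unless every active coordinate is linked, and linked coordinates share a cluster. The count then gives second moments of order $n_g^2(m_n/n)^{c-1}$ or smaller. I would carry this bookkeeping out for $c=2$, which dominates, and bound the higher $c$ analogously. Combining the pieces yields $n^{-1}\sum_g\E[E_g^2]=o(1)$, hence the lemma.
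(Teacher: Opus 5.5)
Your proposal is correct and follows the paper's argument. It uses the same Cauchy--Schwarz reduction (in ratio form, using $\sigma_{n,k}\geq\lambda$) to $n^{-1}\sum_g E_g^2=o_p(1)$, the same split $E_g=D_g-n_gU_{n,k}$ with \Cref{thm:CLT_Ustats_Clus_k}(b), and the same three-piece decomposition of $d_{n,k,i}$ (non-generic complement, denominator correction, generic fluctuation) as in \Cref{lem:clus_plugin_error}. The only difference is in the generic fluctuation: the paper does not Hoeffding-decompose it further, but conditions on the anchor cluster, observes that summands with disjoint co-index cluster sets are uncorrelated, and counts $\bigO(m_nn^{k-2})$ overlapping co-index tuples; this gives directly the same $\bigO(m_n^2/n)$ bound that your separate treatment of the co-index linear terms and the $|S|\geq2$ terms delivers.
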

\begin{proof}
Use the notation \(A_g,D_g,E_g\) from \Cref{lem:clus_var_decomp}. Since
\( \displaystyle 
    D_g
    =
    \sum_{i\in\G_g}\bigl(\tilde\varphi_{n,k,i}-\bar\varphi_{n,k}^{(1)}(W_i)\bigr)
    =
    \sum_{i\in\G_g}d_{n,k,i},
\)
\Cref{lem:clus_plugin_error} gives
\[
    \frac1n\sum_{g=1}^GD_g^2=o_p(1).
\]
Also, \(\sum_gn_g^2/n\leq m_n\) by \Cref{ass:clus_het}, and \Cref{thm:CLT_Ustats_Clus_k}(b) gives \(U_{n,k}^2=\bigOp(m_n/n)\). Therefore
\[
    \frac1n\sum_{g=1}^GE_g^2
    \leq
    \frac2n\sum_{g=1}^GD_g^2
    +
    2U_{n,k}^2\frac1n\sum_{g=1}^Gn_g^2
    =
    o_p(1)+\bigOp\Big(\frac{m_n^2}{n}\Big)
    =
    o_p(1).
\]
By \Cref{lem:clus_var_decomp} and Cauchy--Schwarz,
\[
    |\hat\sigma_{n,k}^2-\tilde\sigma_{n,k}^2|
    \leq
    k^2
    \left(\frac1n\sum_{g=1}^GE_g^2\right)^{1/2}
    \left(\frac1n\sum_{g=1}^G(2A_g+E_g)^2\right)^{1/2}.
\]
The first factor is \(o_p(1)\). For the second factor, \((2A_g+E_g)^2\leq8A_g^2+2E_g^2\). Since \(\tilde\sigma_{n,k}^2=k^2n^{-1}\sum_gA_g^2\), \Cref{lem:clus_oracle_consistency} implies \(n^{-1}\sum_gA_g^2=\bigOp(\sigma_{n,k}^2)\). Also, \(n^{-1}\sum_gE_g^2=o_p(1)\). Hence
\[
    \left(\frac1n\sum_{g=1}^G(2A_g+E_g)^2\right)^{1/2}
    =
    \bigOp(\sigma_{n,k}\vee1),
\]
and so \(\hat\sigma_{n,k}^2-\tilde\sigma_{n,k}^2=o_p(\sigma_{n,k}\vee1)\). Since \(\sigma_{n,k}\geq\lambda>0\), the displayed ratio is \(o_p(1)\). \qedhere
\end{proof}

\subsubsection[Proof of cluster-robust variance consistency]{Proof of \cref{thm:cluster_var_consistency}}
Using the estimator-to-target decomposition displayed at the start of this subsection,
\[
    \frac{\hat\sigma_{n,k}^2}{\sigma_{n,k}^2}-1
    =
    \left(\frac{\tilde\sigma_{n,k}^2}{\sigma_{n,k}^2}-1\right)
    +
    \frac{\hat\sigma_{n,k}^2-\tilde\sigma_{n,k}^2}{\sigma_{n,k}^2}.
\]
The first term is \(o_p(1)\) by \Cref{lem:clus_oracle_consistency}, and the second term is \(o_p(1)\) by \Cref{lem:clus_feasible_oracle_gap}. Hence \(\hat\sigma_{n,k}^2/\sigma_{n,k}^2\xrightarrow{p}1\). \qed

\subsubsection{Auxiliary lemmata}

The following technical lemma supplies the aggregate law of large numbers for the all-coindex projection error used in \Cref{lem:clus_feasible_oracle_gap}.

\begin{lemma}[All-coindex projection error]\label{lem:clus_plugin_error}
Under the assumptions of \Cref{thm:CLT_Ustats_Clus_k},
\[
    \frac1n\sum_{g=1}^G\left(\sum_{i\in\G_g}d_{n,k,i}\right)^2=o_p(1).
\]
\end{lemma}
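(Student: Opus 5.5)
Fix $i$ and expand $d_{n,k,i}=\tilde\varphi_{n,k,i}-\bar\varphi_{n,k}^{(1)}(W_i)$ through the linked-tuple decomposition of \Cref{lem:abstract_linked_orderk_decomposition}, taken with the anchor $i$ held fixed. Write $P:=\Perm{n-1}{k-1}$. For each co-index tuple $\mathbf j\in\mathcal A^{\sim}_{n,k}(i)$ apply the Hoeffding reconstruction \eqref{eq:hoeffding_reconstruction} to $\varphi(W_i,W_{\mathbf j})$. This gives
\[
    d_{n,k,i}=d^{(1)}_i+d^{(2)}_i+d^{(3)}_i+d^{(4)}_i .
\]
The four pieces are as follows.
\begin{itemize}
\item[(i)] $d^{(1)}_i:=-P^{-1}R_i\,\bar\varphi^{(1)}_{n,k}(W_i)$ is the count correction.
\item[(ii)] $d^{(2)}_i:=P^{-1}\sum_{\mathbf j\ \mathrm{non\text{-}generic}}\varphi(W_i,W_{\mathbf j})$ is the collision part.
\item[(iii)] $d^{(3)}_i:=P^{-1}\sum_{\mathbf j\in\mathcal A^{\sim}_{n,k}(i)}\sum_{a=2}^k\varphi^{(1)}_{(i,\mathbf j),\{a\}}(W_{j_a})$ collects the first-order projections onto the co-indices.
\item[(iv)] $d^{(4)}_i$ collects the Hoeffding terms with $|S|\ge2$.
\end{itemize}
By $(x_1+\dots+x_4)^2\le4\sum x_p^2$, it suffices to show $n^{-1}\sum_g(\sum_{i\in\G_g}d^{(p)}_i)^2=o_p(1)$ for each $p$. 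For this I will bound expectations and use Markov's inequality.

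\emph{Terms (i) and (ii).} By \eqref{eq:orderk_Ri_count}, $R_i/P\le Cm_n/n$. Cauchy--Schwarz within a cluster then gives
\[
    \E\Big(\sum_{i\in\G_g}d^{(1)}_i\Big)^2\le n_g^2C m_n^2/n^2 .
\]
Summing, $n^{-1}\sum_g n_g^2 m_n^2/n^2\le m_n^3/n^2\to0$. For (ii), the number of non-generic co-index tuples is $O(m_nn^{k-2})$, so $\|d^{(2)}_i\|_2\le Cm_n/n$ by Minkowski and the moment bound from \eqref{eq:clus_ui_k}. The same computation then applies.

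\emph{Term (iv).} Sum over $i\in\G_g$ and then over $g$. The quantity $\sum_g(\sum_{i\in\G_g}d^{(4)}_i)^2$ is bounded by $m_n\sum_i (d^{(4)}_i)^2$. The column-tuple covariance argument of \Cref{lem:abstract_linked_degenerate_variance}, applied with the anchor fixed, gives $\E (d^{(4)}_i)^2=O(m_n/n)$. Indeed, fixing $i$ reduces the problem to an order-$(k-1)$ degenerate sum whose normalised variance is of order $(M_n/n)^{c-1}$ with $c\ge2$. Hence the term is $O(m_n^2/n)=o(1)$. The crude bound loses nothing essential here because $m_n^2/n\to0$ by \Cref{ass:clus_het}.

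\emph{Term (iii): the main obstacle.} The quantity $d^{(3)}_i$ is an average over $j$ of order $n^{-1}\sum_j h_{ij}(W_j)$ with mean zero, and it is not negligible for a single $i$ in a way that is summable naively. Its cluster sum is
\[
    \sum_{i\in\G_g}d^{(3)}_i\approx\frac{1}{n}\sum_{j\notin\G_g}\sum_{i\in\G_g}h_{ij}(W_j).
\]
I will compute its second moment by cross-cluster independence of the $W_j$, using the fact that all summands are mean zero functions of $W_j$ alone. Only pairs $j,j'$ in the same cluster contribute. This gives
\[
    \E\Big(\sum_{i\in\G_g}d^{(3)}_i\Big)^2\le \frac{C n_g^2}{n^2}\sum_{h}n_h^2\le C n_g^2 m_n/n .
\]
Then $n^{-1}\sum_g n_g^2m_n/n\le m_n^2/n\to0$. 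I expect the delicate part to be bookkeeping the dependence of $h_{ij}$ on the tuple through $\varphi^{(1)}_{(i,\mathbf j),\{a\}}$. Uniform $L_2$ bounds from conditional Jensen should suffice, since $W_i$ does not enter these functions once $W_{j_a}$ is the only argument.

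Finally, the four bounds together with Markov's inequality yield the lemma.
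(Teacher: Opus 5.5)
Your proof is correct. Its skeleton matches the paper's: your $d^{(1)}_i$ and $d^{(2)}_i$ are the paper's denominator correction $-\mathcal D_{n,k,i}$ and non-generic complement $\mathcal N_{n,k,i}$, bounded the same way. Your $d^{(3)}_i+d^{(4)}_i$ is exactly the paper's empirical-fluctuation term $\mathcal E_{n,k,i}=(Q_{n,i}/\Perm{n-1}{k-1})\,e_{n,k,i}$.

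The routes differ only in how this fluctuation is handled.
\begin{itemize}
\item The paper never Hoeffding-decomposes it. It conditions on the reference cluster and notes that $\varphi(W_i,W_{\mathbf j})-\varphi^{(1)}_{\mathbf j}(W_i)$ and $\varphi(W_{i'},W_{\mathbf j'})-\varphi^{(1)}_{\mathbf j'}(W_{i'})$ are uncorrelated whenever $\mathbf j,\mathbf j'$ use disjoint clusters. A single count of overlapping pairs then gives $|\E[e_{n,k,i}e_{n,k,i'}]|\le Cm_n/n$ for $i,i'$ in the same cluster, and hence the $Cm_n^2/n$ rate.
\item You split the fluctuation into a co-index linear part and a degenerate part. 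The linear part you treat by writing the cluster sum as a sum of mean-zero deterministic functions of single $W_j$, $j\notin\G_g$. The degenerate part you treat with an anchored version of \Cref{lem:abstract_linked_degenerate_variance}.
\end{itemize}

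That anchored version is not literally the stated lemma, which concerns fixed column-tuple sums. Its Step 2 does transfer verbatim, though. Every active co-index of a $|S|\ge2$ term lies outside $\G_{g(i)}$, so it must share a cluster with a co-index of the partner tuple. This leaves $\bigO(m_nn^{2k-3})$ surviving pairs against the normalisation $\Perm{n-1}{k-1}^2$, which is your $\bigO(m_n/n)$.

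What each route buys: yours makes explicit which Hoeffding layer carries the fluctuation and reuses the existing remainder machinery. The paper's is shorter and needs neither the anchored Hoeffding expansion nor that adaptation.

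Term (iii) is also not the obstacle you anticipate. The same cross-cluster computation gives $\E(d^{(3)}_i)^2\le Cm_n/n$ for a single $i$. So the crude bound $\sum_g(\sum_{i\in\G_g}x_i)^2\le m_n\sum_i x_i^2$ you use for (iv) already disposes of (iii). Indeed it disposes of the whole fluctuation at once via the diagonal bound $\E e_{n,k,i}^2\le Cm_n/n$, at the same $m_n^2/n$ rate.
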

\begin{proof}
Let
\[
    \mathcal A_i:=\{(i_2,\ldots,i_k)_{\neq}:g(i,i_2,\ldots,i_k)_{\neq}\},
    \qquad
    \mathcal B_i:=\{(i_2,\ldots,i_k)_{\neq}:i_a\in[n]\setminus\{i\},\ a=2,\ldots,k\},
\]
and write \(Q_{n,i}:=\#\mathcal A_i\).
By \eqref{eq:orderk_Ri_count}, \(R_i:=\Perm{n-1}{k-1}-Q_{n,i}=\bigO(m_nn^{k-2})\) uniformly in \(i\). Since \(Q_{n,i}=\Perm{n-1}{k-1}-R_i\), \(\Perm{n-1}{k-1}\asymp n^{k-1}\), and \(m_n/n\to0\), one has \(Q_{n,i}/\Perm{n-1}{k-1}=1+\bigO(m_n/n)\) and \(Q_{n,i}\asymp \Perm{n-1}{k-1}\asymp n^{k-1}\), uniformly in \(i\). For \(\mathbf j=(j_2,\ldots,j_k)\in\mathcal A_i\), write
\[
    \varphi_{\mathbf j}^{(1)}(W_i):=\E[\varphi(W_i,W_{j_2},\ldots,W_{j_k})\mid W_i].
\]
\eqref{eq:clus_ui_k} and conditional Jensen's inequality imply that the centred summands below have uniformly bounded second moments.

Define the cluster-generic empirical fluctuation
\[
    e_{n,k,i}
    :=
    \frac1{Q_{n,i}}\sum_{\mathbf j\in\mathcal A_i}
    \left\{
    \varphi(W_i,W_{j_2},\ldots,W_{j_k})-\varphi_{\mathbf j}^{(1)}(W_i)
    \right\}.
\]
Since
\begin{align*}
	\tilde\varphi_{n,k,i} &= \frac1{\Perm{n-1}{k-1}}
	\sum_{\substack{(i_2,\ldots,i_k)_{\neq}\\ i_a\in[n]\setminus\{i\},\ a=2,\ldots,k}}
	\varphi(W_i,W_{i_2},\ldots,W_{i_k}) \\
	&=
	\frac1{\Perm{n-1}{k-1}}\sum_{\mathbf j\in\mathcal A_i}
	\varphi(W_i,W_{j_2},\ldots,W_{j_k})
	+
	\frac1{\Perm{n-1}{k-1}}\sum_{\mathbf j\in\mathcal B_i\setminus\mathcal A_i}
	\varphi(W_i,W_{j_2},\ldots,W_{j_k})\\
	&=
	\frac{Q_{n,i}}{\Perm{n-1}{k-1}}
	\left[
	\frac1{Q_{n,i}}\sum_{\mathbf j\in\mathcal A_i}\varphi_{\mathbf j}^{(1)}(W_i)
	+
	\frac1{Q_{n,i}}\sum_{\mathbf j\in\mathcal A_i}
	\left\{\varphi(W_i,W_{j_2},\ldots,W_{j_k})-\varphi_{\mathbf j}^{(1)}(W_i)\right\}
	\right]\\
	&\quad+
	\frac1{\Perm{n-1}{k-1}}\sum_{\mathbf j\in\mathcal B_i\setminus\mathcal A_i}
	\varphi(W_i,W_{j_2},\ldots,W_{j_k})\\
	&=
	\frac{Q_{n,i}}{\Perm{n-1}{k-1}}
	\{\bar\varphi_{n,k}^{(1)}(W_i)+e_{n,k,i}\}
	+
	\frac1{\Perm{n-1}{k-1}}\sum_{\mathbf j\in\mathcal B_i\setminus\mathcal A_i}
	\varphi(W_i,W_{j_2},\ldots,W_{j_k}),
\end{align*}
subtracting \(\bar\varphi_{n,k}^{(1)}(W_i)\) and collecting the coefficient on this oracle projection gives
{\small
\begin{align}
    d_{n,k,i}
    &=
    \tilde\varphi_{n,k,i}-\bar\varphi_{n,k}^{(1)}(W_i)
    \notag\\
    &=
    \frac{Q_{n,i}}{\Perm{n-1}{k-1}}e_{n,k,i}
    +
    \left(
    \frac{Q_{n,i}}{\Perm{n-1}{k-1}}-1
    \right)\bar\varphi_{n,k}^{(1)}(W_i)
    \notag\\
    &\quad+
    \frac1{\Perm{n-1}{k-1}}\sum_{\mathbf j\in\mathcal B_i\setminus\mathcal A_i}
    \varphi(W_i,W_{j_2},\ldots,W_{j_k})\\
    &=
    \mathcal E_{n,k,i}-\mathcal D_{n,k,i}+\mathcal N_{n,k,i},
    \label{eq:clus_plugin_gap_decomp}
    \\
    \mathcal E_{n,k,i}
    &:=
    \frac{Q_{n,i}}{\Perm{n-1}{k-1}}e_{n,k,i}, \quad
    \mathcal D_{n,k,i}
    :=
    \frac{\Perm{n-1}{k-1}-Q_{n,i}}{\Perm{n-1}{k-1}}\bar\varphi_{n,k}^{(1)}(W_i),
    \notag\\
    \mathcal N_{n,k,i}
    &:=
    \frac1{\Perm{n-1}{k-1}}\sum_{\mathbf j\in\mathcal B_i\setminus\mathcal A_i}
    \varphi(W_i,W_{j_2},\ldots,W_{j_k}) .
    \notag
\end{align}
}

\noindent The rest of the proof studies the above summands in turn.

\paragraph{Empirical fluctuation.}
Fix \(i,i'\) in the same cluster \(h\). Conditional on the full reference cluster \(\{W_a:a\in\G_h\}\), the admissible co-index tuples use only clusters outside \(h\). If the co-index cluster sets in \(\mathbf j\in\mathcal A_i\) and \(\mathbf j'\in\mathcal A_{i'}\) are disjoint, cross-cluster independence gives
\begin{align*}
    &\E\Big[
    \left\{\varphi(W_i,W_{j_2},\ldots,W_{j_k})-\varphi_{\mathbf j}^{(1)}(W_i)\right\}
    \left\{\varphi(W_{i'},W_{j'_2},\ldots,W_{j'_k})-\varphi_{\mathbf j'}^{(1)}(W_{i'})\right\}
    \,\Big|\, \{W_a:a\in\G_h\}\Big]
    =
    0,
\end{align*}
because each conditional mean is zero after integrating over its own co-index clusters by \Cref{ass:sampling}. Thus only pairs of co-index tuples sharing at least one cluster can contribute to \(\E[e_{n,k,i}e_{n,k,i'}]\).

For each fixed \(\mathbf j\), the number of \(\mathbf j'\in\mathcal A_{i'}\) sharing at least one co-index cluster with \(\mathbf j\) is \(\bigO(m_nn^{k-2})\): one of the \(k-1\) positions of \(\mathbf j'\) is assigned to one of the \(k-1\) clusters used by \(\mathbf j\), giving at most \(m_n\) choices for that coordinate, and the remaining \(k-2\) coordinates have at most \(n^{k-2}\) choices. Expanding \(\E[e_{n,k,i}e_{n,k,i'}]\) gives the denominator \(Q_{n,i}Q_{n,i'}\), while at most \(Q_{n,i}\bigO(m_nn^{k-2})\) tuple pairs can contribute. Therefore, using Cauchy--Schwarz and \(Q_{n,i}\asymp n^{k-1}\),
\[
    \big|\E[e_{n,k,i}e_{n,k,i'}]\big|
    \leq
    \frac{C\,Q_{n,i}\,m_nn^{k-2}}{Q_{n,i}Q_{n,i'}}
    \leq C\frac{m_n}{n}
\]
uniformly over \(i,i'\) in the same cluster. Hence
\[
    \frac1n\sum_{g=1}^G\E\left[\left(\sum_{i\in\G_g}e_{n,k,i}\right)^2\right]
    \leq
    \frac{C m_n}{n^2}\sum_{g=1}^Gn_g^2
    \leq C\frac{m_n^2}{n}\to0,
\]
using \(\sum_gn_g^2\leq m_nn\) and \Cref{ass:clus_het}. Markov's inequality gives the corresponding \(o_p(1)\) cluster-aggregate bound for \(e_{n,k,i}\).

The same conclusion holds for \(\mathcal E_{n,k,i}\) in \eqref{eq:clus_plugin_gap_decomp}, because \(Q_{n,i}/\Perm{n-1}{k-1}\leq1\).

\paragraph{Denominator correction.}
For \(\mathcal D_{n,k,i}\) in \eqref{eq:clus_plugin_gap_decomp}, \eqref{eq:orderk_Ri_count} gives \((\Perm{n-1}{k-1}-Q_{n,i})/\Perm{n-1}{k-1}=\bigO(m_n/n)\) uniformly in \(i\), so Cauchy--Schwarz and the oracle covariance bound give
\[
    \frac1n\sum_{g=1}^G
    \left[
    \sum_{i\in\G_g}
    \frac{\Perm{n-1}{k-1}-Q_{n,i}}{\Perm{n-1}{k-1}}\bar\varphi_{n,k}^{(1)}(W_i)
    \right]^2
    \leq
    \bigO\left(\frac{m_n^2}{n^2}\right)
    \frac1n\sum_{g=1}^G
    \left(\sum_{i\in\G_g}\bar\varphi_{n,k}^{(1)}(W_i)\right)^2
    =
    o_p(1).
\]

\paragraph{Non-generic complement.}
For \(\mathcal N_{n,k,i}\) in \eqref{eq:clus_plugin_gap_decomp}, \(\#(\mathcal B_i\setminus\mathcal A_i)=R_i\), so \eqref{eq:orderk_Ri_count} gives \(\#(\mathcal B_i\setminus\mathcal A_i)/\Perm{n-1}{k-1}=\bigO(m_n/n)\) uniformly in \(i\). Jensen's inequality and the uniform second-moment bound give
\[
    \E\left[
    \left(
    \frac1{\Perm{n-1}{k-1}}\sum_{\mathbf j\in\mathcal B_i\setminus\mathcal A_i}
    \varphi(W_i,W_{j_2},\ldots,W_{j_k})
    \right)^2
    \right]
    \leq
    C\left(\frac{\#(\mathcal B_i\setminus\mathcal A_i)}{\Perm{n-1}{k-1}}\right)^2
    \leq C\frac{m_n^2}{n^2}.
\]
Therefore,
\begin{align*}
    &\frac1n\sum_{g=1}^G
    \E\left[
    \left(
        \sum_{i\in\G_g}
        \frac1{\Perm{n-1}{k-1}}\sum_{\mathbf j\in\mathcal B_i\setminus\mathcal A_i}
        \varphi(W_i,W_{j_2},\ldots,W_{j_k})
        \right)^2
    \right] \\
    & \qquad \leq
    \frac1n\sum_{g=1}^G
    n_g\sum_{i\in\G_g}
    \E\left[
    \left(
        \frac1{\Perm{n-1}{k-1}}\sum_{\mathbf j\in\mathcal B_i\setminus\mathcal A_i}
        \varphi(W_i,W_{j_2},\ldots,W_{j_k})
    \right)^2
    \right]\\
    & \qquad \leq
    \frac{m_n}{n}\sum_{i=1}^n C\frac{m_n^2}{n^2}
    =
    C\frac{m_n^3}{n^2}
    \to0.
\end{align*}

\paragraph{Assembly.}
Combining the three component bounds in \eqref{eq:clus_plugin_gap_decomp} gives the claim.
\end{proof}

\subsection{HAC projection covariance estimator}\label{app:hac}
This proof broadly follows the same strategy as \citet{dehling2017testing} in the order-$2$ Kendall setting: establish consistency of the oracle HAC estimator built from the true first-order projection, then show that replacing that projection by its empirical plug-in version is asymptotically negligible. The argument extends that logic to general order-$k$ $U$-statistics under the paper's $L_2$-NED framework.

\subsubsection{Notation and master decomposition}

For a distinct tuple \(\mathbf i=(i_1,\ldots,i_k)\), define the aggregate higher-order residual
\[
    H_{\mathbf i}(W_{\mathbf i})
    :=
    \varphi(W_{i_1},\ldots,W_{i_k})
    -
    \sum_{a=1}^k\varphi_{\mathbf i_{-a}}^{(1)}(W_{i_a}),
\]
where \(\varphi_{\mathbf i_{-a}}^{(1)}(W_{i_a})\) is the tuple-specific first projection. For a generic tuple, \eqref{eq:hoeffding_reconstruction} and the \(S=\{a\}\) collapse noted above give
\[
    H_{\mathbf i}(W_{\mathbf i})
    =
    \sum_{\substack{S\subseteq[k]\\ |S|\geq2}}
    \varphi_{\mathbf i,S}^{(|S|)}(W_{\mathbf i_S}).
\]
At \(k=2\), this recovers the single degenerate kernel of the classical order-\(2\) Hoeffding decomposition.

Set
\[
    \varphi_i
    :=
    \varphi_i^{(1)}(W_i)
    :=
    \frac1{\Perm{n-1}{k-1}}
    \sum_{\substack{(i_2,\ldots,i_k)_{\neq}\\ i_a\in[n]\setminus\{i\},\,a=2,\ldots,k}}
    \E\!\left[
    \varphi(W_i,W_{i_2},\ldots,W_{i_k})
    \mid W_i
    \right],
\]
and set \(\bar\varphi_n^{(1)}:=n^{-1}\sum_{j=1}^n\varphi_j\),
\[
\begin{aligned}
    H_{n,k,i}&:=\frac1{\Perm{n-1}{k-1}}\sum_{\substack{(i_2,\ldots,i_k)_{\neq}\\ i_2,\ldots,i_k\in[n]\setminus\{i\}}}H_{(i,i_2,\ldots,i_k)}(W_i,W_{i_2},\ldots,W_{i_k}),\\
    \bar H_n&:=\frac1{\Perm nk}\sum_{(i_1,\ldots,i_k)_{\neq}}H_{\mathbf i}(W_{i_1},\ldots,W_{i_k}),
\end{aligned}
\]
and recall also that \(\hat\varphi^{(1)}_{n,k,i}=\tilde\varphi_{n,k,i}-U_{n,k}\), where \(\tilde\varphi_{n,k,i}\) denotes the anchored average over ordered co-index tuples.

\[
    \tilde\rho(\ell):=\frac1n\sum_{i=1}^{n-\ell}\varphi_i\varphi_{i+\ell},
    \qquad
    \tilde\sigma_n^2:=\tilde\rho(0)+2\sum_{\ell=1}^{n-1}\kappa(\ell/h_n)\tilde\rho(\ell)
\]
denote the HAC estimator built from the true, unobserved projection. The estimator-to-target comparison starts from
\[
    \hat\sigma_n^2-\sigma_n^2
    =
    \big(\hat\sigma_n^2-\tilde\sigma_n^2\big)
    +
    \big(\tilde\sigma_n^2-\sigma_n^2\big).
\]
The first term is the feasible-to-oracle gap; the second is the oracle term.

\subsubsection{Feasible-to-oracle gap}

\begin{lemma}[Plug-in error vanishes]\label{lem:plugin_vanish}
    Under the assumptions of \Cref{thm:CLT_double_limit}, with \eqref{eq:kernel_ui_k} imposed for some moment exponent \(r>2\), and under \Cref{ass:hac}, $\hat\sigma_n^2-\tilde\sigma_n^2=o_p(1)$.
\end{lemma}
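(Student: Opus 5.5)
The proof compares the feasible and oracle HAC sums term by term, writing each feasible summand as the oracle summand plus a plug-in error. Set \(\delta_i:=\hat\varphi^{(1)}_{n,k,i}-\varphi_i\). From the Hoeffding decomposition applied tuple by tuple,
\[
\tilde\varphi_{n,k,i}=\varphi_i+\frac{k-1}{\Perm{n-1}{k-1}}\sum_{\mathbf j}\varphi^{(1)}_{(\cdot)}(W_{j_b})+H_{n,k,i}
\]
up to centring and collision terms. Averaging over \(i\) gives \(U_{n,k}=k\bar\varphi^{(1)}_n+\bar H_n\).

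This yields
\[
\delta_i=(k-1)\big(\bar\varphi_n^{(1)}-n^{-1}\varphi_i\big)\{1+o(1)\}+H_{n,k,i}-k\bar\varphi^{(1)}_n-\bar H_n .
\]
In this expression the cross-coordinate first projections, averaged over all co-indices other than \(i\), reproduce \(\bar\varphi^{(1)}_n\) up to an \(\bigO(n^{-1})\) correction. Hence
\[
\delta_i=H_{n,k,i}-\bar\varphi^{(1)}_n-\bar H_n+\bigO_p(n^{-1})\text{-terms}.
\]
I would first record the moment bounds \(\|\bar\varphi^{(1)}_n\|_2=\bigO(n^{-1/2})\) (Theorem~\ref{thm:CLT_double_limit}(b)) and \(\|\bar H_n\|_2=o(n^{-1/2})\) (the remainder bounds (A)--(D) in its proof).

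The key analytic step is a uniform bound \(\max_i\|H_{n,k,i}\|_2=\bigO(\rho_{m_n}+\sqrt{m_n/n})\). To prove it, I would replace \(W\) by \(W^{(m)}\) at cost \(\bigO(\rho_m)\) using Lemma~\ref{lem:effective_perturbation}(a)--(c), including the induced projection changes. The truncated anchored degenerate average with \(i\) fixed is then an exact-\(m\) object. The column argument of Lemma~\ref{lem:abstract_linked_degenerate_variance}, applied with one coordinate held fixed, gives \(\bigO(((m+1)/n)^{1/2})\) for the order-\(c\geq2\) pieces, because once the anchor is fixed each remaining degenerate piece loses a factor \(n^{-1/2}\) per free active coordinate. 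Non-generic (lag-collision) co-index tuples number \(\bigO(m n^{k-2})\) by Lemma~\ref{lem:abstract_projection_count}, contributing \(\bigO(m/n)\).

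Setting \(a_n:=\max_i\|\delta_i\|_2\), this gives \(a_n=\bigO(\rho_{m_n}+\sqrt{m_n/n}+n^{-1/2})\).

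I would then expand
\[
\hat\sigma_n^2-\tilde\sigma_n^2=\sum_{|\ell|<n}\kappa(\ell/h_n)\frac1n\sum_i\big(\varphi_i\delta_{i+\ell}+\delta_i\varphi_{i+\ell}+\delta_i\delta_{i+\ell}\big).
\]
The crude bound \(\sum_\ell|\kappa(\ell/h_n)|\leq Ch_n\), from integrability of \(\kappa\), together with Cauchy--Schwarz, gives
\[
\E|\hat\sigma_n^2-\tilde\sigma_n^2|\leq Ch_n\,(a_n\sup_i\|\varphi_i\|_2+a_n^2)=\bigO\big(h_n(\rho_{m_n}+\sqrt{m_n/n})\big).
\]
This is the main obstacle: \(h_n\sqrt{m_n/n}\) need not vanish when only \(h_n=o(\sqrt n)\) and \(m_n=o(\sqrt n)\) are assumed. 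The fix I would try is to avoid bounding \(\delta\) in absolute value lag by lag.

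\begin{itemize}
\item \emph{Common shifts.} The components \(\bar\varphi^{(1)}_n+\bar H_n\) of \(\delta_i\) do not depend on \(i\). Their contribution factors as \((\bar\varphi^{(1)}_n)^2\sum_\ell\kappa\) plus cross terms \(\bar\varphi^{(1)}_n\cdot n^{-1}\sum_i\varphi_i\), which are \(\bigO_p(h_n/n)=o_p(1)\).
\item \emph{Cross terms with \(H_{n,k,i}\).} For \(n^{-1}\sum_i\varphi_iH_{n,k,i+\ell}\), I would use that \(H\) is degenerate. After truncation, \(\E[\varphi_i^{(m)}H^{(m)}_{n,k,i+\ell}]\) vanishes except on the \(\bigO(m n^{k-2})\) co-index tuples that touch the window of \(i\). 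This gives a mean of order \(m/n\) per lag, and a variance of the lag-\(\ell\) average of order \(m/n\) by the same column counting. The total is then \(h_n(m/n+\sqrt{m}/n+\rho_m)\), which is \(o(1)\) provided \(h_n\rho_{m_n}\to0\); that holds when \(m_n\) is chosen with \(\sqrt n\rho_{m_n}\to0\) and \(h_n=o(\sqrt n)\).
\item \emph{Quadratic term.} For \(n^{-1}\sum_iH_{n,k,i}H_{n,k,i+\ell}\), I would bound its expectation by \(\bigO(m/n)\) via the degeneracy-and-linking count used in Step~2 of Lemma~\ref{lem:abstract_linked_degenerate_variance}. Summed over \(\bigO(h_n)\) lags this gives \(h_nm_n/n=o(1)\), with a Markov bound finishing the argument.
\end{itemize}
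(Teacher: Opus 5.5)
Your architecture matches the paper's. You use the plug-in identity, which is in fact exact: $\delta_i=H_{n,k,i}-\bar\varphi_n^{(1)}-\bar H_n-\frac{k-1}{n-1}(\varphi_i-\bar\varphi_n^{(1)})$, with nothing left over for collisions, because $H_{\mathbf i}$ is defined as a residual on every distinct tuple. You bound $\|H_{n,k,i}\|_2$ uniformly via truncation plus anchored collision/degeneracy counting, and you split the HAC difference into common-shift, linear-in-$H$ and quadratic-in-$H$ pieces. You also correctly identify $\sum_\ell\kappa(\ell/h_n)\,n^{-1}\sum_i\varphi_iH_{n,k,i+\ell}$ as the crux.

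\textbf{The gap is in your treatment of that linear term.} A variance of order $m/n$ for the lag-$\ell$ average means a standard deviation of order $\sqrt{m/n}$, not $\sqrt m/n$. Adding the $\bigO(h_n)$ lags by the triangle inequality then gives $h_n(m_n/n+\sqrt{m_n/n}+\rho_{m_n})$. This contains exactly the term $h_n\sqrt{m_n/n}$ you had flagged as the obstacle, so your fix reproduces the problem it was meant to solve. Changing the truncation level does not rescue a lag-by-lag bound built from window counting. Such a bound has the form $h_n(\rho_q+\sqrt{q/n})$, which is at best of order $h_nn^{-\eta/(2\eta+1)}$. Since $\eta/(2\eta+1)<1/2$, this does not vanish for admissible bandwidths near $\sqrt n$; for instance, $\eta=1.1$ and $h_n=n^{0.45}$ give exponent $0.45-0.34375>0$.

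The missing idea is to exploit cancellation across lags. After truncating at level $q$, bound the second moment of the entire weighted sum at once, indexing by $(\ell,i,i_2,\ldots,i_k)$. Use degeneracy of the lag-generic residual to kill every covariance unless the anchor blocks $\{i,i+|\ell|\}$ or the co-index configurations are linked. This is what the paper does in \Cref{lem:hac_linear_H}. It obtains variance $\bigO(h_nq/n)$ for the whole sum, so the fluctuation is $\bigOp(\sqrt{h_nq/n})$ rather than $h_n\sqrt{q/n}$. It then takes $q_n\sim(h_nn)^{1/(2\eta+1)}$ to balance $h_n\rho_q$ against $\sqrt{h_nq/n}$. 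With the joint bound, your choice $q=m_n$ would also work, since $h_nm_n/n\to0$ and $h_n\rho_{m_n}\to0$.

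Two smaller points:
\begin{itemize}
\item For the quadratic term, Markov's inequality needs a bound on $\E|n^{-1}\sum_iH_{n,k,i}H_{n,k,i+\ell}|$, not on the signed expectation. Cauchy--Schwarz with your uniform bound gives $\max_i\|H_{n,k,i}\|_2^2=\bigO(\rho_{m_n}^2+m_n/n)$. The total is then $\bigO(h_n\rho_{m_n}^2+h_nm_n/n)=o(1)$ with no degeneracy argument, as in the paper's term (C).
\item Your common-shift bullet should also cover the products of the shift with $H_{n,k,\cdot}$. These are $\bigO(h_nn^{-1/2}\max_i\|H_{n,k,i}\|_2)=o(1)$.
\end{itemize}
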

\begin{proof}
    The proof has two moving parts. First, use the algebraic identity in \Cref{lem:phi1_hat_identity} to express the projection error as a small deterministic-centring component minus an anchored higher-order remainder. Second, control this anchored remainder and treat the only linear-in-\(H_{n,k,i}\) HAC term at the full-sample level.
    
    Write \(e_i:=\varphi_i-\hat\varphi^{(1)}_{n,k,i}\). By symmetry of \(\kappa\),
    \[
        \hat\sigma_n^2-\tilde\sigma_n^2 = \sum_{\ell=-(n-1)}^{n-1}\kappa\Big(\frac{|\ell|}{h_n}\Big)\frac1n\sum_{i=1}^{n-|\ell|}\Big[-\varphi_i e_{i+|\ell|}-\varphi_{i+|\ell|}e_i+e_ie_{i+|\ell|}\Big].
    \]
    By \Cref{lem:phi1_hat_identity},
    \[
        e_i=A_i-H_{n,k,i},
        \qquad
        A_i:=
        \bar\varphi_n^{(1)}+\bar H_n+\frac{k-1}{n-1}\big(\varphi_i-\bar\varphi_n^{(1)}\big).
    \]
    Substituting \(e_i=A_i-H_{n,k,i}\) and collecting one layer at a time gives
    {\small
    \begin{align}
        \hat\sigma_n^2-\tilde\sigma_n^2
        &=
        \underbrace{
        \sum_{\ell=-(n-1)}^{n-1}\kappa\Big(\frac{|\ell|}{h_n}\Big)\frac1n\sum_{i=1}^{n-|\ell|}
        \Big[
            -\varphi_i A_{i+|\ell|}
            -
            \varphi_{i+|\ell|}A_i
            +
            A_iA_{i+|\ell|}
        \Big]}_{(\mathrm A)}
        \notag\\
        &\quad+
        \sum_{\ell=-(n-1)}^{n-1}\kappa\Big(\frac{|\ell|}{h_n}\Big)\frac1n\sum_{i=1}^{n-|\ell|}
        \Big[
            \varphi_i H_{n,k,i+|\ell|}
            +
            \varphi_{i+|\ell|}H_{n,k,i}
            -
            A_iH_{n,k,i+|\ell|}
            -
            A_{i+|\ell|}H_{n,k,i}
            +
            H_{n,k,i}H_{n,k,i+|\ell|}
        \Big]
        \notag\\
        &=
        (\mathrm A)
        +
        \underbrace{
        \sum_{\ell=-(n-1)}^{n-1}\kappa\Big(\frac{|\ell|}{h_n}\Big)\frac1n\sum_{i=1}^{n-|\ell|}
        \Big[
            \varphi_i H_{n,k,i+|\ell|}
            +
            \varphi_{i+|\ell|}H_{n,k,i}
            -
            A_iH_{n,k,i+|\ell|}
            -
            A_{i+|\ell|}H_{n,k,i}
        \Big]}_{(\mathrm B)}
        \notag\\
        &\quad+
        \underbrace{
        \begin{aligned}
        &\sum_{\ell=-(n-1)}^{n-1}\kappa\Big(\frac{|\ell|}{h_n}\Big)
        \frac1n\sum_{i=1}^{n-|\ell|}\\
        &\hspace{2.5cm}\times
        H_{n,k,i}H_{n,k,i+|\ell|}
        \end{aligned}}_{(\mathrm C)}.
        \label{eq:hac_plugin_decomp}
    \end{align}
    }

    By \Cref{lem:hac_anchor_bound}, \eqref{eq:hac_H_bound} gives \(\|H_{n,k,i}\|_2=\bigO(n^{-\eta/(2\eta+1)})\) uniformly in \(i\), and \(\|\bar H_n\|_2=o(n^{-1/2})\). Hence \(\|\bar\varphi_n^{(1)}\|_2=\bigO(n^{-1/2})\) by the covariance summability used in \Cref{thm:CLT_double_limit}(b), \(\|\varphi_i\|_2\leq C\), and, for \(A_i=\bar\varphi_n^{(1)}+\bar H_n+\frac{k-1}{n-1}\{\varphi_i-\bar\varphi_n^{(1)}\}\), the triangle inequality gives \(\|A_i\|_2=\bigO(1)\) uniformly in \(i\). Moreover, the HAC weight conditions give \(\sum_{\ell=-(n-1)}^{n-1}|\kappa(|\ell|/h_n)|=\bigO(h_n)\): by evenness, this is bounded by a constant multiple of the scaled Riemann sum \(h_n\int_0^\infty|\kappa(x)|\,dx\). The following elementary bound is used repeatedly:
    \begin{equation}\label{eq:hac_kernel_product_bound}
        \E\Big|\frac1n\sum_{\ell=-(n-1)}^{n-1}\kappa\Big(\frac{|\ell|}{h_n}\Big)\sum_{i=1}^{n-|\ell|}X_{i,\ell}Y_{i,\ell}\Big|
        \leq
        \frac1n\sum_{\ell=-(n-1)}^{n-1}\Big|\kappa\Big(\frac{|\ell|}{h_n}\Big)\Big|\sum_{i=1}^{n-|\ell|}\|X_{i,\ell}\|_2\|Y_{i,\ell}\|_2.
    \end{equation}

\paragraph{Deterministic-centring terms.}
Write
    \[
        A_i=
        a_n\varphi_i+c_n,
        \qquad
        a_n:=\frac{k-1}{n-1},
        \qquad
        c_n:=
        \Big(1-\frac{k-1}{n-1}\Big)\bar\varphi_n^{(1)}+\bar H_n.
    \]
    Then \(\|c_n\|_2=\bigO(n^{-1/2})\), and the inner bracket in \((\mathrm A)\) becomes
    \(
        -\varphi_i A_{i+|\ell|}-\varphi_{i+|\ell|}A_i+A_iA_{i+|\ell|}
        =
        -(2a_n-a_n^2)\varphi_i\varphi_{i+|\ell|}
        -(1-a_n)c_n\big(\varphi_i+\varphi_{i+|\ell|}\big)
        +c_n^2.
    \)
    Thus every contribution to \((\mathrm A)\) contains either the small coefficient \(2a_n-a_n^2=\bigO(n^{-1})\) or at least one factor \(c_n\). Since \(\|c_n\|_2=\bigO(n^{-1/2})\), the terms containing \(c_n\) are \(\bigOp(h_n/\sqrt n)\) by \eqref{eq:hac_kernel_product_bound} and Markov's inequality. The local-product term \(\varphi_i\varphi_{i+|\ell|}\) is multiplied by \(2a_n-a_n^2=\bigO(n^{-1})\), so it is \(\bigOp(h_n/n)\). Therefore
    \[
        (\mathrm A)=\bigOp\Big(\frac{h_n}{\sqrt n}\Big)=o_p(1).
    \]

\paragraph{Terms with one anchored remainder.}
The bracket in \((\mathrm B)\) equals
\[
    (1-a_n)\{\varphi_iH_{n,k,i+|\ell|}+\varphi_{i+|\ell|}H_{n,k,i}\}
    -c_n\{H_{n,k,i}+H_{n,k,i+|\ell|}\}.
\]
The \(c_n\)-part is negligible by \eqref{eq:hac_kernel_product_bound}, \(\|c_n\|_2=\bigO(n^{-1/2})\), and \eqref{eq:hac_H_bound}; it is \(\bigOp(h_nn^{-1/2-\eta/(2\eta+1)})=o_p(1)\). The remaining two terms are \(o_p(1)\) by \Cref{lem:hac_linear_H}. Therefore
    \[
        (\mathrm B)=o_p(1).
    \]

\paragraph{Terms with two anchored remainders.}
For \((\mathrm C)\), each summand carries two factors of the form \(H_{n,k,\cdot}\), so \eqref{eq:hac_kernel_product_bound} and \eqref{eq:hac_H_bound}, followed by Markov's inequality, give
    \[
        (\mathrm C)=\bigOp\big(h_nn^{-2\eta/(2\eta+1)}\big)=o_p(1),
    \]
because \(h_n=o(\sqrt n)\) and \(2\eta/(2\eta+1)>1/2\).

\paragraph{Assembly.}
Therefore \((\mathrm A)+(\mathrm B)+(\mathrm C)=o_p(1)\). The decomposition in \eqref{eq:hac_plugin_decomp} proves the claim. \qedhere
\end{proof}

\subsubsection{The oracle term}

\begin{lemma}[Oracle consistency]\label{lem:oracle_consistency}
    Under the assumptions of \Cref{thm:CLT_double_limit}, with \eqref{eq:kernel_ui_k} imposed for some moment exponent \(r>2\), and under \Cref{ass:hac}, $\tilde\sigma_n^2-\sigma_n^2\xrightarrow{p}0$.
\end{lemma}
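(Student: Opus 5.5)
The plan is to reduce the oracle statement to the heterogeneous-array HAC consistency theorem of \citet{dejong2000consistency}, applied to the scalar triangular array \(\varphi_i=\varphi_i^{(1)}(W_i)\), \(i\in[n]\). The argument follows the route \citet{dehling2017testing} use for Kendall's tau, with our \(L_2\)-NED projection bounds substituted. The oracle estimator \(\tilde\sigma_n^2\) is exactly the kernel-weighted sum of sample autocovariances of \(\{\varphi_i\}\). The target \(\sigma_n^2=\V[n^{-1/2}\sum_i\varphi_i]\) is its finite-sample long-run variance. So it suffices to verify the hypotheses of that theorem: mean zero, a uniform \(L_r\) bound with \(r>2\), \(L_2\)-NED on an \(\alpha\)-mixing base with suitable size, and the kernel and bandwidth conditions. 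The last group is exactly \Cref{ass:hac}.

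\emph{Step 1 (moments and centring).} Each \(\varphi_i\) is centred in the independent-copy sense, so \(\E[\varphi_i]=0\). Conditional Jensen together with \eqref{eq:kernel_ui_k} at the imposed \(r>2\) gives \(\sup_i\|\varphi_i\|_r\leq C_\varphi<\infty\). This is item (ii) in the proof of \Cref{thm:CLT_double_limit}(c), and it supplies the uniform domination constants \(c_{ni}\equiv C_\varphi\).

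\emph{Step 2 (NED of the projection array).} I will use \Cref{lem:effective_perturbation}(b), which gives \(\|\varphi_i-\E[\varphi_i\mid\mathcal F_{i-\floor{q/2}}^{i+\ceil{q/2}}]\|_2\leq C\rho_q=\bigO(q^{-\eta})\) uniformly in \(i\), with \(\eta>1\). Hence \(\{\varphi_i\}\) is \(L_2\)-NED of size \(-1\) on the i.i.d. base \(\{\mathcal E_t\}\), and that base is strongly mixing of every size. The point to check is the precise size that de Jong and Davidson's theorem demands. Their result requires NED of size \(-1/2\) or \(-1\) together with mixing of size \(-r/(r-2)\), both of which are met here. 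If their statement is written in terms of \(L_2\)-NED with constants \(d_{ni}\), I set \(d_{ni}=C\), since no stationarity is assumed.

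\emph{Step 3 (bandwidth compatibility and conclusion).} I expect this step to be the main obstacle. The de Jong--Davidson result yields \(\tilde\sigma_n^2-\sigma_n^2\xrightarrow{p}0\) when \(h_n\to\infty\) and \(h_n=o(n)\); our \(h_n=o(\sqrt n)\) is stronger, so the rate is not the difficulty. The difficulty is that their theorem is stated for a triangular array under a normalisation \(\V[\sum_i X_{ni}]\to1\) or with \(\sup_n n(c_n^\ast)^2<\infty\). Here \(\sigma_n^2\) need not converge. I will therefore apply the theorem to \(X_{ni}:=\varphi_i/\sqrt n\), for which \(\V[\sum_iX_{ni}]=\sigma_n^2=\bigO(1)\) by the covariance summability established in \Cref{thm:CLT_double_limit}(b). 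If their statement needs a limit, I would argue along subsequences: every subsequence has a further subsequence on which \(\sigma_n^2\) converges, by boundedness. On each such further subsequence the theorem gives \(\tilde\sigma_n^2-\sigma_n^2\to0\) in probability, and this yields convergence along the full sequence. A fallback, if their conditions cannot be matched verbatim, is a direct proof. That proof would replace \(\varphi_i\) by the window approximants \(A_i=\E[\varphi_i\mid\mathcal F_{i-\floor{q_n/2}}^{i+\ceil{q_n/2}}]\), which are \(q_n\)-dependent. The HAC error from this replacement is \(\bigO(h_n\rho_{q_n})\) in \(L_1\) by Cauchy--Schwarz and \(\sum_\ell|\kappa(\ell/h_n)|=\bigO(h_n)\). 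For the \(q_n\)-dependent array I would then bound \(\V[\tilde\rho_A(\ell)]=\bigO((q_n+1)/n)\) using fourth-moment truncation at level \(M\), controlled by uniform integrability. The bias \(\sum_\ell(1-\kappa(\ell/h_n))\gamma(\ell)\to0\) follows from continuity of \(\kappa\) at \(0\) and summable covariances, which requires choosing \(q_n\) with \(h_n\rho_{q_n}\to0\).
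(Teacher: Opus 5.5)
Your proposal is correct and is essentially the paper's proof. You apply \citet[Thm.~2.1]{dejong2000consistency} to \(X_{n,t}=n^{-1/2}\varphi_t^{(1)}(W_t)\), whose HAC estimator is exactly \(\tilde\sigma_n^2\), and you check the same three ingredients the paper does: centring and \(L_r\)-domination at scale \(n^{-1/2}\) from conditional Jensen and the original-kernel part of \eqref{eq:kernel_ui_k} with \(r>2\); \(L_2\)-NED from \Cref{lem:effective_perturbation}(b) on the i.i.d.\ (hence strongly mixing) base; and the weight and bandwidth conditions from \Cref{ass:hac}.

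Your subsequence device and direct fallback are unnecessary. That theorem compares the estimator with the finite-sample variance \(\V(\sum_tX_{n,t})=\sigma_n^2\) and only needs the scale conditions (vanishing maximal scale, bounded sum of squared scales), which hold with \(c_{nt}\propto n^{-1/2}\); it does not require \(\sigma_n^2\) to converge. The fallback as sketched would also need more careful rate bookkeeping. Summing lag-wise \(\bigO((q_n+1)/n)\) variance bounds over \(\bigO(h_n)\) lags does not obviously cover the whole range \(h_n=o(\sqrt n)\).
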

\begin{proof}
    Apply \citet[Thm.~2.1]{dejong2000consistency} to the scalar triangular array \(X_{n,t}:=n^{-1/2}\varphi_t^{(1)}(W_t)\). With this choice, the HAC covariance estimator in that theorem is exactly \(\tilde\sigma_n^2\): the diagonal terms give \(\tilde\rho(0)\), and the off-diagonal terms combine, using \(\kappa(x)=\kappa(-x)\), into \(2\sum_{\ell=1}^{n-1}\kappa(\ell/h_n)\tilde\rho(\ell)\).
    
    The HAC-weight regularity and bandwidth requirements are covered by \Cref{ass:hac}. The dependence requirement is also inherited from the present assumptions. The base sequence \(\{\mathcal E_t\}\) is i.i.d., hence strongly mixing with zero mixing coefficients at every positive lag, and \Cref{lem:effective_perturbation}(b) gives, for every approximation width \(m\),
    \[
        \|X_{n,t}-\E[X_{n,t}\mid\mathcal F_{t-\floor{m/2}}^{t+\ceil{m/2}}]\|_2
        \leq
        \frac{C}{\sqrt n}\rho_m .
    \]
    Since the theorem assumes \(\rho_m=\bigO(m^{-\eta})\) for some \(\eta>1\), this verifies the required \(L_2\)-NED rate for the scaled projection array.
    
    It remains to check domination. This is the only point at which the HAC argument needs a moment strictly above two. Conditional Jensen and the original-kernel part of \eqref{eq:kernel_ui_k} imply \(C_\varphi:=\sup_t\|\varphi_t^{(1)}(W_t)\|_r<\infty\) for some \(r>2\). Hence \(\|X_{n,t}\|_r\leq C_\varphi/\sqrt n\) uniformly in \(t\), and the NED approximation bound above has the same \(n^{-1/2}\) scale. Thus the triangular-array domination required by \citet[Thm.~2.1]{dejong2000consistency} holds with a scale proportional to \(n^{-1/2}\): the maximal scale vanishes and the sum of squared scales is bounded. The bandwidth restriction reduces to \(h_n^{-1}+h_n/n\to0\), which is implied by \Cref{ass:hac}, since \(h_n\to\infty\) and \(h_n=o(\sqrt n)\).
    
    Therefore de Jong and Davidson's theorem gives
    \[
        \tilde\sigma_n^2
        -
        \V\left(\frac1{\sqrt n}\sum_{t=1}^n\varphi_t^{(1)}(W_t)\right)
        \xrightarrow{p}0.
    \]
    The second term is \(\sigma_n^2\) by definition. Hence \(\tilde\sigma_n^2-\sigma_n^2\xrightarrow{p}0\). \qedhere
\end{proof}

\subsubsection[Proof of HAC consistency]{Proof of \cref{thm:hac_consistency}}
Consider the decomposition
\[
    \hat\sigma_n^2-\sigma_n^2
    =
    \big(\hat\sigma_n^2-\tilde\sigma_n^2\big)
    +
    \big(\tilde\sigma_n^2-\sigma_n^2\big).
\]
The first term is \(o_p(1)\) by \Cref{lem:plugin_vanish}, and the second term is \(o_p(1)\) by \Cref{lem:oracle_consistency}. Hence \(\hat\sigma_n^2-\sigma_n^2\xrightarrow{p}0\). \qed

\subsubsection{Auxiliary lemmata}

The next three lemmata provide the algebraic identity and anchored-remainder controls used in \Cref{lem:plugin_vanish}.

\begin{lemma}[Plug-in identity]\label{lem:phi1_hat_identity}
    For every $n>k$ and $i=1,\ldots,n$,
    \[
        \varphi_i-\hat\varphi^{(1)}_{n,k,i} = \bar\varphi_n^{(1)}-H_{n,k,i}+\bar H_n+\frac{k-1}{n-1}\big(\varphi_i-\bar\varphi_n^{(1)}\big).
    \]
\end{lemma}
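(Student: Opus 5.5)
The identity is purely algebraic, so the plan is to expand both $\tilde\varphi_{n,k,i}$ and $U_{n,k}$ through the definition of $H_{\mathbf i}$ and then count how often each first projection appears. The one structural input is the convention fixed in the NED subsections: centring and first projections are taken in the date-specific independent-copy sense. I will use this to treat the tuple-specific first projection $\varphi^{(1)}_{\mathbf i_{-a}}(W_{i_a})$ as depending only on the anchor date $i_a$, so that it equals $\varphi_{i_a}$. If the tuple-specific projections genuinely depended on the co-indices, this is the step that would fail. In that case I would have to redefine $\varphi_i$ as the all-coindex average, as in its displayed definition, and re-check the counting; this is the main point to verify.

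\emph{Step 1 (expanding the anchored average).} Fix $i$. By the definition of $H_{(i,i_2,\ldots,i_k)}$, each summand of $\tilde\varphi_{n,k,i}$ equals the anchor projection, plus $\sum_{b=2}^k\varphi_{i_b}$, plus $H_{(i,i_2,\ldots,i_k)}$. Averaging the anchor projections over the $\Perm{n-1}{k-1}$ co-index tuples gives $\varphi_i$, and averaging the $H$ terms gives $H_{n,k,i}$. For the co-index terms, fix $j\neq i$. The number of ordered co-index tuples with $j$ in a given position $b$ is $\Perm{n-2}{k-2}$, and there are $k-1$ positions. Since $\Perm{n-2}{k-2}/\Perm{n-1}{k-1}=1/(n-1)$, the co-index terms contribute
\[
\frac{k-1}{n-1}\sum_{j\neq i}\varphi_j=\frac{k-1}{n-1}\big(n\bar\varphi_n^{(1)}-\varphi_i\big).
\]
Hence
\[
\tilde\varphi_{n,k,i}=\varphi_i+\frac{k-1}{n-1}\big(n\bar\varphi_n^{(1)}-\varphi_i\big)+H_{n,k,i}.
\]

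\emph{Step 2 (expanding the statistic).} Use either $U_{n,k}=n^{-1}\sum_i\tilde\varphi_{n,k,i}$ together with Step 1, or a direct expansion. Each anchor $j$ occupies each of the $k$ positions in $\Perm{n-1}{k-1}$ tuples, which gives
\[
U_{n,k}=k\bar\varphi_n^{(1)}+\bar H_n.
\]
As a consistency check, averaging the Step 1 display over $i$ gives
\[
\bar\varphi_n^{(1)}+\frac{k-1}{n-1}(n-1)\bar\varphi_n^{(1)}+\bar H_n=k\bar\varphi_n^{(1)}+\bar H_n,
\]
using $n^{-1}\sum_iH_{n,k,i}=\bar H_n$, since both are averages over all ordered distinct $k$-tuples.

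\emph{Step 3 (assembly).} Subtract to get $\hat\varphi^{(1)}_{n,k,i}=\tilde\varphi_{n,k,i}-U_{n,k}$. The coefficient of $\bar\varphi_n^{(1)}$ is
\[
\frac{(k-1)n}{n-1}-k=\frac{k-1}{n-1}-1.
\]
Therefore
\[
\hat\varphi^{(1)}_{n,k,i}=\varphi_i-\frac{k-1}{n-1}\big(\varphi_i-\bar\varphi_n^{(1)}\big)-\bar\varphi_n^{(1)}+H_{n,k,i}-\bar H_n.
\]
Rearranging $\varphi_i-\hat\varphi^{(1)}_{n,k,i}$ gives exactly the stated identity, valid for every $n>k$ because only finite counting is used.
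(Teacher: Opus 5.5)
Your proposal is correct and follows the paper's proof essentially step for step: expand $\tilde\varphi_{n,k,i}$ through the residual $H_{\mathbf i}$, use the position count $\Perm{n-2}{k-2}/\Perm{n-1}{k-1}=1/(n-1)$ to get $\tilde\varphi_{n,k,i}-\varphi_i=H_{n,k,i}+\frac{k-1}{n-1}\sum_{j\neq i}\varphi_j$, combine this with $U_{n,k}=k\bar\varphi_n^{(1)}+\bar H_n$, and subtract.

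The co-index invariance you flag is exactly what the paper uses implicitly in its step (C) (``the corresponding tuple-specific projections average to $\varphi_j$''). It is genuinely needed, not just a detail to re-check: with co-index-dependent projections and the all-coindex definition of $\varphi_i$, the per-$i$ identity fails in general. Already at $k=2$ it would require $\sum_{j\neq i}\E[\varphi(W_j,W_i)\mid W_j]=\sum_{j\neq i}\varphi_j$. Without invariance, only the $i$-average of your Step~1 display, i.e.\ $U_{n,k}=k\bar\varphi_n^{(1)}+\bar H_n$, remains valid.
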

\begin{proof}
    Start from
    \[
        \hat\varphi^{(1)}_{n,k,i}-\varphi_i
        =
        \underbrace{\{\tilde\varphi_{n,k,i}-\varphi_i\}}_{A_{n,k,i}}
        -
        U_{n,k}.
    \]
    The anchored term satisfies
    \begin{align*}
        A_{n,k,i}
        &=
        \frac1{\Perm{n-1}{k-1}}
        \sum_{\substack{(i_2,\ldots,i_k)_{\neq}\\ i_2,\ldots,i_k\in[n]\setminus\{i\}}}
        \varphi(W_i,W_{i_2},\ldots,W_{i_k})-\varphi_i\\
        &\overset{(A)}{=}
        \frac1{\Perm{n-1}{k-1}}
        \sum_{\substack{(i_2,\ldots,i_k)_{\neq}\\ i_2,\ldots,i_k\in[n]\setminus\{i\}}}
        \left\{
        H_{(i,i_2,\ldots,i_k)}(W_i,W_{i_2},\ldots,W_{i_k})
        +
        \sum_{a=2}^k\varphi_{(i,i_2,\ldots,i_{a-1},i_{a+1},\ldots,i_k)}^{(1)}(W_{i_a})
        \right\}\\
        &\overset{(B)}{=}
        H_{n,k,i}
        +
        \frac1{\Perm{n-1}{k-1}}
        \sum_{a=2}^k
        \sum_{\substack{(i_2,\ldots,i_k)_{\neq}\\ i_2,\ldots,i_k\in[n]\setminus\{i\}}}
        \varphi_{(i,i_2,\ldots,i_{a-1},i_{a+1},\ldots,i_k)}^{(1)}(W_{i_a})\\
        &\overset{(C)}{=}
        H_{n,k,i}
        +
        \frac{\Perm{n-2}{k-2}}{\Perm{n-1}{k-1}}
        (k-1)
        \sum_{j\neq i}
        \varphi_j\\
        &=
        H_{n,k,i}
        +
        \frac{k-1}{n-1}
        \sum_{j\neq i}
        \varphi_j .
    \end{align*}

    Equality (A) follows from the definition of the tuple-specific residual \(H_{\mathbf i}\). After averaging over co-index tuples, the first-coordinate projections contribute \(\varphi_i\) and cancel with the outside \(-\varphi_i\). Equality (B) uses the definition of \(H_{n,k,i}\). Equality (C) follows because each point \(j\neq i\) occupies each co-index position in exactly \(\Perm{n-2}{k-2}\) ordered tuples. Hence the corresponding tuple-specific projections average to \(\varphi_j\), and \(\Perm{n-2}{k-2}/\Perm{n-1}{k-1}=1/(n-1)\).

    The statistic itself decomposes as
    \[
        U_{n,k}=k\bar\varphi_n^{(1)}+\bar H_n.
    \]
    The latter identity follows by the same position-counting argument applied to all \(k\) positions of an unrestricted distinct tuple: each of the \(n\) points occupies each position in \(\Perm{n-1}{k-1}\) of the \(\Perm nk\) ordered tuples, and \(\Perm{n-1}{k-1}/\Perm nk=1/n\).
    Consequently,
    \begin{align*}
        \hat\varphi^{(1)}_{n,k,i}-\varphi_i
        &=
        \frac{k-1}{n-1}\sum_{j\neq i}\varphi_j
        -k\bar\varphi_n^{(1)}
        +H_{n,k,i}-\bar H_n\\
        &=
        -\bar\varphi_n^{(1)}
        -
        \frac{k-1}{n-1}\big(\varphi_i-\bar\varphi_n^{(1)}\big)
        +
        H_{n,k,i}-\bar H_n.
    \end{align*}
    Multiplying by \(-1\) gives the claim. \qedhere
\end{proof}

\begin{lemma}[Anchored HAC remainder bound]\label{lem:hac_anchor_bound}
    Under the assumptions of \Cref{thm:CLT_double_limit}, with \eqref{eq:kernel_ui_k} imposed for some moment exponent \(r>2\),
    \begin{equation}\label{eq:hac_H_bound}
        \|H_{n,k,i}\|_2=\bigO\big(n^{-\eta/(2\eta+1)}\big)\quad\text{uniformly in }i,
    \end{equation}
    and \(\|\bar H_n\|_2=o(n^{-1/2})\).
\end{lemma}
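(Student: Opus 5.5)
The first bound in \eqref{eq:hac_H_bound} comes from a bias--variance trade-off in the truncation width. I will fix an auxiliary width \(m\) that is unrelated to the \(m_n\) of \Cref{thm:CLT_double_limit}. I will split \(H_{n,k,i}=H_{n,k,i}^{[m]}+(H_{n,k,i}-H_{n,k,i}^{[m]})\), where \(H_{n,k,i}^{[m]}\) is the same anchored average built from the exactly \(m\)-dependent approximants \(W_j^{(m)}\) of \Cref{lem:approx_mdep}, with the centred truncated kernel \(\varphi(W^{(m)}_{\mathbf i})-\mu_{\mathbf i,m}\) and its first projections. I expect to show the truncation error is \(\bigO(\rho_m)=\bigO(m^{-\eta})\) and the exact-\(m\) anchored remainder is \(\bigO(\sqrt{(m+1)/n})\). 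Choosing \(m\asymp n^{1/(2\eta+1)}\) balances \(m^{-\eta}\) against \((m/n)^{1/2}\) and gives \(n^{-\eta/(2\eta+1)}\), uniformly in \(i\).

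\emph{Truncation step.} For each tuple, \(H_{\mathbf i}\) is the kernel minus \(k\) first projections. The kernel difference is \(\bigO(\rho_m)\) by \Cref{lem:effective_perturbation}(a). Each projection difference is \(\bigO(\rho_m)\) by \Cref{lem:effective_perturbation}(b),(c), using coupled generic co-indices exactly as in the treatment of \(\mathcal P_{n,k}^{[m]}\) in the proof of \Cref{thm:CLT_double_limit}. The centring \(\mu_{\mathbf i,m}\) is \(\bigO(\rho_m)\) by the bound used for \(\mathcal H_{n,k}^{[m]}\). Minkowski's inequality over the \(\Perm{n-1}{k-1}\) co-index tuples then gives \(\|H_{n,k,i}-H_{n,k,i}^{[m]}\|_2\le Ck\rho_m\) uniformly in \(i\).

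\emph{Exact-\(m\) anchored remainder.} With \(i\) fixed, I split the co-index tuples into those for which \((i,\mathbf j)\) is lag-generic and the rest. The non-generic ones number \(R_i=\bigO(mn^{k-2})\) by \Cref{lem:abstract_projection_count}. Their summands have uniformly bounded second moments by \eqref{eq:kernel_ui_k}, uniform in \(m\). So after dividing by \(\Perm{n-1}{k-1}\) they contribute \(\bigO(m/n)\) in \(L_2\), by Minkowski.

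On the lag-generic part, the coordinates are independent, so \(H^{[m]}_{(i,\mathbf j)}=\sum_{|S|\ge2}\varphi^{(|S|)}_{(i,\mathbf j),S}\). I will rerun the covariance-vanishing argument of \Cref{lem:abstract_linked_degenerate_variance}, now with the anchor coordinate shared by both tuples. For a fixed active set \(S\) with \(|S|\ge2\), the set \(S\) contains at least one non-anchor active position. By degeneracy \eqref{eq:abstract_hoeffding_degeneracy}, a covariance \(\E[\varphi^{(c)}_{(i,\mathbf j),S}\varphi^{(c)}_{(i,\mathbf j'),S}]\) vanishes unless every non-anchor active coordinate of \(\mathbf j\) is linked to some active coordinate of \(\mathbf j'\), and vice versa. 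Each forced link costs a factor \(\bigO(m)\) in place of \(n\), so the surviving pairs number \(\bigO(n^{2(k-1)}\,m/n)\). Dividing by \(\Perm{n-1}{k-1}^2\) gives a second moment of \(\bigO(m/n)\), with the Hoeffding pieces uniformly bounded in \(L_2\) by Step 1 of that lemma. Summing over the finitely many \(S\) gives \(\|H^{[m]}_{n,k,i}\|_2=\bigO(\sqrt{(m+1)/n})\).

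\emph{Second claim.} Averaging over all positions gives the exact identity \(\bar H_n=U_{n,k}-k\bar\varphi_n^{(1)}\), as in \Cref{lem:phi1_hat_identity}. I will then reuse the decomposition in part (a) of the proof of \Cref{thm:CLT_double_limit}, now with the theorem's \(m_n\), and track \(L_2\) norms rather than orders in probability. The terms \(\mathcal T\), \(\mathcal H\) and \(\mathcal P\) are each \(\bigO(\rho_{m_n})\) in \(L_2\) (those bounds were already \(L_2\) bounds). \(\mathcal R^{[m_n]}\) is \(\bigO(m_n/n)\) in \(L_2\) by \Cref{lem:abstract_linked_remainder_controls} under the temporal dictionary. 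Since \(\sqrt n\rho_{m_n}\to0\) and \(m_n=o(\sqrt n)\), this gives \(\|\bar H_n\|_2=o(n^{-1/2})\).

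I expect the main obstacle to be the anchored covariance count. Sharing the anchor \(i\) removes one free index, so I must check that the \(S\ni 1\) terms still lose a full factor \(m/n\) and not merely a constant. I must also check that the truncated Hoeffding pieces are built from \(W^{(m)}\)-laws while \(H\) uses independent-copy projections of the original law, so the mismatch has to be pushed into the \(\rho_m\) term consistently.
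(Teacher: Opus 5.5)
Your proposal is correct and follows essentially the paper's route. The shared steps are: truncation at an auxiliary width balanced at $m\asymp n^{1/(2\eta+1)}$; a lag-collision/lag-generic split, with the $\bigO(m/n)$ collision count and an anchored degeneracy count giving second moment $\bigO(m/n)$; and the identity $\bar H_n=U_{n,k}-k\bar\varphi_n^{(1)}$, combined with the $L_2$ bounds from part (a) of the proof of \Cref{thm:CLT_double_limit}.

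The only difference is bookkeeping. You build $H^{[m]}$ from the centred truncated kernel and its intrinsic truncated projections, so the $\bigO(\rho_m)$ projection mismatch is absorbed entirely in the truncation step via \Cref{lem:effective_perturbation}(b),(c). The paper instead evaluates the original projections at truncated arguments and removes an $\bigO(\rho_l)$ conditional-mean bias inside the lag-generic part. Your placement makes the lag-generic piece exactly degenerate and is, if anything, slightly cleaner.
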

\begin{proof}
    Fix an auxiliary truncation level \(l=l_n\to\infty\). This \(l\) is only a proof device; it does not appear in \(\hat\sigma_n^2\). Let \(H_{n,k,i}^{(l)}\) denote the same anchored average with \(W_i,W_{i_2},\ldots,W_{i_k}\) replaced by \(W_i^{(l)},W_{i_2}^{(l)},\ldots,W_{i_k}^{(l)}\). First peel off the truncation error:
    \[
        H_{n,k,i}
        =
        \underbrace{(H_{n,k,i}-H_{n,k,i}^{(l)})}_{\widetilde{\mathcal T}_{n,k,i}^{(l)}}
        +
        H_{n,k,i}^{(l)}.
    \]
    Next split the truncated anchored average by the same lag-collision/lag-generic distinction used in \eqref{eq:mdep_orderk_master_decomp}:
    \[
        H_{n,k,i}
        =
        \widetilde{\mathcal T}_{n,k,i}^{(l)}
        +
        \underbrace{\widetilde{\mathcal H}_{n,k,i}^{(l)}}_{\text{lag-collision}}
        +
        \underbrace{\widetilde{\mathcal R}_{n,k,i}^{(l)}}_{\text{lag-generic}},
    \]
    where \(H_{n,k,i}^{(l)}=\widetilde{\mathcal H}_{n,k,i}^{(l)}+\widetilde{\mathcal R}_{n,k,i}^{(l)}\).
    
    \emph{Truncation error.} Since \(H_{\mathbf i}=\varphi-\sum_{a=1}^k\varphi_{\mathbf i_{-a}}^{(1)}\), the coordinatewise and projection perturbation bounds in \Cref{lem:effective_perturbation}(a)--(b) give
    \[
        \big\|H_{\mathbf i}(W_{i_1},\ldots,W_{i_k})-H_{\mathbf i}^{(l)}(W_{i_1}^{(l)},\ldots,W_{i_k}^{(l)})\big\|_2=\bigO(\rho_l)
    \]
    uniformly over tuples. Thus, by Minkowski's inequality,
    {\small
    \[
        \begin{aligned}
        \big\|\widetilde{\mathcal T}_{n,k,i}^{(l)}\big\|_2
        &=
        \left\|
        \frac1{\Perm{n-1}{k-1}}
        \sum_{\substack{(i_2,\ldots,i_k)_{\neq}\\ i_2,\ldots,i_k\in[n]\setminus\{i\}}}
        \begin{aligned}[t]
        &\Big\{H_{(i,i_2,\ldots,i_k)}(W_i,W_{i_2},\ldots,W_{i_k})\\
        &\qquad
        -H_{(i,i_2,\ldots,i_k)}^{(l)}(W_i^{(l)},W_{i_2}^{(l)},\ldots,W_{i_k}^{(l)})\Big\}
        \end{aligned}
        \right\|_2\\
        &\leq
        \frac1{\Perm{n-1}{k-1}}
        \sum_{\substack{(i_2,\ldots,i_k)_{\neq}\\ i_2,\ldots,i_k\in[n]\setminus\{i\}}}
        \Big\|
        H_{(i,i_2,\ldots,i_k)}(W_i,W_{i_2},\ldots,W_{i_k})
        -
        H_{(i,i_2,\ldots,i_k)}^{(l)}(W_i^{(l)},W_{i_2}^{(l)},\ldots,W_{i_k}^{(l)})
        \Big\|_2\\
        &=
        \bigO(\rho_l),
        \end{aligned}
    \]
    }
    uniformly in \(i\).

    \emph{Lag-collision tuples.} In \(\widetilde{\mathcal H}_{n,k,i}^{(l)}\), some pair among \(\{i,i_2,\ldots,i_k\}\) lies within lag \(l\). The anchored version of the linked-collision count in \Cref{lem:abstract_linked_remainder_controls}, applied under the temporal dictionary, gives \(\bigO(ln^{k-2})\) such co-index tuples. Since \eqref{eq:kernel_ui_k}, applied to the truncated kernels, and \eqref{eq:abstract_hoeffding_reconstruction} give a uniform second-moment bound for \(H_{\mathbf i}\) on truncated arguments, Minkowski's inequality gives \(\|\widetilde{\mathcal H}_{n,k,i}^{(l)}\|_2=\bigO(l/n)\).

    \emph{Lag-generic tuples.} In \(\widetilde{\mathcal R}_{n,k,i}^{(l)}\), the \(l\)-truncated variables in each tuple are mutually independent by \Cref{lem:approx_mdep}. Let \(\varphi_{\mathbf i_{-1}}^{(1,l)}(W_i^{(l)})\) denote the tuple-specific first projection intrinsic to the truncated sequence. Mutual independence gives
    \[
        \E[H_{(i,i_2,\ldots,i_k)}^{(l)}(W_i^{(l)},\ldots,W_{i_k}^{(l)})\mid W_i^{(l)}]
        =
        \varphi_{\mathbf i_{-1}}^{(1,l)}(W_i^{(l)})
        -
        \varphi_{\mathbf i_{-1}}^{(1)}(W_i^{(l)})
        -
        \sum_{a=2}^k\E[\varphi_{\mathbf i_{-a}}^{(1)}(W_{i_a}^{(l)})].
    \]
    The co-index replacement bound in \Cref{lem:effective_perturbation}(c) gives
    \[
        \|\varphi_{\mathbf i_{-1}}^{(1,l)}(W_i^{(l)})
        -\varphi_{\mathbf i_{-1}}^{(1)}(W_i^{(l)})\|_2
        =
        \bigO(\rho_l).
    \]
    Also, since the original tuple-specific projections are centred, \Cref{lem:effective_perturbation}(b) gives \(|\E[\varphi_{\mathbf i_{-a}}^{(1)}(W_{i_a}^{(l)})]|=\bigO(\rho_l)\), uniformly over \(a=2,\ldots,k\). Hence
    \[
        \widetilde{\mathcal R}_{n,k,i}^{(l)}
        =
        \underbrace{\Big\{\widetilde{\mathcal R}_{n,k,i}^{(l)}
        -
        \E[\widetilde{\mathcal R}_{n,k,i}^{(l)}\mid W_i^{(l)}]\Big\}}_{\widetilde{\mathcal R}_{n,k,i}^{(l),0}}
        +
        \bigO_{L_2}(\rho_l).
    \]
    The centred term \(\widetilde{\mathcal R}_{n,k,i}^{(l),0}\) is the anchored counterpart of the exact-\(m\) higher-order remainder \(\mathcal R_{n,m,k}\) in \eqref{eq:mdep_orderk_master_decomp}. After the anchor \(i\) is fixed, the same degeneracy-and-linking argument, namely \Cref{lem:abstract_linked_degenerate_variance} aggregated through \Cref{lem:abstract_linked_remainder_controls} under the temporal dictionary, applies to the remaining co-index tuples. The anchored count has \(\bigO(l n^{2k-3})\) potentially non-zero covariance pairs before normalisation; division by \(\Perm{n-1}{k-1}^2\asymp n^{2k-2}\) gives
    \[
        \E\big[(\widetilde{\mathcal R}_{n,k,i}^{(l),0})^2\big]
        =
        \bigO(l/n).
    \]
    Together with the \(\bigO(\rho_l)\) conditional-mean bias just obtained, this gives
    \[
        \|\widetilde{\mathcal R}_{n,k,i}^{(l)}\|_2
        =
        \bigO(\rho_l)+\bigO(\sqrt{l/n}).
    \]

    Collecting the truncation, collision, and generic pieces,
    \[
        \|H_{n,k,i}\|_2
        =
        \bigO(\rho_l)+\bigO(l/n)+\bigO(\sqrt{l/n})
    \]
    for every \(i\) and every \(l=l_n\). The rate condition in \Cref{thm:CLT_double_limit} gives \(\rho_l=\bigO(l^{-\eta})\) for some \(\eta>1\). Choosing \(l_n\sim n^{1/(2\eta+1)}\) balances \(\rho_l\) and \(\sqrt{l/n}\), and gives \eqref{eq:hac_H_bound}.

    At the statistic level, \(\bar H_n=U_{n,k}-k\bar\varphi_n^{(1)}\). The bounds in the proof of \Cref{thm:CLT_double_limit}(a), namely the truncation, exact-\(m\) remainder, centring, and projection-transfer bounds above, give \(\|\bar H_n\|_2=o(n^{-1/2})\). \qedhere
\end{proof}

\begin{lemma}[HAC-linear anchored remainder]\label{lem:hac_linear_H}
    Under the assumptions of \Cref{thm:CLT_double_limit} and \Cref{ass:hac},
    \[
        \sum_{\ell=-(n-1)}^{n-1}\kappa\Big(\frac{|\ell|}{h_n}\Big)\frac1n\sum_{i=1}^{n-|\ell|}\varphi_iH_{n,k,i+|\ell|}=o_p(1),
    \]
    and the same conclusion holds with \(\varphi_{i+|\ell|}H_{n,k,i}\) in place of \(\varphi_iH_{n,k,i+|\ell|}\).
\end{lemma}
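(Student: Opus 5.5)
The plan is to exploit the fact that \(H_{n,k,i}\) is a degenerate average, instead of bounding each lag separately. The crude route, \eqref{eq:hac_kernel_product_bound} together with \eqref{eq:hac_H_bound}, only gives \(\bigO(h_nn^{-\eta/(2\eta+1)})\). This need not vanish under \(h_n=o(\sqrt n)\), because \(\eta/(2\eta+1)<1/2\).

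\textbf{Step 1: re-index by the anchored remainder.} Exchanging the order of summation, and using that \(\kappa\) is even, gives
\[
    \sum_{\ell=-(n-1)}^{n-1}\kappa\Big(\frac{|\ell|}{h_n}\Big)\frac1n\sum_{i=1}^{n-|\ell|}\varphi_iH_{n,k,i+|\ell|}
    =\frac1n\sum_{j=1}^n\Phi_jH_{n,k,j},
\]
where \(\Phi_j:=\sum_{i}w_{ij}\varphi_i\). The weight \(w_{ij}\) equals \(\kappa(|i-j|/h_n)\) when \(i\le j\), with the \(\ell=0\) and \(\pm\ell\) terms counted as the display prescribes; \(w_{ij}\) is supported effectively on \(|i-j|\lesssim h_n\) up to the integrable tail of \(\kappa\). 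The second claimed sum, with \(\varphi_{i+|\ell|}H_{n,k,i}\), has the same form with the mirrored weights.

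The covariance summability from the proof of \Cref{thm:CLT_double_limit}(b), together with \(\sum_\ell|\kappa(\ell/h_n)|=\bigO(h_n)\), gives \(\sup_j\|\Phi_j\|_2=\bigO(\sqrt{h_n})\). Inserting the definition of \(H_{n,k,j}\), the target becomes
\[
    \frac{1}{n\,\Perm{n-1}{k-1}}\sum_{j}\sum_{\mathbf j}\Phi_j\,H_{(j,\mathbf j)} .
\]

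\textbf{Step 2: truncation.} Fix an auxiliary width \(l=l_n\to\infty\). Replace every \(W\) by \(W^{(l)}\), both in \(\Phi_j\) (the projections \(\varphi_i\)) and in \(H_{(j,\mathbf j)}\). \Cref{lem:effective_perturbation}(a)--(b), with Minkowski's inequality over the kernel weights, bound the \(L_2\) cost of the \(\Phi\)-replacement by \(\bigO(h_n\rho_l)\). Multiplying by \(\|H\|_2=\bigO(1)\) and dividing by \(n\) is not quite enough, so I will instead normalise per \(j\): each summand costs \(\bigO(h_n\rho_l)+\bigO(\sqrt{h_n}\rho_l)\), and the average over \(j\) keeps this order. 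After truncation the variables are \(l\)-dependent by \Cref{lem:approx_mdep}. As in \Cref{lem:hac_anchor_bound}, the truncated \(H^{(l)}\) on lag-generic tuples is degenerate up to an \(\bigO_{L_2}(\rho_l)\) conditional-mean bias, which contributes another \(\bigO(\sqrt{h_n}\rho_l)\).

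\textbf{Step 3: mean and variance by linking counts.} \(\Phi^{(l)}_j\) depends only on innovations within distance \(\bigO(h_n+l)\) of \(j\). Consider the tuple \((j,\mathbf j)\), and suppose some coordinate \(j_a\) is more than \(l\) away from that window and from every other coordinate. Then degeneracy \eqref{eq:abstract_hoeffding_degeneracy} kills \(\E[\Phi_j^{(l)}H^{(l)}_{(j,\mathbf j)}]\), exactly as in Step 2 of \Cref{lem:abstract_linked_degenerate_variance}. Counting the surviving tuples gives the bound
\[
    |\E[\cdot]|=\bigO\bigl(\sqrt{h_n}\,((h_n+l)/n)\bigr).
\]
Tuples with a lag collision are also covered by this count, since there are \(\bigO(ln^{k-2})\) of them.

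For the centred part, expand \(\E[(n^{-1}\sum_j\{\Phi_jH_{n,k,j}-\E\})^2]\) over pairs \((j,\mathbf j),(j',\mathbf j')\). A covariance survives only if every coordinate is linked, within \(h_n+l\), to another coordinate of the pair. Bounding each surviving term by fourth-moment-free Cauchy--Schwarz products is done through \(\|\Phi\|_2\|H\|_2\) after conditioning on the window, using the \(r>2\) moments where H\"older is needed. The count should give a variance of order \(h_n(h_n+l)/n\), up to polynomial factors in \((h_n+l)/n\).

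\textbf{Step 4: choose \(l\) and conclude.} With \(l_n=h_n\) the total bound is \(\bigO(h_n\rho_{h_n})+\bigO(h_n/\sqrt n)\). The first term vanishes because \(\rho_l=\bigO(l^{-\eta})\) with \(\eta>1\), and the second because \(h_n=o(\sqrt n)\). Markov's inequality then gives \(o_p(1)\).

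The main obstacle is the variance count in Step 3. The surviving-configuration bookkeeping must account for the smoothing window of \(\Phi_j\), which links \(j\) to up to \(h_n\) dates, without losing a full factor of \(h_n\) relative to the degenerate gain. If the direct count is too lossy, I would first try conditioning on the innovation block carrying \(\Phi_j\) and applying \Cref{lem:abstract_linked_degenerate_variance} to the remaining co-indices, treating that block as a single linked super-coordinate.
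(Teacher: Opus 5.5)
Steps 2--4 do not yet form a proof, and Step 3 would fail as you sketch it. The variance count is only asserted (``should give''), and the moment bookkeeping cannot work. A covariance between $\Phi_jH_{(j,\mathbf j)}$ and $\Phi_{j'}H_{(j',\mathbf j')}$ is a four-fold product, so bounding it even by Cauchy--Schwarz needs $\|\Phi_jH_{(j,\mathbf j)}\|_2$, i.e.\ fourth moments. With $r\in(2,4)$ in \eqref{eq:kernel_ui_k}, these products are only in $L_{r/2}$. Conditioning on the window does not remove this; you would need an extra kernel-truncation layer. Two smaller points. First, the mean argument must be run component by component in $H=\sum_{|S|\ge2}\varphi^{(|S|)}_{\mathbf i,S}$, because an isolated co-index only kills the components whose active set contains it. Second, for weights without compact support (e.g.\ quadratic spectral), $\Phi_j^{(l)}$ is not a function of an $\bigO(h_n+l)$ window without a further tail truncation.

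None of this is needed, because your Step 1 already finishes the proof. You dismissed the crude route using the per-lag bound $\bigO(h_nn^{-\eta/(2\eta+1)})$, but after re-indexing the smoothing costs only $\sqrt{h_n}$. By Cauchy--Schwarz, your bound $\sup_j\|\Phi_j\|_2=\bigO(\sqrt{h_n})$, and \eqref{eq:hac_H_bound},
\[
    \E\Big|\frac1n\sum_{j=1}^n\Phi_jH_{n,k,j}\Big|
    \leq
    \frac1n\sum_{j=1}^n\|\Phi_j\|_2\|H_{n,k,j}\|_2
    =
    \bigO\big(h_n^{1/2}n^{-\eta/(2\eta+1)}\big).
\]
Since $h_n=o(\sqrt n)$ and $\eta/(2\eta+1)>1/3$ for $\eta>1$, this is $o(n^{1/4-1/3})=o(1)$. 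The mirrored term is identical with forward-looking weights, and Markov's inequality finishes. This uses only second moments, the covariance summability from \Cref{thm:CLT_double_limit}(b), and \Cref{lem:hac_anchor_bound}, whose proof does not depend on the present lemma.

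The paper's proof is essentially your Steps 2--4 without the re-indexing. It truncates at $q_n\sim(h_nn)^{1/(2\eta+1)}$ and bounds the anchored collision tuples by $\bigOp(h_nq/n)$. It then states a joint degeneracy-and-linking variance $\bigO(h_nq/n)$ for the lag-generic part over $(r,i,i_2,\ldots,i_k)$ without carrying out the count. So it faces the same obstacle you flagged, and the same product-moment concern applies. Your re-indexing, stopped at Step 1, gives a shorter argument that avoids that count entirely.
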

\begin{proof}
    Consider the displayed term; the other one is identical by symmetry. After expanding the anchored average, write it as
    {\small
    \[
        S_n:=
        \frac1n
        \frac1{\Perm{n-1}{k-1}}
        \sum_{r=-(n-1)}^{n-1}
        \sum_{i=1}^{n-|r|}
        \kappa\Big(\frac{|r|}{h_n}\Big)
        \sum_{\substack{(i_2,\ldots,i_k)_{\neq}\\ i_2,\ldots,i_k\in[n]\setminus\{i+|r|\}}}
        \varphi_i\,
        H_{(i+|r|,i_2,\ldots,i_k)}(W_{i+|r|},W_{i_2},\ldots,W_{i_k}).
    \]
    }
    Thus \(S_n\) is a weighted anchored full-sample average over the lag-paired anchor block \(A_{i,r}:=\{i,i+|r|\}\) and the \(k-1\) co-indices \((i_2,\ldots,i_k)\).

    Fix an auxiliary truncation level \(q=q_n\). Let \(\varphi_i^{(q)}\) be the corresponding first projection evaluated at \(W_i^{(q)}\), and let \(H_{\mathbf j}^{(q)}\) be the residual \(H_{\mathbf j}\) with all arguments replaced by their \(q\)-window approximants and with the corresponding truncated projections. Define
    {\small
    \[
        S_n^{(q)}:=
        \frac1n
        \frac1{\Perm{n-1}{k-1}}
        \sum_{r=-(n-1)}^{n-1}
        \sum_{i=1}^{n-|r|}
        \kappa\Big(\frac{|r|}{h_n}\Big)
        \sum_{\substack{(i_2,\ldots,i_k)_{\neq}\\ i_2,\ldots,i_k\in[n]\setminus\{i+|r|\}}}
        \varphi_i^{(q)}
        H_{(i+|r|,i_2,\ldots,i_k)}^{(q)}(W_{i+|r|}^{(q)},W_{i_2}^{(q)},\ldots,W_{i_k}^{(q)}).
    \]
    }
    By \Cref{lem:effective_perturbation}(a)--(b), conditional Jensen's inequality, and \(\sum_\ell|\kappa(|\ell|/h_n)|=\bigO(h_n)\),
    \[
        S_n-S_n^{(q)}=\bigOp(h_n\rho_q).
    \]

    Split \(S_n^{(q)}\) according to lag relations among the co-indices relative to the fixed anchor block \(A_{i,r}\):
    \[
        S_n^{(q)}
        =
        \mathcal H_n^{(q)}
        +
        \mathcal R_n^{(q)},
    \]
    where \(\mathcal H_n^{(q)}\) collects terms for which some co-index is within lag \(q\) of an element of \(A_{i,r}\), or some pair of co-indices is within lag \(q\), and \(\mathcal R_n^{(q)}\) collects the centred terms for which the co-indices are lag-generic relative to the anchor block and to one another. The possible dependence between the two elements of \(A_{i,r}\), including the case \(|r|\leq q\), is therefore kept inside the fixed anchor block rather than counted as a collision. This is the anchored HAC analogue of the lag-collision/higher-order remainder split in \eqref{eq:mdep_orderk_master_decomp}. The collision tuples contribute
    \[
        \mathcal H_n^{(q)}=\bigOp(h_nq/n),
    \]
    by the pair-window count in \Cref{lem:abstract_linked_remainder_controls}, applied under the temporal dictionary after conditioning on the fixed anchor block \(A_{i,r}\). For each \(r\), the anchor block rules out \(\bigO(qn^{k-2})\) co-index tuples, and the HAC weights contribute the factor \(\sum_r|\kappa(|r|/h_n)|=\bigO(h_n)\).

    It remains to control \(\mathcal R_n^{(q)}\). On this lag-generic part, the \(q\)-window co-index variables are mutually independent conditional on the anchor block, and the residual is degenerate after centring, exactly as in the exact-\(m\) decomposition. For fixed \(r\), the anchor block only adds a finite-range conditioning object to the exact-\(m\) counting problem. The degeneracy-and-linking argument in \Cref{lem:abstract_linked_degenerate_variance}, aggregated through \Cref{lem:abstract_linked_remainder_controls} under the temporal dictionary, therefore gives variance \(\bigO(q/n)\) for the lag-\(r\) average. Jointly over lags, the same count is applied to the combined index object \((r,i,i_2,\ldots,i_k)\): non-zero covariance requires either effectively linked anchor blocks or linked co-index configurations, and the HAC weights restrict the lag summation to \(\bigO(h_n)\) effective values. Hence the joint variance is \(\bigO(h_nq/n)\), and
    \[
        \mathcal R_n^{(q)}=\bigOp(\sqrt{h_nq/n}).
    \]
    Thus
    \[
        S_n
        =
        \bigOp(h_n\rho_q)+\bigOp(h_nq/n)+\bigOp(\sqrt{h_nq/n}).
    \]
    Choose \(q_n\sim(h_nn)^{1/(2\eta+1)}\), which balances \(h_n\rho_q=\bigO(h_nq^{-\eta})\) and \(\sqrt{h_nq/n}\). The balanced rates are \(h_n^{(\eta+1)/(2\eta+1)}n^{-\eta/(2\eta+1)}=o(1)\), because \(h_n=o(\sqrt n)\) and \(\eta>1\). Also \(h_nq_n/n=o(1)\), proving the claim. \qedhere
\end{proof}

\printbibliography
\end{refsection}
\end{document}